\documentclass[pra,superscriptaddress]{revtex4}
\usepackage{amsfonts,amssymb,amsmath}
\usepackage[]{graphics,graphicx,epsfig}
\usepackage{amsthm}
\usepackage{color}
\usepackage{enumerate}
\usepackage{algorithm}
\usepackage{algpseudocode}
\usepackage{pifont}
\usepackage{pst-all} 
\usepackage[color]{changebar}
\usepackage{booktabs}
\usepackage{multirow}
\usepackage{siunitx}

\usepackage{amsmath,amsthm,amsfonts,amssymb,amscd}
\usepackage[utf8]{inputenc}
\usepackage{indentfirst}
\usepackage{graphicx}
\usepackage[T1]{fontenc}
\usepackage{mathpazo}
\usepackage{changes}

\usepackage[USenglish]{babel}
\usepackage[colorlinks=true]{hyperref}
\usepackage{hyperref}
\usepackage{bm}
\usepackage{lscape}
\urlstyle{same}
%\pagestyle{empty}

 %%%%%%%%%%%%%%%%%%%%%%%%%%%%%%%%%%%%%%%%%%%%%%%%%%%%%%%%%%%%%%%%%%%
% Commands
%%%%%%%%%%%%%%%%%%%%%%%%%%%%%%%%%%%%%%%%%%%%%%%%%%%%%%%%%%%%%%%%%%%f

% \newcommand{\ket}[1]{\left| #1 \right\rangle}
% \newcommand{\bra}[1]{\left\langle #1 \right|}

%\newcommand{\mean}[1]{\left\langle#1\right\rangle}

\def\id{\leavevmode\hbox{\small1\kern-3.8pt\normalsize1}}

%%%%%%%%%%%%%%%%%%%%%%%%%%%%%%%%%%%%%%%%%%%%%%%%%%%%%%%%%%%%%%%%%%%

\def\identity{\leavevmode\hbox{\small1\kern-3.8pt\normalsize1}}

\renewcommand{\epsilon}{\varepsilon}

\newtheorem{claim}{Claim}

\newtheorem{definition}{Definition} %[section]

\newtheorem{prop}[definition]{Proposition}
\newtheorem{lemma}[definition]{Lemma}
\newtheorem{remark}[definition]{Remark}

\newtheorem{Eg}[definition]{Example}

\def\ba#1\ea{\begin{align}#1\end{align}}
\def\ban#1\ean{\begin{align*}#1\end{align*}}

%Michal

\newcommand{\be}{\begin{equation}}
\newcommand{\ee}{\end{equation}}

\def\squareforqed{\hbox{\rlap{$\sqcap$}$\sqcup$}}
\def\qed{\ifmmode\squareforqed\else{\unskip\nobreak\hfil
\penalty50\hskip1em\null\nobreak\hfil\squareforqed
\parfillskip=0pt\finalhyphendemerits=0\endgraf}\fi}
\def\endenv{\ifmmode\;\else{\unskip\nobreak\hfil
\penalty50\hskip1em\null\nobreak\hfil\;
\parfillskip=0pt\finalhyphendemerits=0\endgraf}\fi}

\newcommand{\<}{\langle}
\renewcommand{\>}{\rangle}

\def\id{{\operatorname{id}}}

\def\be{\begin{equation}}
\def\ee{\end{equation}}
\def\ben{\begin{eqnarray}}
\def\een{\end{eqnarray}}

\def\bei{\begin{itemize}}
\def\eei{\end{itemize}}

% Align := properly in math mode
\mathchardef\ordinarycolon\mathcode`\:
\mathcode`\:=\string"8000
\def\vcentcolon{\mathrel{\mathop\ordinarycolon}}
\begingroup \catcode`\:=\active
  \lowercase{\endgroup
  \let :\vcentcolon
  }

\newcommand{\nc}{\newcommand}
 \nc{\proj}[1]{|#1\rangle\!\langle #1 |} 
\nc{\avg}[1]{\langle#1\rangle}

%\rnc{\max}{\operatorname{max}} \nc{\rank}{\operatorname{rank}}
\nc{\conv}{\operatorname{conv}}
%\nc{\smfrac}[2]{\mbox{$\frac{#1}{#2}$}} \nc{\Tr}{\operatorname{Tr}}
\nc{\ox}{\otimes} \nc{\dg}{\dagger} \nc{\dn}{\downarrow}
\nc{\lmax}{\lambda_{\text{max}}}
\nc{\lmin}{\lambda_{\text{min}}}

\nc{\csupp}{{\operatorname{csupp}}}
\nc{\qsupp}{{\operatorname{qsupp}}} \nc{\var}{\operatorname{var}}
\nc{\rar}{\rightarrow} \nc{\lrar}{\longrightarrow}
\nc{\poly}{\operatorname{poly}}
\nc{\polylog}{\operatorname{polylog}} \nc{\Lip}{\operatorname{Lip}}
%\nc{\1}{\openone} \nc{\supp}{{\operatorname{supp}}}
%\nc{\ep}{\epsilon}
\nc{\Om}{\Omega}
\nc{\wt}[1]{\widetilde{#1}}

\def\>{\rangle}
\def\<{\langle}

\nc{\glneq}{{\raisebox{0.6ex}{$>$}  \hspace*{-1.8ex} \raisebox{-0.6ex}{$<$}}}
\nc{\gleq}{{\raisebox{0.6ex}{$\geq$}\hspace*{-1.8ex} \raisebox{-0.6ex}{$\leq$}}}
%\nc{\id}{{\operatorname{id}}}
\nc{\vholder}[1]{\rule{0pt}{#1}}
\nc{\wh}[1]{\widehat{#1}}
\nc{\h}[1]{\widehat{#1}}
\nc{\ob}[1]{#1}
\def\beq{\begin {equation}}
\def\eeq{\end {equation}}
%Michal defs
\def\be{\begin{equation}}
\def\ee{\end{equation}}

\nc{\eq}[1]{(\ref{eq:#1})} 
\nc{\eqs}[2]{\eq{#1} and \eq{#2}}

\nc{\eqn}[1]{Eq.~(\ref{eqn:#1})}
\nc{\eqns}[2]{Eqs.~(\ref{eqn:#1}) and (\ref{eqn:#2})}

\nc{\region}{\cS\cW}

%%%%%%%%%%%%%%%%%%%%%%%%%%

%\nofiles

\newenvironment{protocol*}[1]
  {
    \begin{center}
      \hrulefill\\
      \textbf{#1}
  }
  {
    \vspace{-1\baselineskip}
    \hrulefill
    \end{center}
  }
%%%%%%%%%%%%%%%%%%%%%%%%%%%%%%%%%%%%%%%%%%%%%%%%%%%%%%%%%%%%%%%%%%%%%%%%%%%%%%%%%%%%%%%%%%%%%%%%%%%%%%

\bibliographystyle{apsrev}
\begin{document}
\title{Relativistic Causality vs. No-Signaling as the limiting paradigm for correlations in physical theories}
%\title{Modification of the No-Signaling constraints from Relativistic Causality}
%\title{The No-Signaling Principle versus Relativistic Causality}
\author{Pawe{\l} Horodecki}
%\email{}
\affiliation{Faculty of Applied Physics and Mathematics, National Quantum Information Center, Gda\'nsk University of Technology, 80-233 Gda\'nsk, Poland}
%\affiliation{Faculty of Applied Physics and Mathematics, Technical University of Gda\'{n}sk, 80-233 Gda\'{n}sk, Poland}
%\affiliation{National Quantum Information Center of Gda\'nsk, 81-824 Sopot, Poland}
\author{Ravishankar \surname{Ramanathan}}
%\email{ravishankar.r.10@gmail.com}
%\affiliation{National Quantum Information Center of Gda\'nsk,  81-824 Sopot, Poland}
%\affiliation{University of Gda\'nsk, 80-952 Gda\'nsk, Poland}
\affiliation{Institute of Theoretical Physics and Astrophysics, National Quantum Information Centre, Faculty of Mathematics, Physics and Informatics, University of Gda\'nsk, 80-308 Gda\'nsk, Poland} 
\affiliation{Laboratoire d'Information Quantique, Universit\'{e} Libre de Bruxelles, Belgium}

\begin{abstract}
The ubiquitous no-signaling constraints state that the probability distribution of the outputs of any subset of parties is independent of the inputs of the complementary set of parties; here we re-examine these constraints to see how they arise from relativistic causality. We show that while the usual no-signaling constraints are sufficient, in general they are not necessary to ensure that a theory does not violate causality. Depending on the exact space-time coordinates of the measurement events of the parties participating in a Bell experiment, relativistic causality only imposes a subset of the usual no-signaling conditions. We first revisit the derivation of the two-party no-signaling constraints from the viewpoint of relativistic causality and show that the they are both necessary and sufficient to ensure that no causal loops appear. We then consider the three-party Bell scenario and identify a measurement configuration in which a subset of the no-signaling constraints is sufficient to preserve relativistic causality. We proceed to characterize the exact space-time region in the tripartite Bell scenario where this phenomenon occurs.
%where the usual no-signaling conditions may break without giving rise to causal loops. 
Secondly, we examine the implications of the new relativistic causality conditions for security of device-independent cryptography against an eavesdropper constrained only by the laws of relativity. We show an explicit attack in certain measurement configurations on a family of randomness amplification protocols based on the $n$-party Mermin inequalities that were previously proven secure under the old no-signaling conditions. We also show that security of two-party key distribution protocols can be compromised when the spacetime configuration of the eavesdropper is not constrained.  Thirdly, we show how the monogamy of non-local correlations that underpin their use in secrecy can also be broken under the relativistically causal constraints in certain measurement configurations. We then inspect the notion of free-choice in the Bell experiment and propose a definition of free-choice in the multi-party Bell scenario. We re-examine the question whether quantum correlations may admit explanations by finite speed superluminal influences propagating between the spacelike separated parties. 
%We show that measurement configurations that were previously used to prove that quantum correlations do not admit such an explanation, in fact do not lead to any causal loops, thereby reopening this question. 
Finally, we study the notion of genuine multiparty non-locality in light of the new considerations and propose a new class of causal bilocal models in the three-party scenario. We propose a new Svetlichny-type inequality that is satisfied by the causal bilocal model and show its violation within quantum theory.    
%The faster than light no-signaling condition is commonly believed to be the ultimate limitation for physical correlations and any possible relaxation of this condition is a mathematical tool to understand it. Here we show that in fact a physical theory going beyond the condition is in principle possible if only  faster than light signaling from point to spatially distributed regions affecting the correlations in this region is allowed and that one may consider an effective a posteriori no-faster-than-light-signaling principle. Surprisingly, while the new superluminal point-to-correlations signaling is here assumed to work within the description of a preferred reference frame, its consequences are shown to be fully Lorentz covariant. This allows for the peaceful coexistence in physical reality of Lorentz covariant theories with those that are not Loretnz covariant. 
\end{abstract}

\maketitle

\section{Introduction.}
The recent experimental confirmation of the violation of Bell inequalities \cite{EPR, Bell} in systems of electron spins \cite{Hensen}, entangled photons \cite{Giustina, Shalm}, etc. has made a compelling case for
the ``non-locality" of quantum mechanics. Quantum phenomena
exhibit correlations between space-like separated measurements that appear to be inconsistent with
any local hidden variable explanation. 
%Quantum theory exhibits several counter-intuitive features, intrinsic randomness and non-locality being two such inextricably linked properties. The apparent superluminal effect of quantum non-locality was noticed by Einstein, Podolsky and Rosen in \cite{EPR} and was dubbed ``spooky action at a distance". 
%Several years and thousands of research articles later, we are at the stage where these effects far from being spooky 
%a statement that is usually stated as being a consequence of the Einstein theory of relativity.  
The ``spooky action at a distance" of quantum non-locality is now embraced and utilized in fundamental applications such as device-independent cryptography and randomness generation \cite{VV, Pironio, our} and reductions in communication complexity \cite{CC}. Moreover, this non-locality has also been used to show that even a tiny amount of free-randomness can be amplified \cite{CR, our} and that extensions of quantum theory which incorporate a particular notion of free-choice cannot have better predictive power than quantum theory itself \cite{CR2}. The quantum non-local correlations are known to be fully compatible with the no-signaling principle, i.e., the space-like separated parties cannot use the non-local correlations to communicate superluminally \cite{ER89}.   

Since the proposal of Popescu and Rohrlich \cite{PR}, it has been realized that non-local correlations might take on a more fundamental aspect. Not only quantum theory but any future theory that might contain the quantum theory as an approximation is now expected to incorporate non-locality as an essential intrinsic feature. This program has led to the formulation of device-independent information-theoretic principles \cite{Principles2, Principles3} that attempt to \textit{derive} the set of quantum correlations from amongst all correlations obeying the no-signaling principle. In parallel, cryptographic protocols have been devised based on the input-output statistics in Bell tests such that their proof of security only relies on the no-signaling principle. When one considers such post-quantum cryptography \cite{BHK, BCK}, randomness amplification \cite{CR, Acin, our, our2}, etc. the eavesdropper Eve is assumed to be limited to the preparation of boxes (input-output statistics) obeying a set of constraints collectively referred to as the no-signaling constraints. The general properties of no-signaling theories have been investigated \cite{Mas06} in a related program to formulate an information-theoretic axiomatic framework for quantum theory. On the other hand, quantum theory does not provide a mechanism for the non-local correlations. Several theoretical proposals have been put forward to explain the phenomenon of non-local correlations between quantum particles via superluminal communication between them. These models go beyond quantum mechanics but reproduce the experimental statistical predictions of quantum mechanics, the most famous of these models being the de Broglie-Bohm pilot wave theory \cite{Hol93, BH93}.      

In all relativistic theories, ``causality," is imposed i.e., the requirement that causes must precede effects in all space-time rest frames. 
Before going further, two remarks are in order here. First, by an
effect we mean any possible event, even if it has been affected by
other events (causes) indirectly.
Second, we shall use as general a correlation point of
view as possible, \textit{regardless} of the physical
theory from which the correlations arise. In this context, let us also note that given an arbitrary space-time structure, the question of causal order for any two measures
has been formalized using intuitions from optimal transport theory \cite{Eckstein1}.
%[Eckstein & Miller-mathematical-paper].
Furthermore it has been proven that in any hyperbolic space-time, casuality of the 
evolution of measures supported on time-slices is an observer independent concept \cite{Miller}.
%[Miller-single-author]. 
Both formalisms have also been applied in the study of the evolution of the 
quantum wavepackets showing in particular the natural consistency of the causality concept 
with the relativistic continuity equation \cite{Eckstein-Miller-models}.
%[M-E-wavepackets].
%%%In this context, let us also
%%%remark that - as far as the specific physical theory is concerned -
%relativistic casuality has been  ...
%... Dirac equation \cite{Eckstein-Miller-models}.
%[HERE  wand to smuggle that point-to-point is not what we dal with.]
%%Second we shall use as general a correlation point of view
%%as possible, regardless of the physical theory from which the correlations arise.
%%%As the reader will see subsequently, this is different from
%%%more standard concepts of the 
%%We also remark that 
%%%relativistic casuality has been elegantly formalised recently on the level
%%%of Minkowski space time via the ideas of the transport theory
%%%\cite{Eckstein1, Eckstein2} and shown to emerge naturally in the intermediate
%%%framework between non-relativistic physics and quantum field theory
%%%as a consequence of the Dirac equation \cite{Eckstein-Miller-models}.
%rules
%out communication between observers at speeds faster than light. 
From the perspective of communication, the requirement of relativistic causality strictly demands that no faster-than-light (FTL) transmission of information takes place between a sender and a receiver. The no-signaling principle being in ubiquitous use (in device-independent cryptography, axiomatic formulations, etc. as explained earlier), a natural question is to explore whether the no-signaling constraints that are currently in use precisely capture the constraints imposed by relativistic causality, i.e., to derive the no-signaling constraints from relativistic causality. 
%So the natural question is whether the no-signaling principle exactly captures the constraints imposed by relativistic causality. 
In this paper, we investigate this question and find that, surprisingly, the answer is that the multi-party no-signaling principle requires a modification to capture the notion of causality. In particular, the actual no-signaling constraints that one must impose in a multi-party Bell experiment are dictated by the space-time configuration of the measurement events in the experiment. This modification of the no-signaling constraints naturally leads to a number of implications in tasks of device-independent cryptography against relativistic eavesdroppers and to explanations of quantum correlations via finite-speed superluminal influences. 
This paper is therefore, propaedeutic to a larger project undertaken by the authors to explore the relativistic causality constraints and their implications on the foundations of quantum theory. 

%We find that post-quantum theories that are only required to obey relativistic causality, including certain non-linear modifications of the quantum theory, admit correlations 
%
% what if - rather than mimic quantum mechanics
%in postquantum thoeries.
%We show that it leads to correlations that may in principle be present in
%some future
%theories where some degree of strong nonlocality (rather than violation of
%local
%realism itself) may coexist with with its experimental verification and
%control.
% (This is as opposed to exotic theories where any degree of nonlocality was
%impossible to be revelaed due to some conspiracy and hece - completely not
%scientific).   

The structure of the paper is as follows. We initially establish the setup of the Bell experiment and recall the assumptions in the Bell theorem. We then define the notion of relativistic causality that we use in this paper (and that is commonly accepted, i.e., that there be no causal loops in spacetime) and revisit the derivation of the two-party no-signaling constraints from causality. 
% revisit the derivation of the no-signaling constraints and examine them critically from the point of view of the constraints emanating from relativistic causality alone. We find that in the two-party scenario, the usual no-signaling constraint requiring independence of any party's output from the inputs of any space-like separated party exactly captures the notion of relativistic causality that causal loops are prohibited. 
We then show that, in the multi-party scenario, only a restricted subset of the no-signaling constraints is required to ensure that no causal loops appear. We explicitly identify a region of space-time for the measurement events in a Bell scenario where the usual no-signaling constraints fail. 
In this regard, we extend a particular framework of ``jamming" non-local correlations by Grunhaus, Popescu and Rohrlich in \cite{GPR} based upon an earlier suggestion of Shimony in \cite{Shim83}. 
We then examine the implications of the restricted subset of no-signaling constraints for device-independent cryptographic tasks against an eavesdropper constrained only by the laws of relativity. We detail explicit attacks on known protocols for randomness amplification based on the GHZ-Mermin inequalities using boxes that obey the new relativistic causality conditions. We show that from that perspective the security theory needs revision, and - in a way - to the some degree collapses. 
%For instance in the new theory, there are space-time regions where randomness amplification using the well-known Mermin inequalities is just impossible. 
We also explore the implications on some of the known features of no-signaling theories \cite{Mas06}, in particular we find that the phenomenon of monogamy of correlations is significantly weakened in the relativistically causal theories and that the monogamy of CHSH inequality violation \cite{Toner} disappears in certain spacetime configurations. The notions of freedom-of-choice and no-signaling are known to be intimately related \cite{CR}. We re-examine how the notion of free-choice as proposed by Bell and formalized by Colbeck and Renner \cite{Bell2, Renner-Colbeck} can be stated mathematically within the structure of a space-time configuration of measurement events. A breakthrough result in \cite{BPAL+12} was a claim that any finite superluminal speed explanation of quantum correlations could lead to superluminal signaling and must hence be discarded. We re-examine this question in light of the modified relativistic causality and free-will conditions. 
%A significant implication of our result is that under the relativistic causality constraints, the claim of \cite{BPAL+12} no longer holds, i.e., the explanation of quantum correlations by the $v$-causal models in the measurement configuration of \cite{BPAL+12} does not lead to causal loops. This therefore reopens the question whether future theories incorporating finite superluminal speed influences could explain quantum correlations while still respecting the laws of relativistic causality. 
Both non-relativistic quantum theory and relativistic quantum field theory \cite{ER89} are well-known to obey a no superluminal signaling condition, proposals to modify quantum theory by introducing non-linearities have been shown to lead to signaling \cite{Wei89, Cza91, Gis90}. We end with discussion and open questions concerning feasible mechanisms for the 
point-to-region superluminal influences.

\section{Preliminaries.}

\subsection{Notation.}
Let us first establish the setup of a typical Bell experiment. In the Bell scenario denoted $\textbf{B}(n,m,k)$, we have $n$ space-like separated parties, each of whom chooses from among $m$ possible measurement settings and obtains one of $k$ possible outcomes. In general, the number of inputs and outputs for each party may vary, but this will not concern us in this paper. The inputs of the $i$-th party will be denoted by random variable (r.v.) $X_i$ taking values $x_i$ in $[m] = \{1, \dots, m\}$ and the outputs of this party will be denoted by r.v $A_i$ taking values $a_i$ in $[k]$. Accordingly, the conditional probability distribution of the outputs given the inputs will be denoted by
\begin{equation}
P_{A_1, \dots, A_n | X_1, \dots, X_n}(a_1, \dots, a_n | x_1, \dots, x_n) := P(A_1 = a_1, \dots, A_n = a_n | X_1 = x_1, \dots, X_n = x_n).
\end{equation}
Following \cite{Renner-Colbeck, CR}, we also consider the notion of a spacetime random variable (SRV), which is a random variable $R$ together with a set of spacetime coordinates $(t_R, \textbf{r}_R) \in \mathbb{R}^4$ in some inertial reference frame $\mathcal{I}$ at which it is generated. A measurement event $\mathcal{M}_{X,A}$ is thus modeled as an input SRV $X$ together with an output SRV $A$. As in typical studies of Bell experiments, here we consider the measurement process as instantaneous, i.e., $X$ and $A$ share the same spacetime coordinates. Denote a causal order relation between two SRV's $X$ and $A$ by $X \rightarrow A$ if $t_X < t_A$ in all inertial reference frames, i.e., $A$ is in the future light cone of $X$ (so that $X$ may cause $A$). A pair $(A_j, A_k)$ of SRVs is spacelike separated if $\Delta s^2 := | \textbf{r}_{A_j} - \textbf{r}_{A_k} |^2 - c^2 (t_{A_j} - t_{A_k})^2 > 0$. 

In this paper, we will have occasion to distinguish the specific spacetime location at which \textit{correlations} between random variables manifest themselves, i.e., the particular spacetime location at which the correlations are registered, from the spacetime locations at which the random variables themselves are generated. Accordingly, we label by $\mathcal{C}_{A_j,A_k}$ the SRV denoting the correlations between the output SRVs $A_j$ and $A_k$ with its associated spacetime location $(t_{\mathcal{C}_{A_j,A_k}}, \textbf{r}_{\mathcal{C}_{A_j,A_k}})$ being at the earliest (smallest $t$) intersection of the future light cones of $A_j$ and $A_k$ in the reference frame $\mathcal{I}$.  

\subsection{The two-party Bell theorem.}
\label{subsec:two-party-Bell-theorem}
%Let us first establish the setup of a typical Bell experiment. In the Bell scenario denoted $\textbf{B}(n,m,k)$, we have $n$ space-like separated parties, each of whom chooses from among $m$ possible measurement settings and obtains one of $k$ possible outcomes. In general, the number of inputs and outputs for each party may vary, but this will not concern us in this paper.  

%In the two-party setup, let us label the inputs by $x_1, x_2 \in [m]$ and the outputs by $a_1, a_2 \in [k]$. 
Consider the two-party Bell experiment with two spacelike separated parties Alice and Bob. The inputs of Alice and Bob are denoted by SRV's $X_1, X_2$ and the outputs by $A_1, A_2$. If the measurement process is considered to be instantaneous, $X_1$ and $A_1$ share the same space-time coordinates as do $X_2, A_2$. 
In the two-party Bell scenario, the results of the experiment are described by the set of conditional probability distributions $P_{A_1, A_2 | X_1, X_2}(a_1, a_2 | x_1, x_2)$. Let $\Lambda$ denote a set of underlying variables describing the state of the system under consideration, these could in general be local or non-local. We have evidently
%\begin{widetext}
\begin{equation}
\label{eq:two-party-BI}
P_{A_1, A_2 | X_1, X_2}(a_1, a_2 | x_1, x_2) =  \int d\lambda P_{A_1, A_2 | X_1, X_2, \Lambda}(a_1, a_2 | x_1, x_2, \lambda) P_{\Lambda | X_1, X_2}(\lambda | x_1, x_2) 
\end{equation}
%\end{widetext}
%In quantum theory, $\lambda$ would denote the density matrix describing the system and the probabilities $
%\begin{equation}
%P_q(a_1, a_2 | x_1, x_2, \lambda) = \tr\left[ \lambda E^{x_1}_{a_1} \otimes E^{x_2}_{a_2} \right].
%\end{equation}

%\begin{eqnarray}
%P(a,c|x,y,z) &=& \frac{P(a,c|x,z) P(y|a,c,x,z)}{P(y|x,z)} \nonumber \\
%&=& P(a,c|x,z) \; \; \; \qquad \qquad \text{Free-Will:  } P(y|a,c,x,z) = P(y|x,z) = P(y)
%\end{eqnarray}

In this scenario, the Bell theorem \cite{Bell} is based on the following assumptions \cite{Hall, Valdenebro}: 
\begin{enumerate}

\item \textit{Outcome Independence}: The statistical correlations between the outputs arise from ignorance of the underlying variable $\Lambda$
%\begin{widetext}
\begin{equation}
\label{eq:two-party-oi}
P_{A_1, A_2|X_1, X_2, \Lambda}(a_1, a_2 | x_1, x_2, \lambda) = P_{A_1 | X_1, X_2, \Lambda}(a_1|x_1, x_2, \lambda) P_{A_2|X_1, X_2, \Lambda}(a_2 | x_1, x_2, \lambda) 
\end{equation}
%\end{widetext}
This is evidently true for deterministic models (those that output deterministic answers $a_1, a_2$ for inputs $x_1, x_2$) as well as for stochastic models and is motivated by a notion of realism, i.e., that the measurements merely reveal pre-existing outcomes encoded in $\Lambda$. 

\item \textit{Parameter-Independence}: For each so-called ``microstate" $\lambda$, the probability of an outcome on Alice's side is assumed to be (stochastically) independent of the experimental setting (the parameters of the device) on Bob's side, 
%i.e., for each $\lambda$ spacelike separated measurements do not influence each other's underlying outcome probability distributions:
%\begin{widetext}
\begin{eqnarray}
\label{eq:two-party-pi}
P_{A_1|X_1, X_2, \Lambda}(a_1|x_1, x_2, \lambda) &=& P_{A_1 | X_1, \Lambda}(a_1 | x_1, \lambda), \nonumber \\
P_{A_2|X_1, X_2, \Lambda}(a_2 | x_1, x_2, \lambda) &=& P_{A_2 | X_2, \Lambda}(a_2 | x_2, \lambda).
\end{eqnarray}
%\end{widetext}
The justification for this assumption comes from special relativity, which imposes that spacelike separated measurements do not influence each other's underlying outcome probability distributions. 

\item \textit{Measurement Independence}: The measurement inputs are uncorrelated with the underlying variable $\Lambda$
\begin{equation}
\label{eq:two-party-mi}
P_{\Lambda, X_1, X_2}(\lambda, x_1, x_2) = P_{\Lambda}(\lambda) P_{X_1, X_2}(x_1, x_2). 
\end{equation}
This is also sometimes called the \textit{free-choice} assumption, i.e., that the inputs $x_1, x_2$ are chosen freely, independent of $\Lambda$. We elaborate on the notion of freeness in this two-party Bell experiment in Section \ref{subsec:two-party-freewill} \cite{Renner-Colbeck, Bell2}. 
\end{enumerate}
There are other assumptions such as: Fair Sampling, No Backward Causation, Reality being single valued, etc. \cite{Valdenebro, Tumulka} but these will not concern us in this paper. Substituting Eqs.(\ref{eq:two-party-oi}, \ref{eq:two-party-pi}, \ref{eq:two-party-mi}) in Eq.(\ref{eq:two-party-BI}), we obtain the description of a local hidden variable model:
\begin{equation}
\label{eq:LHV-model}
P_{A_1, A_2|X_1, X_2}(a_1, a_2 | x_1, x_2) = \int d\lambda P_{\Lambda}(\lambda) P_{A_1| X_1, \Lambda}(a_1| x_1, \lambda) P_{A_2|X_2, \Lambda}(a_2 | x_2, \lambda).
\end{equation}
Figure \ref{fig:two-party-caus-struc} shows the causal structure represented by the two-party Bell experiment represented in terms of a Directed Acyclic Graph (DAG) \cite{Pearl09}. Remark that the local hidden variable model in Eq.(\ref{eq:LHV-model}) can also be arrived at starting from other postulates such as local causality \cite{Norsen}.

\subsection{Two-party no-signaling}
From the assumptions of parameter-independence and measurement-independence, one can deduce the \textit{no-signaling constraints}:
\begin{eqnarray}
\label{eq:two-party-ns}
P_{A_1|X_1, X_2}(a_1|x_1, x_2) &=& \int d\lambda P_{\Lambda | X_1, X_2}(\lambda | x_1, x_2) P_{A_1|X_1,X_2,\Lambda}(a_1|x_1, x_2, \lambda) \nonumber \\
&\stackrel{Eq.(\ref{eq:two-party-mi})}{=}& \int d\lambda P_{\Lambda}(\lambda) P_{A_1|X_1, X_2,\Lambda}(a_1|x_1, x_2, \lambda) \nonumber \\
&\stackrel{Eq.(\ref{eq:two-party-pi})}{=}& \int d\lambda P_{\Lambda}(\lambda) P_{A_1|X_1, \Lambda}(a_1 | x_1, \lambda) \nonumber \\
&=& P_{A_1 | X_1}(a_1 | x_1). \nonumber \\
P_{A_2|X_1, X_2}(a_2 | x_1, x_2) &=& \int d\lambda P_{\Lambda | X_1, X_2}(\lambda | x_1, x_2) P_{A_2|X_1,X_2,\Lambda}(a_2|x_1, x_2, \lambda) \nonumber \\
&\stackrel{Eq.(\ref{eq:two-party-mi})}{=}& \int d\lambda P_{\Lambda}(\lambda) P_{A_2|X_1, X_2,\Lambda}(a_2|x_1, x_2, \lambda) \nonumber \\
&\stackrel{Eq.(\ref{eq:two-party-pi})}{=}& \int d\lambda P_{\Lambda}(\lambda) P_{A_2|X_2, \Lambda}(a_2 | x_2, \lambda) \nonumber \\
&=& P_{A_2|X_2}(a_2 | x_2).
\end{eqnarray}
The no-signaling conditions in this two-party Bell experiment formally state that probability distribution of the outcomes of any party is independent of the input of the other party. 

\subsection{Two-party freedom-of-choice}
\label{subsec:two-party-freewill}
The assumption of measurement-independence or free-choice can be formally expressed in terms of spacetime random variables. In the two-party Bell experiment, one imposes the free-choice constraints
\begin{eqnarray}  
\label{eq:two-party-freewill}
P_{X_1|\Lambda, X_2, A_2}(x_1|\lambda, x_2, a_2) = P_{X_1}(x_1), \nonumber \\
P_{X_2| \Lambda, X_1, A_2}(x_2|\lambda, x_1, a_1) = P_{X_2}(x_2).
\end{eqnarray}
One formal way to define the freedom-of-choice condition in terms of spacetime random variables was formulated by Colbeck and Renner (CR) in \cite{Renner-Colbeck}. Recall that we denote a causal order relation by $X \rightarrow A$ if $t_X < t_A$ in all inertial reference frames, i.e., $A$ is in the future light cone of $X$ so that $X$ \text{may} cause $A$. CR formulate the notion of free-choice (formalising Bell's notion from \cite{Bell2}) as follows: 
\begin{definition}[\cite{Renner-Colbeck}, \cite{Bell2}]
\label{def:freewill}
A spacetime random variable $X$ is said to be free if $X$ is uncorrelated with every spacetime random variable $A$ such that $X \nrightarrow A$, i.e., for all such $A$, we have $P_{X|A}(x|a) = P_{X}(x)$. 
\end{definition}
In other words, $X$ is free if it is uncorrelated with any $A$ that it could not have caused, where the requirement for $X$ causing $A$ is that $A$ lies in the future light cone of $X$. Clearly, Eq.(\ref{eq:two-party-freewill}) follows if one adopts this notion of freedom-of-choice, although note that the Def. \ref{def:freewill} is strictly stronger than just the conditions (\ref{eq:two-party-freewill}) imposed in the Bell theorem. 
%%Adopting this definition, in the two-party Bell experiment, one obtains the free-choice constraints
%%\begin{eqnarray}  
%%\label{eq:two-party-freewill}
%%P_{X_1|\Lambda, X_2, A_2}(x_1|\lambda, x_2, a_2) = P_{X_1}(x_1), \nonumber \\
%%P_{X_2| \Lambda, X_1, A_2}(x_2|\lambda, x_1, a_1) = P_{X_2}(x_2).
%%\end{eqnarray}

%P_{A_2 | X_1, \Lambda}(a_1 | x_1, \lambda), \nonumber \\
%A violation of the two-party no-signaling conditions in Eq.(\ref{eq:two-party-NS}) can lead to a violation of relativistic causality, we explain this later in the paper.   
\subsection{A sufficient set of multi-party no-signaling conditions}
In this paper, our focus is on the no-signaling and free-will constraints above which are intimately related to each other. In particular, we consider the generalization of the no-signaling conditions to the multi-party scenario. The generalized multi-party no-signaling constraints are usually stated as follows (see for example \cite{Mas06}): 
\begin{eqnarray}
\label{eq:Multi-party-NS}
\sum_{a_j} P_{A_1, \dots, A_n|X_1, \dots, X_n}(a_1, \dots, a_j, \dots, a_n | x_1, \dots, x_j, \dots, x_n) &=& \sum_{a_j} P_{A_1, \dots, A_n|X_1, \dots, X_n}(a_1, \dots, a_j, \dots, a_n | x_1, \dots, x'_j, \dots, x_n) \nonumber \\
&& \qquad \forall j \in [n], \{a_1, \dots, a_n\} \setminus a_j, \{x_1, \dots, x_j, x'_j, \dots, x_n\} 
\end{eqnarray}
In words, the above constraints state that the outcome distribution of any subset of parties is independent of the inputs of the complementary set of parties (while Eq.(\ref{eq:Multi-party-NS}) imposes this for subsets of $n-1$ parties, one can straightforwardly show that this also implies that the marginal distribution for smaller sized subsets is well-defined \cite{Mas06}). Now, given that as stated earlier the justification of the two-party no-signaling constraint came from the causality constraints of special relativity, the natural question which we investigate in this paper is to what extent the multi-party no-signaling constraints in Eq.(\ref{eq:Multi-party-NS}) are imposed by the causality constraints of special relativity. In other words, while the constraints in Eq.(\ref{eq:Multi-party-NS}) are clearly sufficient to ensure that no superluminal signaling takes place, we derive the set of necessary and sufficient constraints that ensure that no causal loops appear in the theory. We will see that the multi-party no-signaling conditions in fact depend on the space-time coordinates of the measurement events in the Bell experiment and we propose an appropriate modification of these constraints. 

\section{Results.}
The main results of the paper are as follows.  
\begin{enumerate}
\item We derive in Prop. \ref{prop:two-party-NS} the no-signaling constraints in the two-party Bell experiment from relativistic causality and show that these constraints are both necessary and sufficient in this scenario.

\item We show that in general, depending on the exact space-time coordinates of the measurement events of the parties participating in a Bell experiment, relativistic causality strictly only imposes a subset of the usual no-signaling conditions. In particular, we show that in the three-party Bell experiment, two sets of relativistic causal constraints are possible: $(a)$ the usual no-signaling conditions and $(b)$ a subset of the no-signaling conditions in Prop. \ref{prop:rel-cau-constraints} which ensure causality is preserved in certain measurement configurations (such as in Fig. \ref{fig:three-party-meas-config}). 

\item We geometrically analyze in Prop.\ref{prop:spacetime-region} the exact space-time region in the three-party Bell scenario where the constraints in Prop. \ref{prop:rel-cau-constraints} are necessary and sufficient to prevent causal loops. 

\item We propose in Definition \ref{lem:mod-freewill} a rigorous definition of the free-choice constraints in the multi-party Bell experiment and show in Prop. \ref{prop:caus-eq-freewill} how these are compatible with the relativistic causality conditions. 

\item We examine the implications of the causality constraints on device-independent cryptography against an adversary constrained only by the laws of relativity. In the task of randomness amplification, we demonstrate in Prop. \ref{prop:RA-rel-caus} an explicit attack that renders a family of multi-party protocols based on the $n$-party Mermin inequalities insecure, when the measurement events of the honest parties conform to certain spacetime configurations. In the task of key distribution, we show in Prop. \ref{prop:qkd-sec-rel-caus} that even two-party protocols can be rendered insecure unless assumptions are made concerning the spacetime location of the eavesdropper's measurement event or shielding of the honest parties to any possible superluminal influences, even those respecting causality. 

\item We then examine properties that were previously considered to be common to all no-signaling theories. In particular, we show in Prop. \ref{prop:rel-caus-mono} that the paradigmatic phenomenon of monogamy of correlations violating the CHSH inequality no longer holds in relativistically causal theories in certain measurement configurations.

\item We re-examine, in Section \ref{sec:v-causal-exp} the question whether quantum correlations may admit explanations by finite speed superluminal influences propagating between the spacelike separated parties, by modifying the argument from \cite{BPAL+12} against such $v$-causal models to show that they would lead to causal loops in certain measurement configurations. 
%We show in Prop. \ref{prop:rel-caus-finite-speed} that measurement configurations that were previously used to prove that quantum correlations do not admit such an explanation, in fact do not lead to any causal loops, thereby reopening this question for further investigation. 

\item Finally, we investigate the phenomenon of multiparty non-locality in light of the new considerations and propose in Def. \ref{def:rel-caus-bilocal} a new class of relativistically causal bilocal models in the three-party scenario. We formulate a new Svetlichny-type inequality in Lemma \ref{lem:RCBL-ineq} that is satisfied by this class of models and examine its violation within quantum theory.  

\end{enumerate}

\section{Relativistic Causality.}
In this paper, we work in the regime of special relativity, i.e., flat spacetime with no gravitational fields, this regime is valid within small regions of spacetime where the non-uniformities of any gravitational forces are too small to measure. 
%In this regime, there exist inertial frames that fill all of spacetime, and within small regions of spacetime where any non-un
%Within small regions of spacetime where non-uniformities of any gravitational forces are too small to measure, one can always construct such a local frame \cite{GR}. 
We consider the causal structure of measurement events occurring at fixed spacetime locations $(t,\textbf{r})$. Within this regime, the relativistic causality constraint we consider is simply that  :
\begin{itemize}
%\item  
\item No \textit{causal loops} occur, where a causal loop is a sequence of events, in which one event is among the causes of another event, which in turn is among the causes of the first event. 
%\item \textit{Equivalently, there can be no FTL (faster than light in vacuum) transmission of information from one spacetime location $A$ to another space-like separated location $B$}. 
\end{itemize}
Causality implies that for two causally related events taking place at two spatially separated points, the cause always occurs before the effect, and this sequence cannot be changed by any choice of a frame of reference. Relativistic Causality is a consequence of prohibiting faster-than-light transmission of information from one spacetime location $A$ to another space-like separated location $B$ in any inertial frame of reference. A violation of this condition would lead to the well-known ``grandfather-paradoxes".  In other words, if an effect $B$ that occurs at a space-time location $(t_B, \textbf{r}_B)$ in an inertial reference frame $\mathcal{I}$ \textit{precedes} its cause $A$ that occurs at space-time location $(t_A, \textbf{r}_A)$ in $\mathcal{I}$ (i.e., $t_B < t_A$) then an observer at $B$ may in turn affect the cause at $A$ and lead to a paradox. An explicit example of a closed causal loop is shown in the proof of Prop. \ref{prop:two-party-NS}. 

A few remarks are in order. Firstly, note that faster-than-light propagation in one privileged frame of reference alone does not lead to causal loops. Secondly, note that the "principle of causality", i.e., the invariance of the temporal
sequence of causally related events is also directly related to the macroscopic notion of arrow of time from the second law of thermodynamics \cite{Terletskii68}. Finally, we remark that in the General Theory of Relativity, the field equations allow for solutions in the form of closed timelike curves, and there has been much debate over these, with proposals such as the chronology protection conjecture and a self-consistency principle \cite{self-cons, SWH} suggested to prevent time-travel paradoxes.

\section{Deriving No-Signaling constraints from Relativistic Causality.}
%\textit{Two-party Bell experiment.}
\subsection{Derivation of the two-party no-signaling constraints.}

%Before we proceed to the modification of the no-signaling constraints in the multi-party scenario, 
Let us first revisit the derivation of the no-signaling constraint from relativistic causality in the typical two-party Bell experiment. %Consider the two-party Bell experiment, performed by two space-like separated parties Alice and Bob. 
In any run of the Bell experiment, Alice \textit{freely} chooses at spacetime location $(t_A, \textbf{r}_{A})$ in $\mathcal{I}$, her measurement input $x_1 \in [m]$ and obtains (instantaneously) the output $a_1 \in [k]$. 
%where $\mathcal{X} = [m_A]$ and $\mathcal{A} = [k_A]$ are Alice's input and output sets. 
%Alice's measurement event is labeled by the space-time coordinates $(t_A, \textbf{r}_{A})$. 
% which we abbreviate to $\textsc{r}_{A}^{(j)}$. 
Similarly, Bob who is at a space-like separated location, in the same run, freely chooses at $(t_B, \textbf{r}_{B})$ his input $x_2 \in [m]$ and obtains an output $a_2 \in [k]$. %Bob's measurement event in this run is labeled by the space-time coordinates $(t_B, \textbf{r}_{B})$. 
% =: \textsc{r}_{B}^{(j)}$. 
The requirement of space-like separation, i.e.,
%\begin{equation}
$\Delta s^2 = \left\vert \textbf{r}_{A}- \textbf{r}_B \right\vert^2 - c^2 \left(t_A - t_B \right)^2 > 0$,
%\end{equation} 
%c^2 (t_A^{(j)} - t_B^{(j)})^2 < |\textbf{r}_{A}^{(j)} - \textbf{r}_B^{(j)}|^2, 
ensures that the measurement events fall outside each other's light cone. 
The conditional probability distributions of outputs given the inputs $P_{A_1, A_2|X_1, X_2}(a_1,a_2|x_1,x_2)$ that they obtain in the Bell experiment jointly constitute a ``box" $\mathcal{P} := \{ P_{A_1, A_2|X_1, X_2}(a_1, a_2|x_1, x_2)\}$.   

%The causal structure of the two-party Bell experiment is represented in Figure \ref{}. 

%In all relativistic theories, ``causality" constraints are imposed, i.e., the requirement that causes must precede effects in time in all spacetime rest frames.
The causality constraint is the requirement that faster-than-light information transmission from Alice to Bob or Bob to Alice is forbidden, i.e., Alice and Bob cannot use their local measurements to signal to one another. Note that here, Alice and Bob choose their measurement inputs \textit{freely}, i.e., we assume the free-choice constraints in Eq.(\ref{eq:two-party-freewill}). 
%i.e., we \textit{assume} the free-will constraints given by
%\begin{eqnarray}  
%\label{eq:two-party-freewill}
%P_{X_1|\Lambda, X_2, A_2}(x_1|\lambda, x_2, a_2) = P_{X_1}(x_1), \nonumber \\
%P_{X_2| \Lambda, X_1, A_2}(x_2|\lambda, x_1, a_1) = P_{X_2}(x_2).
%\end{eqnarray}

\begin{center}
\begin{figure}[t!]
		\includegraphics[width=0.5\textwidth]{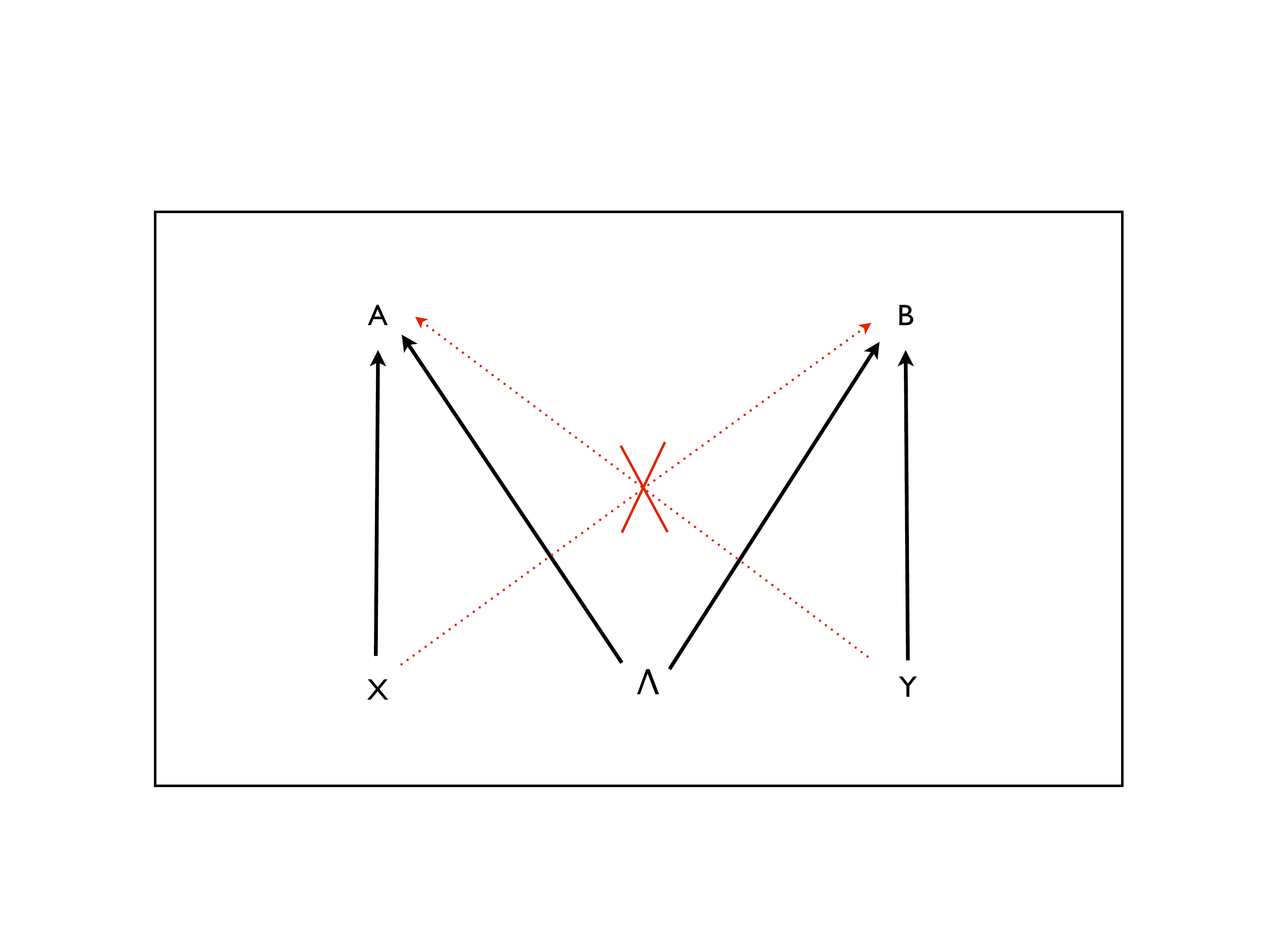}
\caption{The causal structure of the two-party Bell experiment represented by a directed acyclic graph. The inputs and outputs of the two parties are denoted by $X,Y$ and $A,B$ respectively. The inputs are chosen freely so no arrows are pointed towards $X,Y$. The correlations between the outputs $A,B$ may be explained as caused by a shared random variable $\Lambda$ which may be local or non-local depending on the theory under consideration. The input of each party is prevented from influencing the output distribution of the other party due to the spacelike separation between them.}
	\label{fig:two-party-caus-struc}
\end{figure}
\end{center}

%We shall see that the causality constraints imposed on the box $\mathcal{P}$ are captured by the usual ``no-signaling" principle precluding faster-than-light information transmission, i.e., the principle that Alice and Bob cannot use their local measurements to signal to one another. 

\begin{prop}
\label{prop:two-party-NS}
In the two-party Bell experiment, the usual no-signaling constraints 
\begin{eqnarray}
\label{eq:two-party-NS}
\sum_{a_2} P_{A_1, A_2|X_1, X_2}(a_1, a_2|x_1, x_2) &=& \sum_{a_2} P_{A_1, A_2|X_1, X_2}(a_1, a_2|x_1, x'_2) =: P_{A_1|X_1}(a_1|x_1) \;\;\; \forall a_1,x_1,x_2,x'_2 \nonumber \\
\sum_{a_1} P_{A_1, A_2|X_1, X_2}(a_1, a_2|x_1, x_2) &=& \sum_{a_1} P_{A_1, A_2|X_1, X_2}(a_1, a_2|x'_1, x_2) =: P_{A_2|X_2}(a_2|x_2) \;\;\; \forall a_2,x_1,x'_1,x_2.  
\end{eqnarray} 
are necessary and sufficient to ensure that relativistic causality is not violated. 
\end{prop}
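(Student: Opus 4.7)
The plan is to split the argument into the routine sufficiency direction and the more delicate necessity direction, invoking the free-choice assumption of Eq.(\ref{eq:two-party-freewill}) throughout.

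For sufficiency, I would suppose that the constraints of Eq.(\ref{eq:two-party-NS}) hold. Then the marginal $P_{A_1|X_1}(a_1|x_1)$ is well defined and is blind to Bob's choice of $x_2$, and symmetrically for Bob. Since the free inputs are the only spontaneous causes introduced in a run of the experiment, and by hypothesis their statistical effect is quarantined to the local side, no directed chain of influence spans the spacelike interval between the two measurement events. A fortiori no closed causal chain can traverse that interval, so relativistic causality is preserved.

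For necessity I would argue by contrapositive. Suppose a no-signaling constraint fails; without loss of generality, there exist $a_1,x_1,x_2,x'_2$ with $P_{A_1,A_2|X_1,X_2}(a_1,\cdot\,|x_1,x_2)\neq P_{A_1,A_2|X_1,X_2}(a_1,\cdot\,|x_1,x'_2)$ after marginalizing $a_2$. Repetition of the experiment then lets Bob transmit classical information to the spacelike-separated Alice by his choice of input, which is a superluminal signal in the original frame $\mathcal{I}$. Because the interval is spacelike, there exists a boosted inertial frame $\mathcal{I}'$ in which Alice's measurement event precedes Bob's. I would then construct the standard tachyonic-antitelephone loop: by Lorentz covariance the same superluminal channel, viewed from $\mathcal{I}'$, realizes an Alice-to-Bob FTL transmission; composing the Bob-to-Alice channel in $\mathcal{I}$ with the Alice-to-Bob channel in $\mathcal{I}'$ produces a closed directed chain $x_2 \rightarrow a_1 \rightarrow x_2$, in which Bob's supposedly free input causally influences itself. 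This is an explicit closed causal loop, contradicting the assumed relativistic causality constraint.

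The main obstacle I expect is the geometric bookkeeping in the antitelephone step: one must pick explicit spacetime coordinates for Alice's and Bob's events and a concrete boost parameter so that, viewed from $\mathcal{I}'$, the signal truly propagates from Alice's side back into Bob's past, and the composed chain is directed rather than merely retrocausal. The translation of \textit{statistical distinguishability of Bob's inputs at Alice's side} into \textit{one bit of transmitted information} is routine (a standard Chernoff-style argument suffices) and should need only a brief remark. With the antitelephone diagram in hand, both directions of the proposition follow.
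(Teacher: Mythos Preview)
Your proposal is correct and follows essentially the same approach as the paper: brief sufficiency via the preserved causal structure, and necessity by contrapositive through a tachyonic-antitelephone construction closed off with the free-choice assumption. The paper's only addition is to make the antitelephone step fully explicit by introducing two auxiliary observers (Charlie and Dave) moving uniformly relative to Alice and Bob who relay the signal back into the sender's causal past---exactly the ``geometric bookkeeping'' you flagged as the main obstacle.
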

\begin{proof}
Suppose by contradiction that one of the constraints in Eq.(\ref{eq:two-party-NS}) was violated, for definiteness, let us suppose that the marginal distribution of Bob's outputs depended upon the input of Alice, i.e., suppose
\begin{equation}
\label{eq:Alice-signal-Bob}
\sum_{a_1} P_{A_1, A_2|X_1, X_2}(a_1, a_2|x_1, x_2) \neq \sum_{a_1} P_{A_1, A_2|X_1, X_2}(a_1, a_2|x'_1, x_2). 
\end{equation}
Note that by assumption, Alice chooses her measurement freely, i.e.,
\begin{equation}
\label{eq:Alice-freewill}
P_{X_1|X_2, A_2}(x_1 | x_2, a_2) = P_{X_1}(x_1), 
\end{equation} 
so suppose that in some specific run Alice chooses the input $X_1 = x_1$ at spacetime location $(t_A, \textbf{r}_A)$. 
Since Alice and Bob's measurement events are spacelike separated, Eq.(\ref{eq:Alice-signal-Bob}) would imply physically that a superluminal influence propagated from Alice's system to Bob's system informing Bob of Alice's input $x_1$. Explicitly, consider that Alice and Bob share many copies of a system obeying (\ref{eq:Alice-signal-Bob}), when Alice chooses input $x_1$ on all her subsystems, Bob guesses $x_1$ with a probability strictly larger than uniform by inspecting his local statistics $P_{A_2|X_1,X_2}$. This is illustrated in Fig.\ref{fig:two-party-ccl} where in a particular frame of reference $\mathcal{I}_{AB}$ labeled by axes $(t, \textbf{r})$ (for concreteness, this can be taken to be the laboratory frame at which Alice and Bob's systems are at rest), the superluminal influence is shown to be instantaneous, i.e., the signal travels parallel to the time axis $t$ and reaches Bob's system at spacetime location $(t_B, \textbf{r}_B)$. Now, consider the inertial reference frame $\mathcal{I}_{CD}$ labeled by $(t', \textbf{r}')$ of Charlie and Dave who are moving uniformly at some speed $v$ relative to $\mathcal{I}_{AB}$. At the space-time location $(t_B, \textbf{r}_B) = (t'_C, \textbf{r}'_C)$ Charlie and Bob's world-lines intersect, suppose Bob informs Charlie of the value $x_1$ at this point. Charlie immediately transmits this information via the same superluminal mechanism to Dave. In the frame $\mathcal{I}_{CD}$, this superluminal influence again travels parallel to the time axis $t'$. The information about $x_1$ thus reaches Dave at spacetime location $(t'_D, \textbf{r}'_D)$ which is in the causal past of $(t_A, \textbf{r}_A)$ as shown in Fig. \ref{fig:two-party-ccl}. Alice has thus managed to transmit the information about $x_1$ to her causal past. Dave may then transmit $x_1$ by a sub-luminal signal that reaches Alice at $(t_A, \textbf{r}_A)$ and Alice could now decide \textit{freely} to not measure $x_1$ and measure $x'_1$ instead. %That would in turn change the probability distribution of Bob's outputs by Eq.(\ref{eq:Alice-signal-Bob}) and so on. 
Therefore, we see that Eqs.(\ref{eq:Alice-signal-Bob}) and (\ref{eq:Alice-freewill}) have resulted in a causal loop, which would lead to grandfather-style paradoxes. In order to prevent causal loops and preserve the notion of relativistic causality while still keeping the notion of free will in Eq.(\ref{eq:Alice-freewill}), we therefore impose the no-signaling condition 
\begin{equation}
\sum_{a_1} P_{A_1, A_2|X_1, X_2}(a_1, a_2|x_1, x_2) = \sum_{a_1} P_{A_1, A_2|X_1, X_2}(a_1, a_2|x'_1, x_2).
\end{equation}               
Analogous reasoning to prevent a closed causal loop starting from Bob's measurement gives the other no-signaling condition
\begin{equation}
\sum_{a_2} P_{A_1, A_2|X_1, X_2}(a_1, a_2|x_1, x_2) = \sum_{a_2} P_{A_1, A_2|X_1, X_2}(a_1, a_2|x_1, x'_2).
\end{equation}
We have therefore derived the necessity of the two-party no-signaling constraints from the requirement that there be no closed causal loops. That these constraints are also sufficient is clear, since these ensure that the causal structure in Fig. \ref{fig:two-party-caus-struc} is maintained. Note that while the outputs can be correlated with each other, this correlation is attributed to the underlying (local or non-local) hidden variable $\Lambda$ rather than due to any superluminal influence propagating from one party to another.
\end{proof}

\begin{center}
\begin{figure}[t!]
		\includegraphics[width=0.85\textwidth]{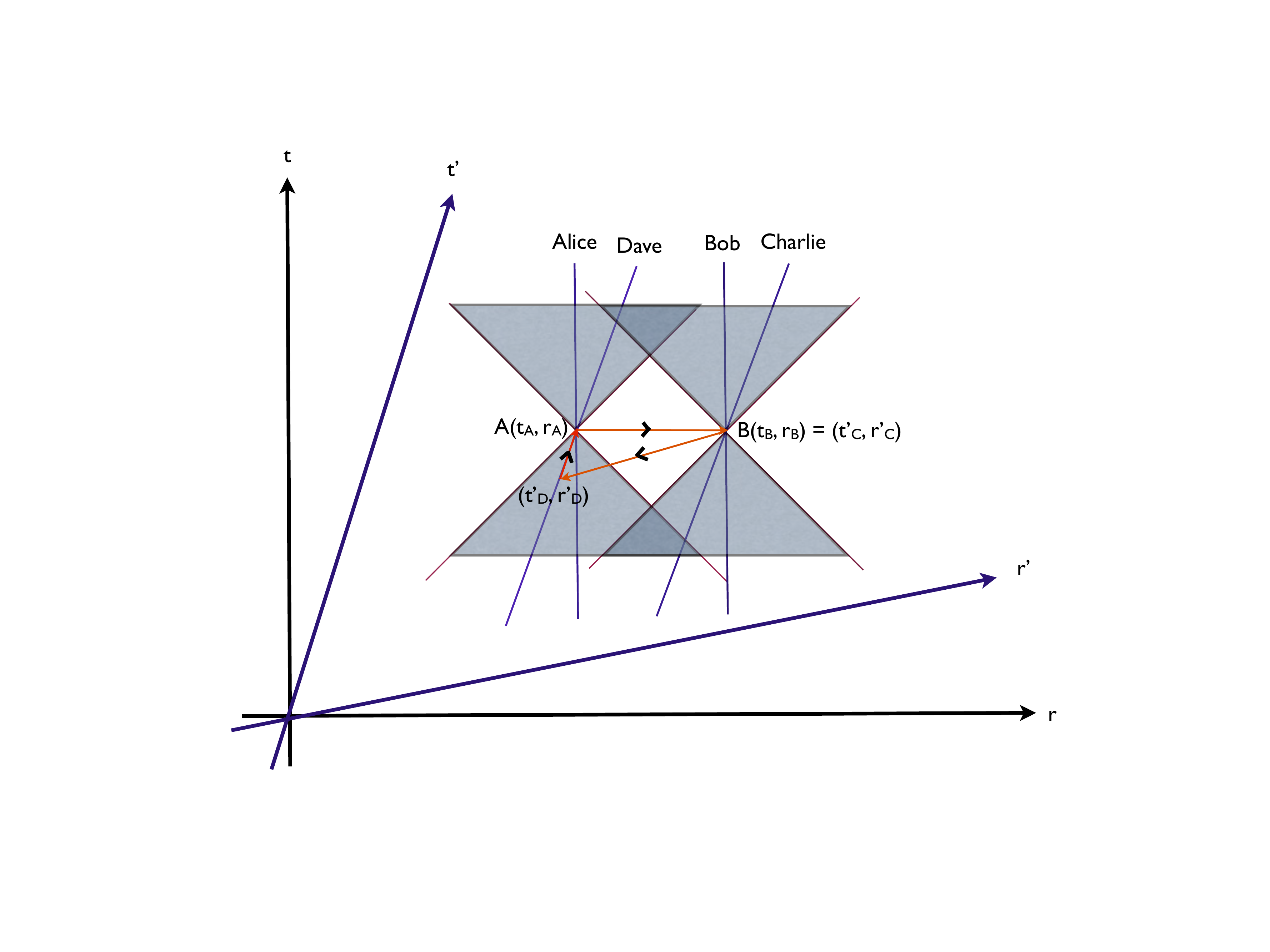}
\caption{An explicit violation of causality occurs when the two-party no-signaling condition in Eq.(\ref{eq:two-party-NS}) is violated as in Eq.(\ref{eq:Alice-signal-Bob}). In the figure, we consider the violation in terms of Bob's output distribution depending on Alice's input. Since Alice's input $x_1$ was chosen freely at $A$ $(t_A, \textbf{r}_A)$, this dependence implies a superluminal signal containing the information $x_1$ was transmitted from $A$ to $B$ at $(t_B, \textbf{r}_B)$, here the signal is shown to travel at infinite speed, i.e. parallel to the $t$-axis. The world-lines of Charlie and Dave who move at uniform velocity relative to Alice and Bob are also shown. At $(t_B, \textbf{r}_B) = (t'_C, \textbf{r}'_C)$, Charlie's world line interesects Bob who informs Charlie of $x_1$. Charlie then uses the same superluminal signal to send $x_1$ to Dave at $(t'_D, \textbf{r}'_D)$, this signal travels parallel to the $t'$-axis as shown. Note that $(t'_D, \textbf{r}'_D)$ is in the causal past of Alice, so Dave can send $x_1$ via a subluminal signal to Alice. Alice can then freely chose to not measure $x_1$ and measure $x'_1$ instead. Thus, we have a closed causal loop resulting in a violation of causality.}
	\label{fig:two-party-ccl}
\end{figure}
\end{center}

\begin{center}
\begin{figure}[t!]
		\includegraphics[width=0.5\textwidth]{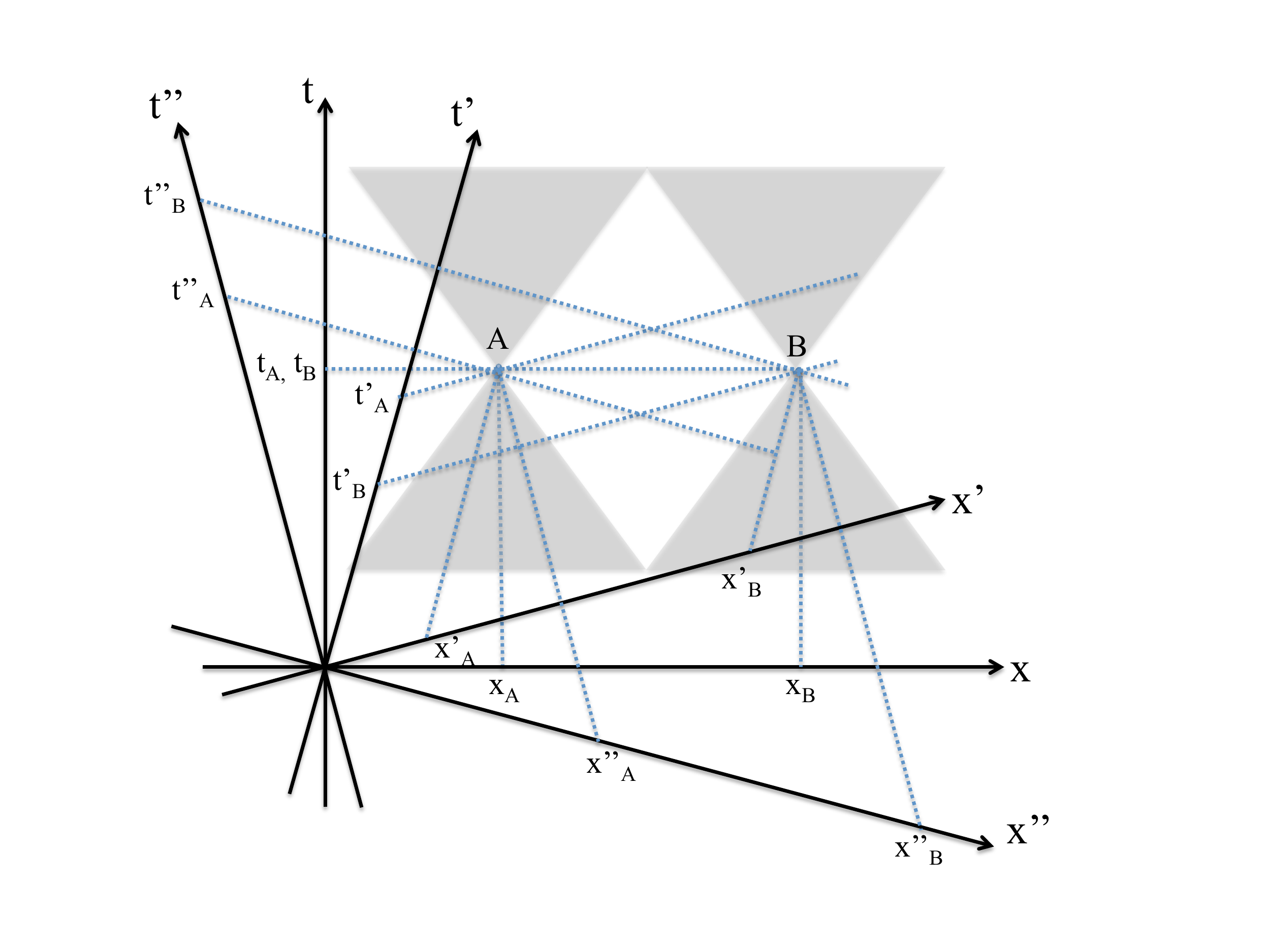}
\caption{In an inertial reference frame $(t,\textbf{r})$ the two (spatially separated) events $A$ and $B$ appear to occur simultaneously, i.e., $t_A = t_B$. In the inertial reference frame $(t', \textbf{r}')$, event $B$ occurs \textit{before} $A$, i.e., $t'_B < t'_A$. In another inertial reference frame $(t'', \textbf{r}'')$ on the other hand, event $A$ occurs before $B$, i.e., $t''_A < t''_B$. To maintain causality, we must have that the output distribution $P(a_1|x_1)$ at $A$ must be independent of the input $x_2$ at $B$, and similarly $P(a_2|x_2)$ must be independent of the input $x_1$ at $A$.}
	\label{fig:two-party-NS}
\end{figure}
\end{center}

This situation is illustrated in the space-time diagram of the measurement process in Fig. \ref{fig:two-party-NS}. As seen in the figure,  while causal relationships are manifestly Lorentz invariant, the specific time sequence of events changes in different inertial frames of reference, in particular spacelike separated events have no absolute time ordering between them. In the $(t,x)$ reference frame, the measurement events of Alice and Bob occur simultaneously. In the $(t',x')$ inertial reference frame, Bob's measurement precedes that of Alice ($t'_B < t'_A$), while in the $(t'', x'')$ reference frame, Alice's measurement precedes that of Bob ($t''_A < t''_B$). We see that the imposition of the constraints in Eq.(\ref{eq:two-party-NS}) prevents any causal loop (and resulting ``grandfather paradoxes"), or in other words relativistic causality is guaranteed by the two-party no-signaling principle. Note that the ``freedom-of-choice" condition that Alice and Bob are allowed to choose their measurements freely (using a private random number generator, for example) is necessary in the argument above to identify the choice of measurement by Alice as cause and observation by Bob of his output distribution as effect. Furthermore, remark that it is essential to consider different inertial reference frames to make the argument, no causality violation would occur if faster-than-light propagation occurred in only one reference frame. 

As shown by Eberhard in \cite{ER89}, the no-signaling requirements are satisfied in both non-relativistic quantum theory and in the relativistic quantum field theory. In quantum field theory, this requirement would be ensured by the vanishing of commutators of field operators $O_A$ and $O_B$ representing Alice and Bob's observables respectively, i.e., $[O_A, O_B] = 0$ \cite{ER89}. In the non-relativistic quantum theory that is usually used to analyze Bell experiments, the description of the Alice-Bob composite system by a density operator $\rho_{AB}$ in the tensor product space $\mathcal{H}_{A} \otimes \mathcal{H}_B$, the description of quantum operations by local Kraus operators acting on the respective Hilbert spaces and the partial trace rule ensure that the local statistics only depend on the reduced density matrices of the respective party, and no superluminal communication even using entangled states is possible. 
%in general the no-signaling conditions are inextricably intertwined with the notions of randomness and free-will \cite{Renner-Colbeck, CR}. 

%Remark that while at this level, both correlation and causation are indicated by the behavior of joint probability distributions of random variables, an approach is the recently developed mathematical theory of causality \cite{Pearl09}. No-Signaling is then understood more clearly as the prohibition of a causal relationship between random variables, namely that the inputs of one party do not influence the outputs of a space-like separated party. On the other hand, the non-local correlations between the outputs of the space-like separated measurements are, so far as is known, not due to any causal relationship between these variables. 

\subsection{Modification of the three-party no-signaling constraints to the relativistic causality constraints.}

%We now identify the necessary and sufficient conditions in the three-party Bell scenario that ensure that no causality violation occurs in the theory. 
The causal structure of the three-party Bell experiment is shown in Fig. \ref{fig:three-party-caus-struc}. Alice's spacetime random variables corresponding to her input $X$ and output $A$ are generated at spacetime location $(t_A, \textbf{r}_A)$ in inerial reference frame $\mathcal{I}$, similarly Bob's input-output $Y,B$ are generated at $(t_B, \textbf{r}_B)$ and Charlie's input-output $Z, C$ are generated at $(t_C, \textbf{r}_C)$. 
%It was first noticed by Grunhaus, Popescu and Rohrlich \cite{GPR} that the imposition of strict relativistic causality allows for ``jamming" effects, that the usual no-signaling constraints prohibit. 
We now investigate the question:
\begin{itemize}
\item \textit{What are the necessary and sufficient conditions in the three-party Bell scenario that ensure that no causal loops appear}? 
\end{itemize} 
We shall see that the answer depends upon the exact spacetime locations of the measurement events in the Bell experiment. 

\begin{prop}
\label{prop:rel-cau-constraints}
Consider the three-party measurement configuration shown in Fig. \ref{fig:three-party-meas-config}, where in an inertial frame $\mathcal{I}$, the spacetime locations of the measurement events $(t_A, \textbf{r}_A)$, $(t_B, \textbf{r}_B)$ and $(t_C, \textbf{r}_C)$ are such that the intersection of the future light cones of $A$ and $C$ is contained within the future light cone of $B$. 
% In an inertial frame $\mathcal{I}$, the three spacelike separated parties are arranged in $1$-D with $\textbf{r}_A < \textbf{r}_B < \textbf{r}_C$ and perform their measurements simultaneously, i.e., $t_A = t_B = t_C$. 
The necessary and sufficient constraints to ensure that no causality violation occurs in this configuration are given by
\begin{eqnarray}
\label{eq:rel-caus-3-party-1}
P_{B,C|Y,Z}(b,c|y,z) = \sum_{a} P_{A,B,C|X,Y,Z}(a,b,c|x,y, z) &=& \sum_{a} P_{A,B,C|X,Y,Z}(a,b,c|x',y,z) \; \; \; \forall x,x',y,z,b,c \nonumber \\
P_{A,B|X,Y}(a,b|x,y) = \sum_{c} P_{A,B,C|X,Y,Z}(a,b,c|x,y,z) &=& \sum_{c} P_{A,B,C|X,Y,Z}(a,b,c|x,y,z') \; \; \; \forall z,z',x,y,a,b \nonumber \\
P_{A|X}(a|x) = \sum_{b,c} P_{A,B,C|X,Y,Z}(a,b,c|x,y,z) &=& \sum_{b,c} P_{A,B,C|X,Y,Z}(a,b,c|x,y',z') \; \; \; \forall y,y',z,z',x,a \nonumber \\
P_{C|Z}(c|z) = \sum_{a,b} P_{A,B,C|X,Y,Z}(a,b,c|x,y,z) &=& \sum_{a,b} P_{A,B,C|X,Y,Z}(a,b,c|x',y',z) \; \; \; \forall x,x',y,y',z,c. \nonumber \\
\end{eqnarray}
%measurement events of Alice, Bob and Charlie occur at spacetime locations $(t_A, \textbf{r}_A)$, $(t_B, \textbf{r}_B)$ and $(t_C, \textbf{r}_C)$ respectively, which are such that $t_A = t_B = t_C$ and the 
\end{prop}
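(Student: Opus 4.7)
The plan is to mirror the proof strategy of Proposition~\ref{prop:two-party-NS}, adapted to the tripartite geometry of Fig.~\ref{fig:three-party-meas-config}. For necessity, I would show that a violation of any one of the four constraints enables the construction of a closed causal loop by the same Lorentz-boost argument used in the two-party case. For sufficiency, I would show that the only form of input dependence that is not ruled out by (1)--(4), namely the dependence of $P_{A,C|X,Y,Z}(a,c|x,y,z)$ on $y$, produces its observable effect only at a spacetime point lying in the future light cone of $B$, so that Bob's influence is subluminal and no closed loop can form.

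Denote the future light cone of a point $P$ by $J^{+}(P)$. For necessity, constraints (3) and (4) reduce directly to Prop.~\ref{prop:two-party-NS}: if, for instance, $P_{A|X,Y,Z}(a|x,y,z)$ depended on $y$ (or on $z$), then Alice's purely local statistics at $(t_A,\textbf{r}_A)$ would encode Bob's (respectively Charlie's) freely chosen input, and since $A$ is spacelike from both $B$ and $C$ the Lorentz-boost construction of Fig.~\ref{fig:two-party-ccl} immediately yields a causal loop. For constraints (1) and (2) the encoded input is accessible only through the joint statistics of two parties: if $P_{B,C|X,Y,Z}$ depends on $x$, this dependence can be decoded only at the earliest meeting point $\mathcal{C}_{B,C}$ of the future light cones of $B$ and $C$. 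I would then show geometrically that under the configuration hypothesis, together with the pairwise spacelike separation of $A,B,C$, the point $\mathcal{C}_{B,C}$ is spacelike from $A$, and reuse the Lorentz-boost construction with $A$ as sender and $\mathcal{C}_{B,C}$ as receiver. A symmetric argument handles constraint (2) via the spacelike separation of $C$ from $\mathcal{C}_{A,B}$.

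For sufficiency, the strategy is to enumerate the standard tripartite no-signaling conditions of Eq.~(\ref{eq:Multi-party-NS}) and check which are implied by (1)--(4). The single-party marginal $P_{B|X,Y,Z}=P_{B|Y}$ follows at once by summing constraint~(1) over $c$ and constraint~(2) over $a$, so the only standard condition that need not hold is the independence of $P_{A,C|X,Y,Z}$ from $y$. However, any operational test for such a $y$-dependence requires joint access to $a$ and $c$, and the earliest spacetime point at which both are available is $\mathcal{C}_{A,C}$. By the hypothesis $J^{+}(A)\cap J^{+}(C)\subseteq J^{+}(B)$ one has $\mathcal{C}_{A,C}\in J^{+}(B)$, so the implicit ``signal'' from Bob to the pair $(A,C)$ is subluminal. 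By transitivity of the causal future, any onward subluminal propagation from $\mathcal{C}_{A,C}$ remains within $J^{+}(B)$ and therefore cannot reach the causal past of $B$, preventing the loop closure required to invoke the Lorentz-boost argument.

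The main obstacle will be the short geometric lemma implicit in the necessity step: namely, that given $J^{+}(A)\cap J^{+}(C)\subseteq J^{+}(B)$ together with the pairwise spacelike separation of $A,B,C$, the correlation points $\mathcal{C}_{B,C}$ and $\mathcal{C}_{A,B}$ are themselves spacelike from $A$ and $C$ respectively. This is intuitive from Fig.~\ref{fig:three-party-meas-config} but requires an explicit coordinate check, and it is the only place where the precise configuration enters the necessity argument. A secondary subtlety in the sufficiency step is to rule out more elaborate chains that iterate the permitted $y$-dependence through further inferences; this is resolved by the transitivity observation above, since $\mathcal{C}_{A,C}$ is the unique spacetime location at which information about $y$ first becomes accessible, and this location already lies in $J^{+}(B)$.
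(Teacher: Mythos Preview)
Your proposal is correct and follows essentially the same route as the paper's proof: necessity of the single-party marginals via Prop.~\ref{prop:two-party-NS}, necessity of the $AB$ and $BC$ bipartite marginals via the fact that $\mathcal{C}_{AB}$ (resp.\ $\mathcal{C}_{BC}$) lies outside $J^{+}(C)$ (resp.\ $J^{+}(A)$), the derivation of $P_{B|Y}$ from constraints (1) and (2), and sufficiency via $\mathcal{C}_{AC}\in J^{+}(B)$. The paper likewise simply asserts the geometric fact you flag as the ``main obstacle'' by appeal to Fig.~\ref{fig:three-party-meas-config}, so you are if anything more scrupulous there. The only place where the paper is more explicit is your ``secondary subtlety'': rather than invoking transitivity abstractly, the paper spells out the iterated case by letting $A$ (or $C$) attempt to forward its output via the \emph{same} point-to-region superluminal mechanism to auxiliary parties $D,E$, and observes that this still only concentrates $A$'s output at some $A'\in J^{+}(A)$, whence $J^{+}(A')\cap J^{+}(C)\subseteq J^{+}(A)\cap J^{+}(C)\subseteq J^{+}(B)$; you may want to include that concrete case analysis to close the argument cleanly.
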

\begin{proof}
The constraints in Eq.(\ref{eq:rel-caus-3-party-1}) guarantee that the marginal distributions $P_{A|X}(a|x)$, $P_{C|Z}(c|z)$, $P_{A,B|X,Y}(a,b|x,y)$ and $P_{B,C|Y,Z}(b,c|y,z)$ are well-defined. Firstly, notice that the fact that the marginal $P_{B|Y}(b|y)$ is also well-defined for all $b,y$ is guaranteed as a consequence of these constraints by the relation
\begin{eqnarray}
\label{eq:three-party-b-marginal}
\sum_{a,c} P_{A,B,C|X,Y,Z}(a,b,c|x,y,z) &=& \sum_{a,c} P_{A,B,C|X,Y,Z}(a,b,c|x',y,z) \nonumber \\
&=& \sum_{a,c} P_{A,B,C|X,Y,Z}(a,b,c|x',y,z') \; \; \; \; \forall x,x',z,z',y,b, \nonumber \\
\end{eqnarray}
where we used the first and second equalities from Eq.(\ref{eq:rel-caus-3-party-1}) successively. The constraint that each party's marginal distribution is well-defined is necessary in order not to violate causality by the two-party result from Prop. \ref{prop:two-party-NS}, i.e., if either Alice's or Charlie's output statistics exhibited a dependence on Bob's input ($P_{A|X,Y}(a|x,y)$ or $P_{C|Z,Y}(c|z,y)$) then by the Prop. \ref{prop:two-party-NS}, Bob could signal his input $Y$ to his causal past and a causal loop would result.

We now move to the two-party distributions. Here, compared to the usual no-signaling constraints in Eq.(\ref{eq:Multi-party-NS}), the constraints in Eq.(\ref{eq:rel-caus-3-party-1}) only ensure that the A-B joint distribution $P_{A,B|X,Y}$ and the B-C joint distribution $P_{B,C|Y,Z}$ are well-defined independent of the remaining party's input. The fact that this is necessary can be seen as follows. In the measurement configuration in Fig. \ref{fig:three-party-meas-config}, the spacetime random variable $\mathcal{C}_{AB}$ corresponding to the correlations between the outcomes of $A$ and $B$ manifests itself at the intersection of the future light cones of $A$ and $B$, and in particular, $\mathcal{C}_{AB}$ can be verified outside the future light cone of $C$. Any dependence of the joint distribution $P_{A,B|X,Y}$ on the input $Z=z$ at $C$ can then result in a causal loop, following an analogous reasoning to the two-party result in Prop. \ref{prop:two-party-NS}. Similarly, we see that it is necessary for $P_{B,C|Y,Z}$ to be well-defined independent of the input $X=x$ at $A$. 

We now move to see the sufficiency of the four constraints in Eq.(\ref{eq:rel-caus-3-party-1}). To see this, we note that as opposed to the usual sufficient no-signaling constraints (\ref{eq:Multi-party-NS}), these constraints do not ensure that the marginal $P_{A,C|X,Z}(a,c|x,z)$ is well-defined. In other words, the joint output distribution of Alice and Charlie can explicitly depend upon Bob's input $Y = y$ as $P_{A,C|X,Y,Z}(a,c|x,y,z)$ even though the marginal distribution of each party does not depend on $y$. This implies that a superluminal influence propagated from Bob's system to the joint system of Alice and Charlie altering their joint distribution while keeping their marginal distributions unaffected. The spacetime random variable $\mathcal{C}_{AC}$ corresponding to the correlations between the outputs of $A$ and $C$ only manifests itself at the intersection of the future light cones of $A$ and $C$. Briefly, when this intersection is contained within the future light cone of $B$, $\mathcal{C}_{AC}$ is timelike separated from $B$ and hence any influence of $\mathcal{C}_{AC}$ by the choice of input at $B$ does not lead to signaling. More formally, the argument is stated as follows. 
%Let us see that this does not lead to any causal loops in the measurement configuration in Fig. \ref{fig:three-party-meas-config}. 

We want that the choice of input $Y=y$ at $B$ does not signal to any space-time location $S$ via the change of correlations $\mathcal{C}_{AC}$. Now, two possibilities (attempts to signal to $S$ via $\mathcal{C}_{AC}$) arise. 

\begin{enumerate}
\item Alice at $A$ and Charlie at $C$ transmit their outputs by a light signal at speed $c$ to $S$. In this case, $S$ must be contained in the intersection of the future light cones of $A$ and $C$. Now, the crucial property of the measurement configuration of Fig. \ref{fig:three-party-meas-config} is that the intersection of the future light cones of $A$ and $C$ is contained within the future light cone of $B$. Therefore, $S$ is timelike separated from $B$. Formally, the sum of the time taken for a superluminal influence (at any arbitrary speed $u>c$) to move from $B$ to $A$ and the time taken for a light signal (at speed $c$) to move from $A$ to $S$ is less than the time taken for a light signal at speed $c$ to travel from $B$ to $S$ directly. 
%To elaborate, $B$ may superluminally influence the measurement at $A$ which subsequently signals to the event $S$ or alternatively, $E$ may signal to $S$ (signaling being by definition at speed $c$).
%Our task is to show that for any $S$ with space coordinate $\textbf{r}_S$, the first signal does not reach $S$ before the second. 
A similar condition holds for the influence traveling via $C$. Mathematically, these constraints are captured by the following equations
\begin{eqnarray}
\tau_{B \xrightarrow{u} A} + \tau_{A \xrightarrow{c} S} < \tau_{B \xrightarrow{c} S} \nonumber \\
\tau_{B \xrightarrow{u} C} + \tau_{C \xrightarrow{c} S} < \tau_{B \xrightarrow{c} S}
\end{eqnarray} 
where $\tau_{B \xrightarrow{u} A}$ denotes the time taken for a signal at speed $u > c$ to travel from $B$ to $A$, $\tau_{A \xrightarrow{c} S}$ denotes the time taken for a light signal (at speed $c$) to travel from $A$ to $S$, and so on. The fact that $S$ is always in the future light cone of $B$ ensures that no superluminal transmission of information takes place. 

\item Alternatively, Alice (or Charlie or both) transmits her output $A$ via a subsequent measurement $X' = A$ and a similar superluminal influence as Bob did, i.e., Alice superluminally influences the measurement events at two spacelike separated locations $D$ and $E$ (with the property that the intersection of the future light cones of $D$ and $E$ is contained within the future light cone of $A$) in such a way as to change the joint distribution $P_{D,E|U,V,X' = A}$ while retaining the marginal distributions $P_{D|U}$ and $P_{E|V}$. If $D$ and $E$ subsequently send a light signal, then as in the previous argument, the output at $A$ is available only in a location $A'$ in the future light cone of $A$. The intersection of the future light cones of $A'$ and $C$ is directly seen to be contained within the future light cone of $A$ and $C$ which is in turn contained within the future light cone of $B$, so that the information about the outputs at $A$ and $C$ is again only available at a timelike separation from $B$. Similarly, if $D$ and $E$ subsequently send a superluminal influence as Bob did, then we repeat the argument to see that any concentration of information only occurs within the future light cone of $B$.      
\end{enumerate}
We therefore deduce that in measurement configurations such as in Fig. \ref{fig:three-party-meas-config}, the constraints in Eq.(\ref{eq:rel-caus-3-party-1}) are necessary and sufficient as opposed to the usual three-party no-signaling constraints from Eq.(\ref{eq:Multi-party-NS}) which while being sufficient are not strictly necessary in order to prevent a violation of causality.

\end{proof}

\begin{Eg}
As a simple illustrative example of the above, consider a box $P_{A,B,C|X,Y,Z}$ with binary inputs and binary outputs for the three parties of the form
\begin{eqnarray}
P_{A,B,C|X,Y,Z}(0,0,0|x,0,z) &=& P_{A,B,C|X,Y,Z}(1,0,1|x,0,z) = \frac{1}{2} \; \; \; \forall x, z \in \{0,1\} \nonumber \\
P_{A,B,C|X,Y,Z}(0,0,0|x,1,z) &=& P_{A,B,C|X,Y,Z}(1,0,1|x,1,z) = \frac{1}{2} \; \; \; (x,z) \in \{(0,0), (0,1), (1,0) \}\nonumber \\
P_{A,B,C|X,Y,Z}(0,0,1|1,1,1) &=& P_{A,B,C|X,Y,Z}(1,0,0|1,1,1) = \frac{1}{2}.
\end{eqnarray}
This box deterministically outputs $b=0$ for both inputs $y = 0,1$ and has marginals $P_{A,C|X,Y,Z}$ of the form 
\[ P_{A,C|X,Y,Z}(a,c|x,y,z) =  \begin{cases} 
     P_{A,C|X,Z}^{\text{l}}(a,c|x,z) & y=0\\
      P_{A,C|X,Z}^{\text{PR}}(a,c|x,z) & y=1.
   \end{cases}
\]
Here $P_{A,C|X,Z}^{\text{l}}(a,c|x,z)$ is a local box that returns correlated outcomes $a=c$ for any input $x,z$ with uniform marginals $P_{A|X}(a|x) = P_{C|Z}(c|z) = \frac{1}{2}$ for $a,c \in \{0,1\}$  (such a local box is generated by shared randomness between $A$ and $C$), and $P_{A,C|X,Z}^{\text{PR}}(a,c|x,z)$ is the Popescu-Rohrlich box \cite{PR} that returns outcomes satisfying $a \oplus c = x \cdot z$ also with uniform marginals $P_{A|X}$ and $P_{C|Z}$. We see that such a box $P_{A,B,C|X,Y,Z}$ explicitly satisfies the relativistic causality constraints from Prop. \ref{prop:rel-cau-constraints} since the local marginals $P_{A|X}$, $P_{B|Y}$ and $P_{C|Z}$ are well-defined (independent of the inputs of the other party), and the two-party marginals $P_{A,B|X,Y}$ and $P_{B,C|Y,Z}$ are given explicitly as
\begin{eqnarray}
P_{A,B|X,Y}(0,0|x,y) &=& P_{A,B|X,Y}(1,0|x,y) = \frac{1}{2} \; \; \; \forall x,y \in \{0,1\}  \nonumber \\ 
P_{B,C|Y,Z}(0,0|y,z) &=& P_{B,C|Y,Z}(0,1|y,z)  = \frac{1}{2} \; \; \;  \forall y,z \in \{0,1\}
\end{eqnarray}
which are also well-defined, independent of the input of the remaining party. 

This relativistically causal box maximally saturates the three-party Mermin-type expression $a \oplus b \oplus c = x \cdot y \cdot z$ (even beyond the customary promise on the input set $(x,y,z)$), since we have for $y=0$ that $a \oplus b \oplus c = a \oplus c = 0$ and for $y = 1$ that $a\oplus b \oplus c = a \oplus c = x \cdot z$. However, despite this maximal violation, notably in this box, the outputs $B$ are always deterministic.  
\end{Eg} 

\begin{center}
\begin{figure}[t]
		\includegraphics[width=0.75\textwidth]{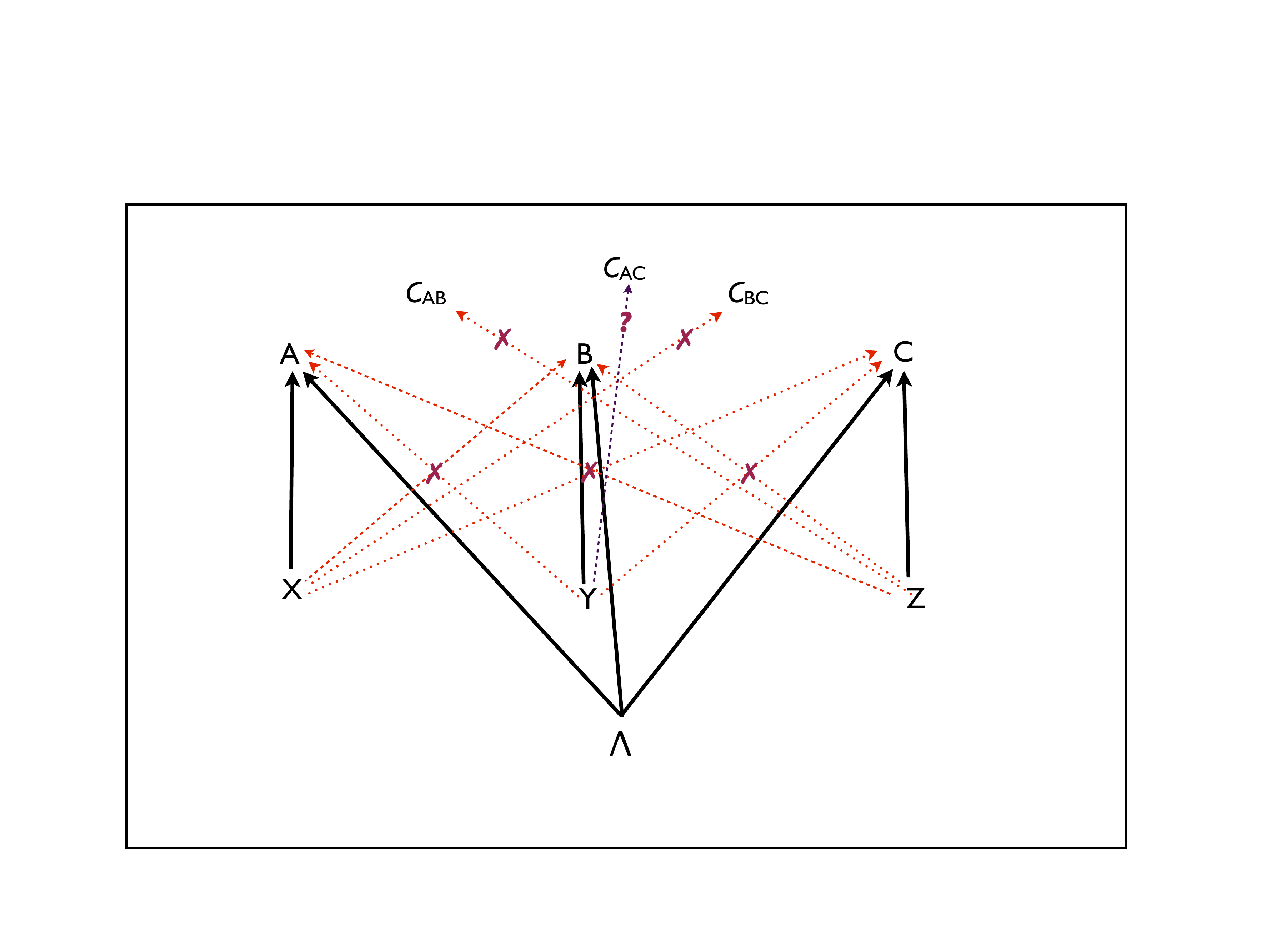}
\caption{The causal structure of the three-party Bell experiment. The outputs $A, B, C$ are correlated via a common $\Lambda$. The inputs $X, Y, Z$ are chosen freely according to a notion of free-choice (see Section \ref{sec:freewill}). The input $X$ cannot signal to change the distribution of a remote party's output $B$ or $C$, analogously for inputs $Y$ and $Z$. On the other hand, depending upon the spacetime configuration of the measurement events, the input may influence the correlations between remote parties' outputs without violating causality. In particular, in the measurement configuration shown in Fig. \ref{fig:three-party-meas-config}, the input $Y$ may influence the correlations $\mathcal{C}_{AC}$.}
	\label{fig:three-party-caus-struc}
\end{figure}
\end{center}
%The random variable $Y$ is said to be free if it is only correlated with random variables 
%%Now, what are the constraints that ensure that no causal loops occur, i.e., no superluminal signaling from a sender to a receiver takes place? To derive these constraints, we have to ensure that in every inertial reference frame, an effect does not precede its cause \textit{provided the cause and effect are definite space-time events}. Equivalently, only a random variable $W$ that is generated in the future light cone of a party's measurement event (for example, Bob's measurement at $(t_B^{(j)}, \textbf{r}_B^{(j)})$) can guess that party's input ($y \in \mathcal{Y}$). As we shall see, the analogous conditions to the free-will condition (\ref{eq:free-will}) from the two-party scenario can now be formulated depending on the space-time locations of the three parties' measurement events. In particular, we will identify a region of space-time where no causal loops occur, even when Bob by his choice of input is able to influence the correlations between Alice and Charlie's outputs.  

\begin{center}
\begin{figure}[t!]
		\includegraphics[width=0.85\textwidth]{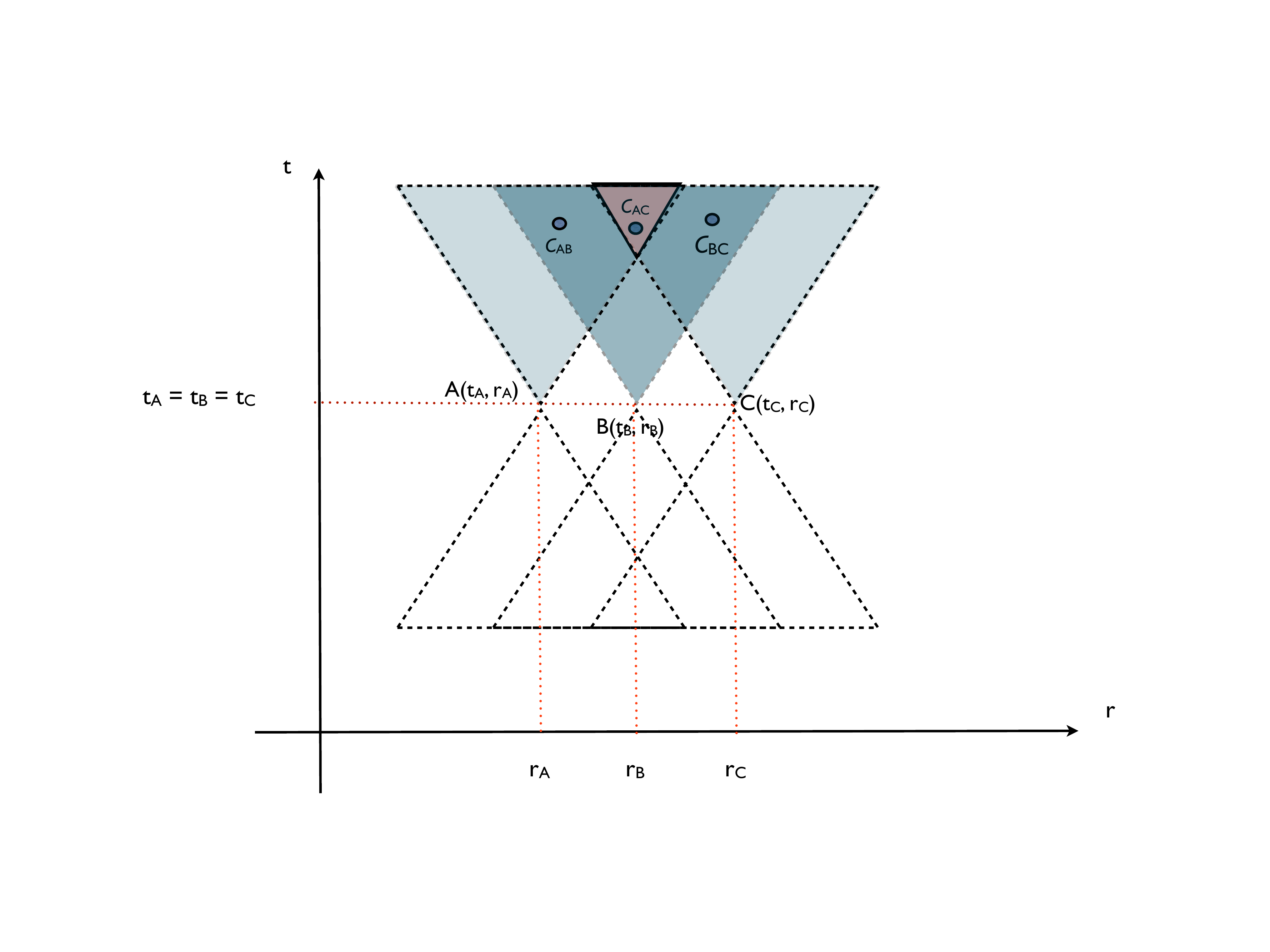}
\caption{A particular spacetime configuration of measurement events in the three-party Bell experiment. The spacetime locations of Alice, Bob and Charlie's measurement events in some inertial reference frame $\mathcal{I}$ labeled by axes $(t,\textbf{r})$ are given by $(t_A, \textbf{r}_A)$, $(t_B, \textbf{r}_B)$ and $(t_C, \textbf{r}_C)$ respectively. The correlations between the outputs of Alice and Bob is denoted by a spacetime random variable $\mathcal{C}_{AB}$ which manifests itself (is checked by Alice and Bob) at a spacetime location within the intersection of the future light cones of Alice and Bob. Similarly, the correlations between the outputs of the other pairs of parties is denoted by $\mathcal{C}_{AC}$ and $\mathcal{C}_{BC}$. The crucial property of this measurement configuration is that the intersection of the future light cones of $A$ and $C$ is contained within the future light cone of $B$.}
	\label{fig:three-party-meas-config}
\end{figure}
\end{center}

%Let us then deal with the more usual (laboratory) situation of a fixed $\textbf{r}_A, \textbf{r}_B, \textbf{r}_E$ and some given $\textsl{t}_E$ and consider the times $\textsl{t}_A, \textsl{t}_B$ (both greater than $\textsl{t}_E$) beyond which an Eve with a fixed speed of influence $u$ is able to affect the correlations between the $A$ and $B$ systems.
Having seen that in certain measurement configurations, the set of constraints in Eq.(\ref{eq:rel-caus-3-party-1}) as opposed to the full set of no-signaling constraints is already necessary and sufficient to ensure no violation of causality, we now identify the spacetime configurations where this occurs. To do this, we consider the situation where the superluminal influence propagates at some fixed speed $u > c$ in an inertial reference frame $\mathcal{I}$. We slightly alter the notation and consider the situation from a cryptographic perspective with Bob $B$ replaced by Eve $E$ with input-output $W, E$ and Alice-Charlie $A-C$ replaced by Alice-Bob $A-B$. Thus, we fix the spacetime coordinates $(t_A, \textbf{r}_A)$ and $(t_B, \textbf{r}_B)$ of Alice and Bob, and investigate from which spacetime location $(t_E, \textbf{r}_E)$ Eve is able to influence the correlations $P_{A,B|X,Y,W}(a,b|x,y,w)$ without affecting the marginal distributions $P_{A|X}$ and $P_{B|Y}$, i.e., in which measurement configurations (\ref{eq:rel-caus-3-party-1}) is necessary and sufficient to prevent the violation of causality. Accordingly, our result is then the following Proposition \ref{prop:spacetime-region} whose proof is in the Supplemental Material.   

%%%%Here $\textit{u}$ is then formally equal to
%%\begin{equation}
%%u :=   \frac{|\textbf{r}_A - \textbf{r}_E|}{\textsl{t}_{A} - \textsl{t}_E} = \frac{|\textbf{r}_B - \textbf{r}_E|}{\textsl{t}_B - \textsl{t}_E}.
%%\end{equation}
%%Given the fixed speed of influence $u$ and fixed laboratory locations $\textbf{r}_A, \textbf{r}_B$ our problem is to identify for which $(\textbf{r}_E, \textsl{t}_E)$ there exist times $\textsl{t}_A, \textsl{t}_B$ such that an influence from $E$ on the correlations between the events at $A$ and $B$ respects relativistic causality. 
\begin{prop}
\label{prop:spacetime-region}
Consider measurement events $A, B$ with corresponding spacetime coordinates $(t_A,\textbf{r}_A)$, $(t_B, \textbf{r}_B)$ in a chosen inertial reference frame $\mathcal{I}$. Then a measurement event $E$ can superluminally influence the correlations between $A$ and $B$ at speed $u > c$ without violating causality in $\textsl{I}$ if and only if its space coordinate $\textbf{r}_E$ satisfies
\begin{equation}
\textbf{r}_E \in Seg(\bigcirc(AB; \varphi_{\alpha}))
\end{equation}  
for any circle $\bigcirc$ with $AB$ as a chord and having angle $\varphi_{\alpha}$ as the angle in the corresponding minor segment, where $\varphi_{\alpha} = \pi - 2 \arcsin(\alpha)$ and $\alpha = c/u$, and if its time coordinate $t_E$ satisfies 
\begin{equation}
t_E \leq \min \left[ t_A - \frac{|\textbf{r}_A - \textbf{r}_E|}{u}, t_B - \frac{|\textbf{r}_B - \textbf{r}_E|}{u} \right]. 
\end{equation} 
\end{prop}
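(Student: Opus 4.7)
The plan is to derive both conditions from the geometric causality constraint identified in Prop.~\ref{prop:rel-cau-constraints}, namely $J^{+}(A)\cap J^{+}(B)\subseteq J^{+}(E)$, with $E$ now playing the role of the central influencer (formerly Bob) and $A,B$ the flanking parties (formerly Alice and Charlie).

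For the temporal bound I first observe that Eve's superluminal influence at speed $u$ must arrive at $(t_A,\mathbf{r}_A)$ no later than $t_A$ in order to affect Alice's measurement, and analogously for $B$. Since the signal covers the spatial distance $|\mathbf{r}_A-\mathbf{r}_E|$ in time $|\mathbf{r}_A-\mathbf{r}_E|/u$, this forces $t_E \leq t_A - |\mathbf{r}_A-\mathbf{r}_E|/u$ and the analogous bound from $B$; their minimum is the claimed condition.

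For the spatial condition I translate $J^{+}(A)\cap J^{+}(B)\subseteq J^{+}(E)$ into a uniform inequality over $\mathbf{r}_S$. At each spatial location the earliest event of the intersection occurs at time $t_S^{\min}(\mathbf{r}_S) = \max\bigl(t_A + |\mathbf{r}_A - \mathbf{r}_S|/c,\; t_B + |\mathbf{r}_B - \mathbf{r}_S|/c\bigr)$, and the causal containment amounts to $|\mathbf{r}_S - \mathbf{r}_E| \leq c\bigl(t_S^{\min}(\mathbf{r}_S) - t_E\bigr)$ for every $\mathbf{r}_S$. Evaluating at the saturated temporal bound and writing $\alpha = c/u$, I would analyze this inequality in the asymptotic regime $|\mathbf{r}_S|\to\infty$ via the expansion $|\mathbf{r}_S - \mathbf{r}_P| = |\mathbf{r}_S| - \hat n \cdot \mathbf{r}_P + O(|\mathbf{r}_S|^{-1})$ with $\hat n = \mathbf{r}_S/|\mathbf{r}_S|$. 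By rotational symmetry about the line through $A$ and $B$, the worst-case $\hat n$ lies in the plane containing $\mathbf{r}_A,\mathbf{r}_B,\mathbf{r}_E$, reducing the problem to the planar geometry of the triangle $\triangle ABE$.

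The asymptotic analysis converts the containment into a pair of base-angle constraints, $\angle EAB \leq \arcsin\alpha$ and $\angle EBA \leq \arcsin\alpha$. Summing these and applying the angle-sum identity in $\triangle ABE$ yields $\angle AEB \geq \pi - 2\arcsin(\alpha) = \varphi_\alpha$. By the converse of the inscribed-angle theorem, the locus of points $\mathbf{r}_E$ satisfying this (on a given side of $AB$) is exactly the minor segment of the unique circle through $A$ and $B$ for which the inscribed angle in that minor segment equals $\varphi_\alpha$; symmetrising across $AB$ gives $\mathbf{r}_E \in \mathrm{Seg}(\bigcirc(AB;\varphi_\alpha))$. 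The main technical obstacle will be justifying that the asymptotic directions capture the strictest requirement on $\mathbf{r}_E$ (as opposed to a finite-$S$ extremum or a more elaborate multi-step superluminal chain of the type treated in the second case of the proof of Prop.~\ref{prop:rel-cau-constraints}), together with the case analysis distinguishing whether the perpendicular foot of $\mathbf{r}_E$ onto the line through $A,B$ lies inside or outside the segment $AB$. The two directions of the biconditional then follow: necessity by exhibiting an explicit causality-violating signal when the inscribed-angle inequality fails, and sufficiency by direct appeal to Prop.~\ref{prop:rel-cau-constraints}.
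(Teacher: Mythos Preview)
Your overall strategy---reducing the causality requirement to the light-cone containment $J^{+}(A)\cap J^{+}(B)\subseteq J^{+}(E)$ and then extracting an angular condition at $E$ via inscribed angles---is sound, and the temporal bound is handled correctly. But the intermediate step is wrong: the asymptotic analysis does \emph{not} produce the base-angle constraints $\angle EAB\le\arcsin\alpha$ and $\angle EBA\le\arcsin\alpha$. If you carry out the expansion you describe with $E$ at the origin and both temporal bounds saturated, the condition $\max(\alpha|EA|+|AS|,\alpha|EB|+|BS|)\ge|ES|$ for all $S$ becomes, in the limit $|S|\to\infty$ along $\hat n$, the statement that no unit vector $\hat n$ satisfies both $\hat n\cdot\hat{EA}>\alpha$ and $\hat n\cdot\hat{EB}>\alpha$. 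This is a non-overlap condition for two spherical caps of angular radius $\arccos\alpha$ centred on $\hat{EA}$ and $\hat{EB}$, and it is equivalent \emph{directly} to the apex-angle bound $\angle AEB\ge 2\arccos\alpha=\varphi_\alpha$, not to bounds on the base angles. This matters because your stated pair of base-angle inequalities cuts out a strictly smaller region (an intersection of two cones from $A$ and $B$) than the circular segment; ``summing'' them to reach $\angle AEB\ge\varphi_\alpha$ throws away information and would not establish the biconditional.

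For comparison, the paper's proof avoids the asymptotic expansion altogether: it shows that the set of $S$ reachable superluminally via $A$ lies inside a cone with apex $E$, axis $EA$, and half-angle $\arccos\alpha$ (constructed as the cone through the tangent circle of $B(E;\alpha|EA|)$ from $A$), and likewise for $B$; the non-overlap of these two cones is precisely $\angle AEB\ge\varphi_\alpha$. Your corrected asymptotic argument would reproduce this same apex-angle condition, and the remaining ``technical obstacle'' you flag---that finite $S$ do not impose a stricter requirement---amounts to checking that the hyperboloid sheet $\{S:|ES|-|AS|>\alpha|EA|\}$ is contained in that cone, which follows since the hyperboloid lies inside its asymptotic cone and the latter (apex at the midpoint of $EA$) lies inside the paper's cone (apex at $E$), both having the same axis and half-angle.
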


Outside of the spacetime region in Proposition \ref{prop:spacetime-region}, the usual no-signaling constraints from (\ref{eq:Multi-party-NS}) turn out to be necessary and sufficient to ensure causality is preserved. Figure \ref{fig:three-party-caus-struc} shows the causal structure represented by the three-party Bell experiment represented in terms of a Directed Acyclic Graph (DAG) \cite{Pearl09}. We can now consider the implications of the spacetime structure on the causal relationship, namely we can consider the study of Bayesian networks of spacetime random variables (SRVs) in light of the considerations of this section, a study which we pursue in future work. Recall that a Bayesian network, or probabilistic directed acyclic graphical model is a probabilistic graphical model (a type of statistical model) that represents a set of random variables and their conditional dependencies via a directed acyclic graph (DAG). Now, in the  DAG that represents a causal structure in the three-party scenario, an additional ingredient of a new type of edge representing a causal link between the spacetime variable $Y$ and the effects $A, C$ would be added depending on the spacetime location of the measurement events. This edge represents the causal link that $Y$ does not influence the marginal distributions of $A$ and $C$ themselves but changes the joint distribution of $A,C$. The additional ingredient when considering spacetime variables is that the random variable $\mathcal{C}_{AC}$ representing the correlations between $A$ and $C$ is only created in the future light cone of $Y$ provided that the intersection of the future light cones of $A$ and $C$ is contained within the future light cone of $Y$. The causal network corresponding to a particular spacetime arrangement of measurement events would incorporate the additional edges as possible influences that do not lead to superluminal signaling and respect causality.

\subsection{Multi-party Relativistic Causality.}
\label{subsec:multi-party-rel-caus}
We now extend to the general $n$-party scenario the considerations of the previous subsections. In the mutli-party case, different subsets of the usual no-signaling constraints are sufficient to preserve causality, depending on the spacetime locations of the measurement events of the $n$ parties.
% in the measurement configuration. 
Here, we focus on an extension in the simplest $1+1$-dimensional scenario, when the parties are arranged in a line in some reference frame.

%\subsection{Multi-party No-Signaling.}
%We now extend to the multi-party scenario the above possible space-time region that respects relativistic causality while influencing non-local correlations at superluminal velocities. 
Consider $n-1$ parties with respective fixed laboratory space coordinates $\textbf{r}_1, \dots, \textbf{r}_{n-1}$ in frame $I$. Let us identify the possible space-time region $(\textsl{t}_n, \textbf{r}_n)$ from which a party $n$ can influence the correlations between the systems of the $n-1$ parties. Clearly, the time coordinates $\textsl{t}_i$ of the $n-1$ parties at which the influence traveling at $u$ in frame $I$ can be felt are given by 
\begin{equation}
t_i > t_n + \frac{|\textbf{r}_i - \textbf{r}_n|}{u} \; \; \; \; \; \forall i \in [n-1].
\end{equation}
By the results of the previous section, we know that to influence two-party correlations for parties $i$ and $j$ where $i, j \in [n-1]$, $\textbf{r}_n$ must belong to 
\begin{equation}
\textbf{r}_n \in R_{i,j} := Seg(\bigcirc(\textbf{r}_i - \textbf{r}_j; \varphi_{\alpha})). 
\end{equation} 
%%Consider the following subset of $R_{i,j}$
%%\begin{equation}
%%\tilde{R}_{i,j} := \{ \textbf{r}_n \in R_{i,j} \vert |\textbf{r}_n - \textbf{r}_k| \leq \max \left( |\textbf{r}_n - \textbf{r}_i|, |\textbf{r}_n - \textbf{r}_j|\right) \}.
%%\end{equation}
All points in the intersection of $R_{i,j}$ over all pairs $i,j \in [n-1]$, i.e., $\textbf{r}_n \in  \bigcap_{i,j} R_{i,j}$ satisfy the two conditions: (i) $(t_i, \textbf{r}_i)$ is contained in the future $u$-cone of $(t_n, \textbf{r}_n)$ for all $i \in [n-1]$, and (ii) the intersection of the future light cones of the $n-1$ parties is entirely contained within the future light cone of the $n$-th party. 
%%This gives the admissible space-time region as the union of $\tilde{R}_{i,j}$ over all pairs $i, j \in [n-1]$, i.e., $R = \bigcup_{i,j} \tilde{R}_{i,j}$.  

In particular, when the $n$ parties are in a line, the relativistic causality constraints (the subset of the usual no-signaling constraints needed to ensure that causality is not violated) are given as follows. Let $S_{m,k}^{n} \subset [n]$ denote a contiguous subset of $[n] = \{1, \dots, n\}$ of size $k$ with initial element $m$, i.e., $S_{m,k}^{n} = \{m, m+1, \dots, m+k-1\}$ for some $1 \leq m \leq n-k+1$. Clearly, the number of such contiguous subsets for fixed $k$ is $n-k+1$, and the total number of contiguous subsets of all possible sizes $k$ is $\sum_{k=1}^{n-1} (n-k+1) = \frac{n^2 + n -2}{2}$. For a string of outputs $\textbf{a}$ of the $n$ parties, let $\textbf{a}_{S_{m,k}^{n}}$ denote the substring of outputs of the parties belonging to the set $S_{m,k}^{n}$, i.e., $\textbf{a}_{S_{m,k}^{n}} = \{\textbf{a}_i\}$ with $i \in S_{m,k}^{n}$ and let $\textbf{a}_{(S_{m,k}^{n})^{c}}$ denote the outputs of the complementary set of parties. Similarly, let $\textbf{x}_{S_{m,k}^{n}}$ denote the inputs of the parties in the set $S_{m,k}^{n}$ and $\textbf{x}_{(S_{m,k}^{n})^{c}}$ denote the inputs of the complementary set. Then the relativistic causality constraints for the $n$ parties stationed in $1$-D are given by
\begin{widetext}
\begin{eqnarray}
\label{eq:rel-caus}
P(\textbf{a}_{S_{m,k}^{n}} | \textbf{x}_{S_{m,k}^{n}} ) = \sum_{\textbf{a'}_{(S_{m,k}^{n})^{c}}} P(\textbf{a'} | \textbf{x'}) = \sum_{\textbf{a''}_{(S_{m,k}^{n})^{c}}} P(\textbf{a''} | \textbf{x''})  \qquad \forall 1 \leq k \leq n-1, 1 \leq m \leq n-k+1
\end{eqnarray}
\end{widetext}
for all $\textbf{a'}, \textbf{a''}$ with $\textbf{a'}_{S_{m,k}^{n}} = \textbf{a''}_{S_{m,k}^{n}} = \textbf{a}_{S_{m,k}^{n}}$ and for all $\textbf{x'}, \textbf{x''}$ with $\textbf{x'}_{S_{m,k}^{n}} = \textbf{x''}_{S_{m,k}^{n}} = \textbf{x}_{S_{m,k}^{n}}$.
%\textbf{x}^{(k,m)} = (x_1, \dots, x_n)$ and $\textbf{x'}^{(k,m)} = (x'_1, \dots, x'_{m-1}, x_m, \dots, x_{m+k-1}, x'_{m+k}, \dots, x'_n)$. 
In other words, the marginal distribution of the outputs for a contiguous subset of parties' inputs is independent of the complementary set of parties' inputs. On the other hand, for a non-contiguous subsets of parties, the joint probability distribution of their outputs can depend on the inputs of the complementary set, i.e., the parties in between can change the marginal distributions by their choice of inputs while still respecting relativistic causality. 

\begin{center}
\begin{figure}[t!]
		\includegraphics[width=0.85\textwidth]{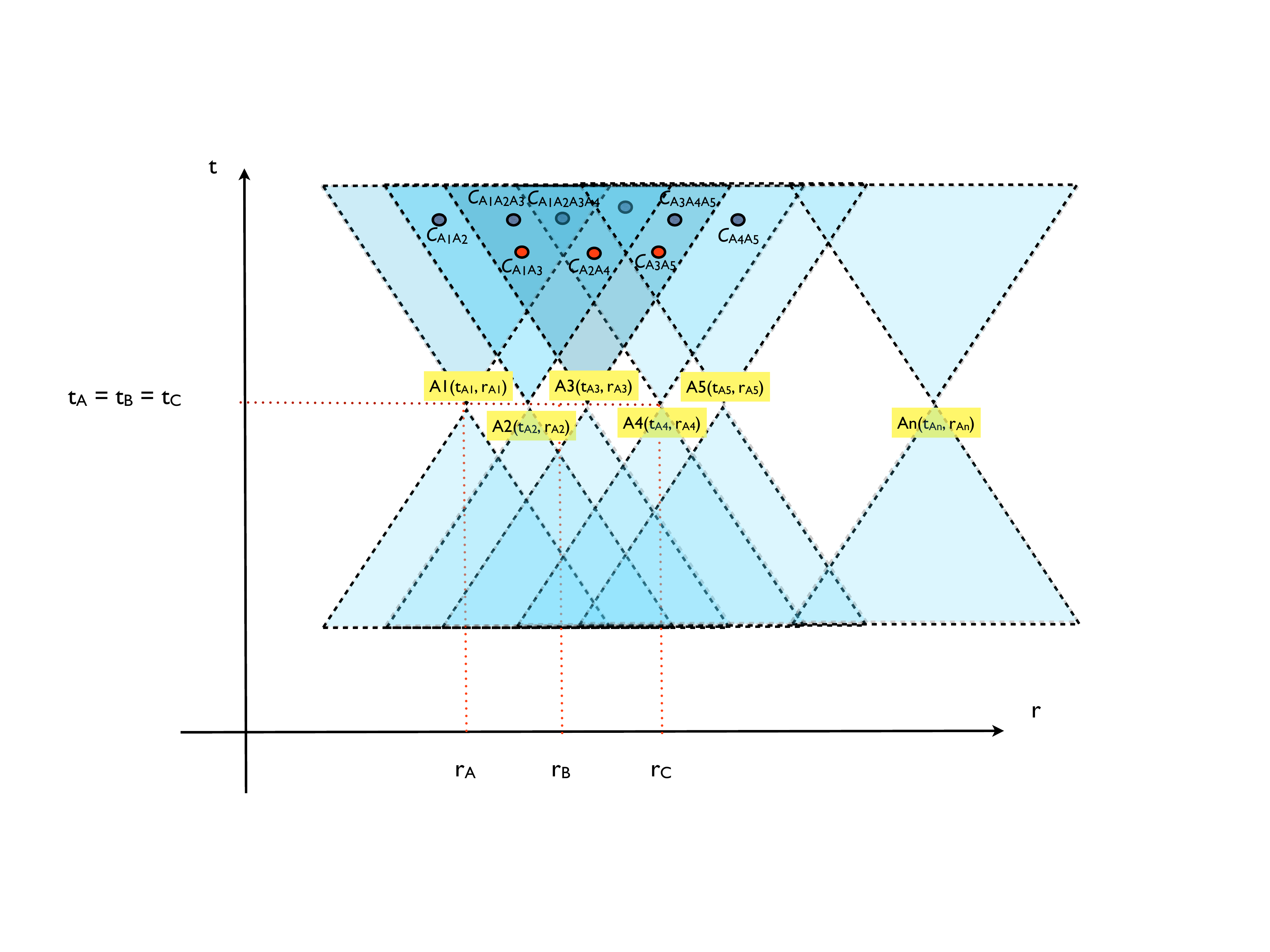}
\caption{A particular spacetime configuration of measurement events in the n-party Bell experiment. The correlations between the observables $\mathcal{C}_{A_{j_1}, \dots, A_{j_k}}$ manifest themselves at the intersection of the future light cones of $A_{j_1}, \dots, A_{j_k}$. For instance, as shown in the Figure, the correlations $\mathcal{C}_{A_1,A_3}$ are seen in the intersection of $A_1$ and $A_3$ which in this measurement configuration lies entirely within the future light cone of $A_2$. As such, any dependence of the joint distribution of outcomes $A_1$ and $A_3$ on the measurement input at $A_2$ does not lead to any causal loops as shown in the Proposition \ref{prop:rel-cau-constraints}.}
	\label{fig:n-party-meas-config}
\end{figure}
\end{center}

The causal structure of the $n$-party Bell experiment is depicted in Fig. \ref{fig:n-party-caus-struc}, where the state of the system is denoted by $\Lambda$, the input and output of the $i$-th party by $X_i, A_i$ for $i \in [n]$.  

\begin{center}
\begin{figure}[t!]
		\includegraphics[width=0.5\textwidth]{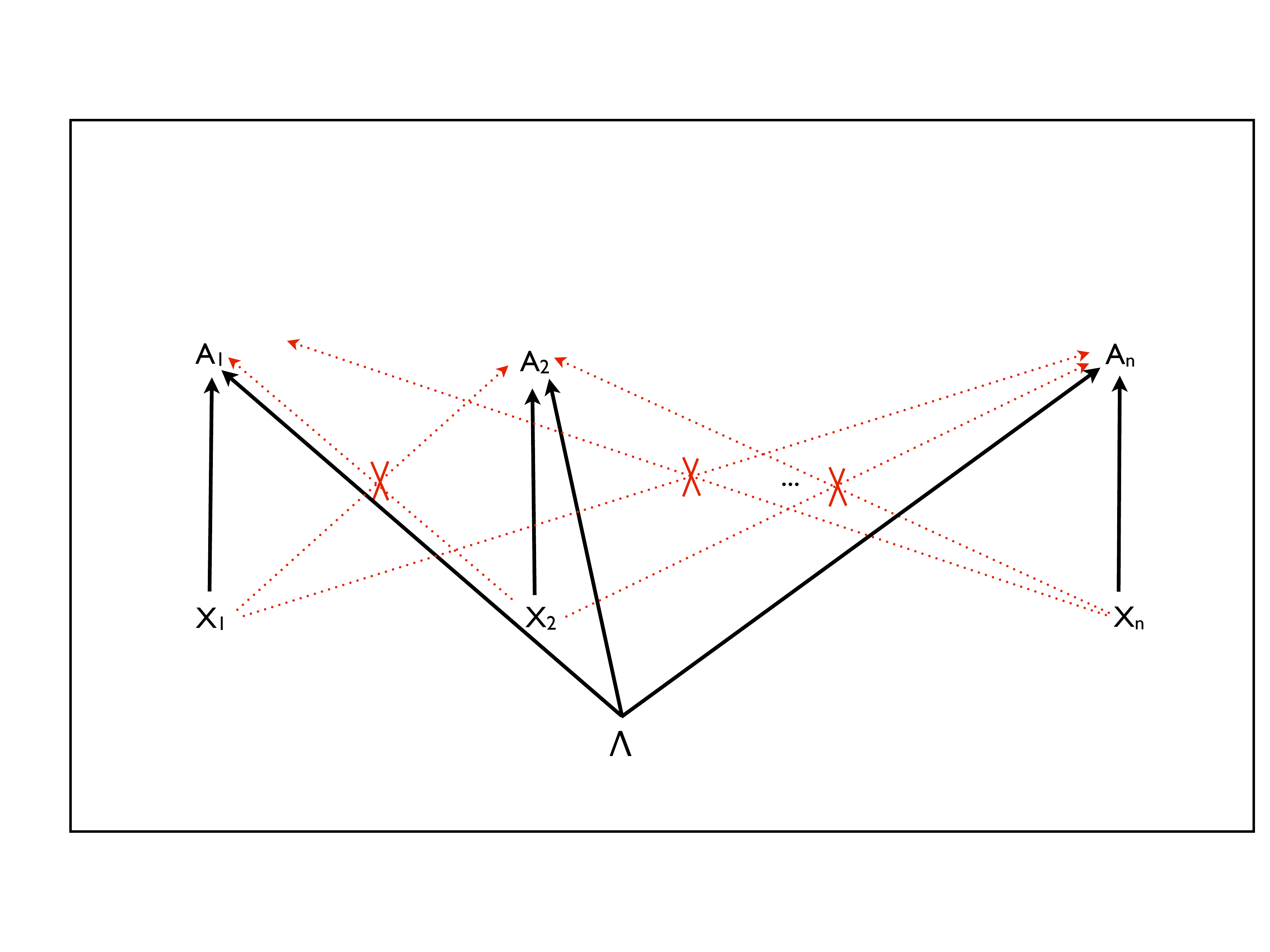}
\caption{The causal structure of the general $n$-party Bell experiment. The outputs $A_i$ for $i \in [n]$ are correlated via a common $\Lambda$, while the inputs $X_i$ are chosen freely according to a notion of free-choice.}
	\label{fig:n-party-caus-struc}
\end{figure}
\end{center}

\section{Relationship to previous works.}
Before we proceed, let us examine the relationship of the present work with previous studies on relativistic causality in the foundations of quantum theory. The paradigmatic example of superluminal explanations for correlations is the de Broglie-Bohm pilot-wave theory \cite{BH93, Hol93}. Here, the wave function of the system is supplemented by the specification of hidden particle positions which evolve according to a \textit{guidance equation}. The velocity of any particle can depend upon the positions of other distant particles, and the theory is manifestly nonlocal. Besides positing a privileged reference frame, the theory also contradicts our usual notion of free choice \cite{CR2} (as for example formalised in \cite{Renner-Colbeck}). In \cite{BPAL+12}, the authors studied the possibility of explaining quantum correlations using influences propagating at a finite speed $v > c$ (denoted as $v$-causal models) defined with respect to a privileged reference frame. It was shown that in the multi-party Bell scenario, if one assumes the usual no-signaling conditions Eq.(\ref{eq:Multi-party-NS}) and the corresponding free-choice conditions, any such $v$-causal model allows correlations which lead to superluminal signaling. 
%Besides the assumption of a privileged reference frame (as in Bohmian mechanics and the collapse theory of \cite{GRT86}), the breakthrough result of \cite{BPAL+12} also uses crucially the notion of free-will, a discussion on this is provided later in this paper. 
Recently, \cite{OCB13} looked at a new framework for multi-party correlations without a pre-defined global causal structure (no explicit space-time structure as considered here) but with the validity of quantum mechanics assumed locally. Using \textit{process matrices} (a generalization of the usual density matrices) as fundamental entities, it was shown that there exist process matrices that are compatible with local quantum mechanics which are however causally nonseparable, i.e., incompatible with a global causal order.  This paper, in contrast, follows a suggestion of Shimony \cite{Shim83}, followed by Grunhaus, Popescu and Rohrlich \cite{GPR} of ``jamming" non-local correlations, i.e., the possibility of non-local correlations that go beyond the usual quantum formalism, and yet do not lead to any contradictory causal loops, that therefore might appear in post-quantum theories. 

\section{Lorentz invariance of the theory.}
%The possibility of superluminal influences that still preserve relativistic causality raises the question whether the resulting theory is still Lorentz invariant. 
Recall that the Lorentz symmetry is a fundamental principle of nature stating that the laws of physics stay the same in all inertial reference frames, or more formally, that the laws must be expressed in terms of Lorentz covariant quantities. Now, the hypothetical influences that we considered in the derivation of the relativistic causal constraints propagate at speeds $u > c$ in some chosen inertial reference frame. 
One might wonder if the resulting theory can still be Lorentz invariant. 
%One might wonder how compatible this is with the traditional assertion that relativity does not permit superluminal influences.
%following from special relativity,
%Special relativity tells us that the laws of physics are invariant upon Lorentz transformation; that is, they are the same in all inertial frames of reference.
%\textit{Faster-than-light phenomena in special relativity.-}
Firstly, as we have argued, the superluminal influence does not lead to \textit{superluminal transmissions of information}, which are what relativity theory actually prohibits, since they can lead to causal loops and grandfather paradoxes. Secondly, observe that the conditions imposed by relativistic causality in Proposition \ref{prop:rel-cau-constraints} are manifestly Lorentz covariant, i.e., \textit{if the intersection of the future light cones of $A$ and $C$ is contained within the future light cone of $B$ in one inertial reference frame, then this intersection is contained in the future light cone of $B$ in all inertial reference frames}. Indeed, this is the reason why these constraints do not allow for a violation of causality. 

The speed $u > c$ of the superluminal influence is specified in the chosen inertial frame $\mathcal{I}$, however the Lorentz transformations do not transform into superluminal frames. The fact that special relativity does not prohibit such faster-than-light influences has been known since the seminal papers by Bilaniuk, Deshpande and Sudarshan \cite{BDS62} and Feinberg \cite{Feinberg67}. The standard approach is to not consider transformations into the rest frame of a particle moving at speed $u > c$, so long as all the predictions for the usual subluminal inertial reference frames are Lorentz covariant as considered here. In this respect, the theory resembles the paradigmatic non-local theory, namely the de Broglie-Bohm theory which also only makes Lorentz covariant predictions. However, the de Broglie-Bohm theory allows for instantaneous action-at-a-distance throughout all spacetime and is a non-local deterministic theory. In contrast, a theory incorporating the relativistic causality constraints manifestly allows for a consistent notion of free-choice (and consequently intrinsic randomness) which is detailed in the following Section \ref{sec:freewill}. Finally, we remark that if one wishes to also consider an extended Lorentz transformation into superluminal frames of reference, one might follow the approach of \cite{SS86} who show that by
re-deriving Lorentz-like transformations for the superluminal case (rather than merely substituting
into the usual subluminal transformations) one arrives at a Lorentz factor $\frac{1}{\sqrt{v^2/c^2 - 1}}$ which makes all
proper physical quantities real-valued for superluminal frames. Also note that very recently, Hill and Cox \cite{HC12} have developed an `extended’ version of special relativity that uses such a real-valued Lorentz factor for $v > c$. 

\section{Free-choice.}
\label{sec:freewill}

Recall the freedom-of-choice assumption as stated in Def. \ref{def:freewill}. This notion of free-choice was captured by the requirement that each party's input $X_i$ be uncorrelated with any random variable $Z$ that is not in its causal future (denoted by $X_i \nrightarrow Z$), where $A$ is said to be in the causal future of $B$ if $t_{B} < t_A$ in all inertial reference frames. In other words, when $X_i \nrightarrow Z$, we impose $P_{X_i | Z} = P_{X_i}$. 

Consider the causal structure of the $n$-party Bell experiment in Fig. \ref{fig:n-party-caus-struc}, where the state of the system is denoted by $\Lambda$, the input and output of the $i$-th party by $X_i, A_i$ for $i \in [n]$.
In \cite{CR, CR2}, it is claimed that if for all $i \in [n]$, $X_i$ are free according to Def. \ref{def:freewill}, then $P_{A_1, \dots, A_n|X_1, \dots, X_n}(a_1, \dots, a_n|x_1, \dots, x_n)$ obeys the usual no-signaling constraints in Eq.(\ref{eq:Multi-party-NS}). With $\textbf{X} = \{X_1, \dots, X_n\}$ and $\textbf{A} = \{A_1, \dots, A_n\}$, the formulation of the condition that $X_i$ are free in \cite{CR, CR2} is as follows:
\begin{definition}[Free-choice notion in \cite{CR, CR2}]
\label{lem:CR-freewill}
%Let $A_{X_i \nrightarrow} = \{A_j \}$ denote a set of outputs $A_j$ such that the correlation SRV $C_{\{A_j\}}$ between \textit{all} the outputs $A_j$ is generated outside the future light cone of $X_i$. Then 
Consider the $n$-party Bell experiment, where the state of the system is denoted by $\Lambda$, the input and output of the $i$-th party by $X_i, A_i$ for $i \in [n]$. Then $X_i$ is free according to Def. \ref{def:freewill} if the following condition is satisfied:
%are generated outside the future light cone of $A_i$, i.e., those  
\begin{eqnarray}
\label{eq:CR-freewill}
P_{X_i | \textbf{X} \setminus X_i, \textbf{A} \setminus A_i, \Lambda}(x_i | \textbf{x} \setminus x_i, \textbf{a} \setminus a_i, \lambda) = P_{X_i}(x_i).
\end{eqnarray}
\end{definition}
But this notion of freedom-of-choice in \cite{CR,CR2} misses the slight subtlety raised in this paper which Def. \ref{def:freewill} allows for. Namely, the correlation between two (or more) output spacetime random variables is a spacetime random variable $\mathcal{C}_{A_j, A_k}$ in its own right, with the associated spacetime location being in the intersection of the future light cones of the two (or more) output spacetime random variables $A_j, A_k$ themselves. Then, a party's input $X_i$ does not need to be independent of the joint distribution (of correlations) of output SRV's if the correlation only manifests itself within the future light cone of $X_i$, even though $X_i$ is required to be independent of the individual marginal distributions of each output SRV themselves. 
%The random variable $C_{\{A_j\}}$ corresponding to the correlations between spacetime random variables in $\{A_j\}$ is itself a spacetime random variable only at the intersection of the future light cones of $A_j$ and $A_k$}. 
Accordingly, one can have a non-local effect precede a cause in the sense that one (or even both) of the outputs $A_j$ or $A_k$ may occur before the $X_i$ was chosen. This has important consequences as we shall see below.

We propose the following equivalent reformulation (different from the formulation in \cite{CR, CR2}) of the free-choice definition \ref{def:freewill} in the multi-party Bell scenario. 
\begin{definition}[Modified notion of free-choice]
\label{lem:mod-freewill}
Consider the $n$-party Bell experiment, where the state of the system is denoted by $\Lambda$, the input and output of the $i$-th party by $X_i, A_i$ for $i \in [n]$. Let $\textbf{A}_{X_i \nrightarrow} = \{A_j \}$ denote a set of outputs $A_j$ such that the correlation SRV $C_{\{A_j\}}$ between \textit{all} the outputs $A_j$ is generated outside the future light cone of $X_i$. Then $X_i$ is free according to Def. \ref{def:freewill} if the following condition is satisfied:
%are generated outside the future light cone of $A_i$, i.e., those  
\begin{eqnarray}
\label{eq:mod-freewill}
P_{X_i | \textbf{X} \setminus X_i, \textbf{A}_{X_i \nrightarrow}, \Lambda}(x_i | \textbf{x} \setminus x_i, \textbf{a}_{x_i \nrightarrow}, \lambda) = P_{X_i}(x_i).
\end{eqnarray}
\end{definition}
%\begin{proof}
%As explained, 
%\end{proof}
Let us illustrate the Definition \ref{lem:mod-freewill} by explicitly stating the free-choice conditions in the three-party Bell scenario with causal structure shown in Fig. \ref{fig:three-party-caus-struc} and measurement configuration shown in Fig.\ref{fig:three-party-meas-config}. 
%In this scenario, 

\begin{Eg}
In the three-party Bell scenario with causal structure shown in Fig. \ref{fig:three-party-caus-struc} and measurement configuration shown in Fig.\ref{fig:three-party-meas-config} with the intersection of $A$ and $C$'s future light cones falling within the future light cone of $B$, we have the following free-choice constraints
\begin{eqnarray}
\label{eq:three-party-freewill}
P_{X|Y,Z,B,C,\Lambda}(x|y,z,b,c,\lambda) &=& P_{X}(x), \nonumber \\
P_{Y|X,A, \Lambda}(y|x,a,\lambda) &=& P_{Y}(y), \nonumber \\
P_{Y|Z,C,\Lambda}(y|z,c, \lambda) &=& P_{Y}(y), \nonumber \\
P_{Z|X,Y,A,B,\Lambda}(z|x,y,a,b,\lambda) &=& P_{Z}(z).
\end{eqnarray}
\end{Eg}

We can now deduce a formal condition following \cite{CR, CR2} stating that incorporating the three-party free-choice conditions in Eq.(\ref{eq:three-party-freewill}) in a theory requires that the theory obey the three-party causality constraints in Eq.(\ref{eq:rel-caus-3-party-1}). 

\begin{prop}
\label{prop:caus-eq-freewill}
Consider the causal structure of the $3$-party Bell experiment in Fig. \ref{fig:three-party-caus-struc}, where the state of the system is denoted by $\Lambda$, the input and output of Alice, Bob and Charlie are denoted by $(X, A)$, $(Y, B)$ and $(Z, C)$ respectively. If the input variables $X, Y$ and $Z$ are free according to Definition \ref{def:freewill}, i.e., obey Eq.(\ref{eq:three-party-freewill}), then $P_{A,B,C|X,Y,Z}(a,b,c|x,y,z)$ obeys the relativistic causality constraints in Eq.(\ref{eq:rel-caus-3-party-1}). 
\end{prop}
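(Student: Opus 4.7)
The plan is to mirror the two-party Bayesian derivation already used in Eq.(\ref{eq:two-party-ns}), applied separately to each of the four marginal constraints in Eq.(\ref{eq:rel-caus-3-party-1}). The uniform strategy is: use Bayes' theorem to swap the roles of the ``signaling'' input and the marginal event in question, and then invoke one (or two) of the four free-choice conditions in Eq.(\ref{eq:three-party-freewill}) to collapse the relevant conditional distribution of the input, given the events it should be independent of, to its unconditional distribution $P_{X}(x)$, $P_{Y}(y)$ or $P_{Z}(z)$.

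First I would dispose of the two bipartite-marginal constraints. To prove $x$-independence of $P_{B,C|X,Y,Z}$ (first line of Eq.(\ref{eq:rel-caus-3-party-1})), write by Bayes
\begin{equation}
P_{B,C|X,Y,Z}(b,c|x,y,z) \;=\; \frac{P_{X|Y,Z,B,C}(x|y,z,b,c)\, P_{B,C|Y,Z}(b,c|y,z)}{P_{X|Y,Z}(x|y,z)}.
\end{equation}
Marginalising the first free-choice condition $P_{X|Y,Z,B,C,\Lambda}=P_X$ over $\Lambda$ gives $P_{X|Y,Z,B,C}(x|y,z,b,c)=P_X(x)$, and a further marginalisation over $B,C$ gives $P_{X|Y,Z}(x|y,z)=P_X(x)$; the two cancel, leaving $P_{B,C|Y,Z}(b,c|y,z)$, which is manifestly $x$-independent. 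The analogous argument using the fourth free-choice condition $P_{Z|X,Y,A,B,\Lambda}=P_Z$ yields $z$-independence of $P_{A,B|X,Y}$.

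Next I would handle the two single-party marginals. For $P_{A|X,Y,Z}$ write
\begin{equation}
P_{A|X,Y,Z}(a|x,y,z) \;=\; \frac{P_{Y,Z|X,A}(y,z|x,a)\, P_{A|X}(a|x)}{P_{Y,Z|X}(y,z|x)},
\end{equation}
and show that both $P_{Y,Z|X,A}$ and $P_{Y,Z|X}$ equal $P_Y(y)P_Z(z)$. Chain-rule $P_{Y,Z|X,A} = P_{Z|X,Y,A}\cdot P_{Y|X,A}$; the second factor is $P_Y$ by the second free-choice condition (after marginalising $\Lambda$), and the first factor reduces to $P_Z$ by a law-of-total-probability step: starting from $P_{Z|X,Y,A,B,\Lambda}=P_Z$, summing against $P_{B|X,Y,A,\Lambda}$ removes $B$, then marginalising $\Lambda$ removes $\Lambda$. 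The denominator factorises identically. The symmetric use of the first and third free-choice conditions (roles of $X,A$ and $Z,C$ swapped) establishes $(x,y)$-independence of $P_{C|X,Y,Z}$.

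The main conceptual obstacle, which is worth flagging explicitly in the write-up, is to show why the same method does \emph{not} yield an additional unwanted constraint of the form ``$P_{A,C|X,Y,Z}$ is $y$-independent''. The obstruction is structural: Definition \ref{lem:mod-freewill} supplies two separate $Y$-conditions, $P_{Y|X,A,\Lambda}=P_Y$ and $P_{Y|Z,C,\Lambda}=P_Y$, but \emph{not} the joint condition $P_{Y|X,Z,A,C,\Lambda}=P_Y$, precisely because the correlation SRV $\mathcal{C}_{A,C}$ lies inside the future light cone of the $Y$-event in the configuration of Fig.~\ref{fig:three-party-meas-config}. Consequently the Bayes step needed to derive $y$-independence of $P_{A,C|X,Y,Z}$ cannot be closed, so the proof recovers exactly the four constraints of Eq.(\ref{eq:rel-caus-3-party-1}) and no more. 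That this gap is genuine rather than a weakness of the argument is already guaranteed by the geometric necessity portion of Proposition \ref{prop:rel-cau-constraints}.
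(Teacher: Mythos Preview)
Your proposal is correct and follows essentially the same Bayes-rewrite-then-cancel strategy as the paper's own proof, including the explicit explanation of why the $Y$ free-choice conditions fail to force $y$-independence of $P_{A,C|X,Y,Z}$. The only cosmetic difference is that the paper carries out the entire calculation conditioned on $\Lambda$ (showing e.g.\ $P_{A,B|X,Y,Z,\Lambda}=P_{A,B|X,Y,\Lambda}$), whereas you first marginalise $\Lambda$ out of the free-choice identities and then apply Bayes at the observable level; both routes are equivalent and neither introduces any new idea.
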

\begin{proof}
The proof follows by rewriting 
\begin{eqnarray}
P_{A,B|X,Y,Z,\Lambda}(a,b|x,y,z,\lambda) &=& P_{A,B|X,Y,\Lambda}(a,b|x,y,\lambda) \frac{P_{Z|X,Y,A,B,\Lambda}(z|x,y,a,b,\lambda)}{P_{Z|X,Y,\Lambda}(z|x,y,\lambda)} \nonumber \\
&=& P_{A,B|X,Y,\Lambda}(a,b|x,y,\lambda).
\end{eqnarray}
Here, the first equality follows by Bayes' rewriting and the second equality uses the free-choice constraints from Eq.(\ref{eq:three-party-freewill}) that $P_{Z|X,Y,A,B,\Lambda}(z|x,y,a,b,\lambda) = P_{Z|X,Y,\Lambda}(z|x,y,\lambda) = P_{Z}(z)$. Analogously, we can use the free-choice constraints $P_{X|Y,Z,B,C,\Lambda}(x|y,z,b,c,\lambda) = P_{X|Y,Z,\Lambda}(x|y,z,\lambda) = P_{X}(x)$ to show that $P_{B,C|X,Y,Z,\Lambda} = P_{B,C|Y,Z,\Lambda}$. This recovers the two-party marginals in Eq.(\ref{eq:rel-caus-3-party-1}). 

Similarly, the single-party marginals can also be deduced as follows.
\begin{eqnarray}
P_{A|X,Y,Z,\Lambda}(a|x,y,z,\lambda) &=& P_{A|X,\Lambda}(a|x,\lambda) \frac{P_{Y|X,A,\Lambda}(y|a,x,\lambda)}{P_{Y|X,\Lambda}(y|x,\lambda)} \frac{P_{Z|X,Y,A,\Lambda}(z|x,y,a,\lambda)}{P_{Z|X,Y,\Lambda}(z|x,y,\lambda)}  \nonumber \\
&=& P_{A|X,\Lambda}(a|x,\lambda).
\end{eqnarray}
Here again the first equality follows by Bayes' rewriting and the second equality uses the free-choice constraints that $P_{Y|X,A,\Lambda} = P_{Y|X,\Lambda}$ and $P_{Z|X,Y,A,\Lambda} = P_{Z|X,Y,\Lambda}$. Analogously, one can deduce that $P_{C|X,Y,Z,\Lambda} = P_{C|Z,\Lambda}$. The fact that the single-party marginal $P_{B|Y,\Lambda}$ is well-defined follows as in Eq.(\ref{eq:three-party-b-marginal}). 

Note that the distribution $P_{A,C|X,Y,Z,\Lambda}$ does not in general equal $P_{A,C|X,Z,\Lambda}$, with
\begin{eqnarray}
P_{A,C|X,Y,Z,\Lambda}(a,c|x,y,z,\lambda) &=& P_{A,C|X,Z,\Lambda}(a,c|x,z,\lambda) \frac{P_{Y|X,Z,A,C,\Lambda}(y|x,z,a,c,\lambda)}{P_{Y|X,Z,\Lambda}(y|x,z,\lambda)} \nonumber \\
&\neq & P_{A,C|X,Z,\Lambda}(a,c|x,z,\lambda).
\end{eqnarray}
This is because in general $P_{Y|X,Z,A,C,\Lambda} \neq P_{Y}$ since the correlation $\mathcal{C}_{A,C}$ only manifests itself within the future light cone of the measurement event at $B$. As such, the amount by which the free-choice condition is violated $\frac{P_{Y|X,Z,A,C,\Lambda}}{P_{Y|X,Z,\Lambda}}$ is exactly equal to the amount by which the no-signaling constraint is violated $\frac{P_{A,C|X,Y,Z,\Lambda}}{P_{A,C|X,Z,\Lambda}}$.   
\end{proof}

The considerations of the three-party scenario in Prop. \ref{prop:caus-eq-freewill} can also be extended straightforwardly to the $n$-party scenario.
% as follows. 
%one can have a formal equivalence between the no-signaling constraints in Eq.(\ref{eq:two-party-NS}) and the free-will condition, we recall the proof here.
% in Eq.(\ref{eq:free-will})
% as for example shown in \cite{CR, CR2}
\begin{prop}
%[see \cite{CR}]
Consider the causal structure of the $n$-party Bell experiment in Fig. \ref{fig:n-party-caus-struc} for the measurement configuration in Fig.\ref{fig:n-party-meas-config}, where the state of the system is denoted by $\Lambda$, the input and output of the $i$-th party by $X_i, A_i$ for $i \in [n]$. If, for all $i \in [n]$, $X_i$ are free according to Def. \ref{def:freewill}, then $P_{A_1, \dots, A_n|X_1, \dots, X_n}(a_1, \dots, a_n|x_1, \dots, x_n)$ obeys the relativistic causality constraints in Eq.(\ref{eq:rel-caus}).
\end{prop}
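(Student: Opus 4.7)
The plan is to generalize the Bayesian rewriting step used in the proof of Prop.~\ref{prop:caus-eq-freewill} so that it handles an arbitrary contiguous subset $S = S_{m,k}^n \subseteq [n]$ at once. Write the complement as $S^c = \{j_1, \dots, j_{n-k}\}$. The goal is to establish
\begin{equation}
P(\textbf{a}_S \mid \textbf{x}, \Lambda) = P(\textbf{a}_S \mid \textbf{x}_S, \Lambda),
\end{equation}
which, after using free-choice to replace $P(\Lambda \mid \textbf{x})$ by $P(\Lambda)$ and integrating out $\Lambda$, delivers Eq.~\eqref{eq:rel-caus}.

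First I would apply Bayes' rule once to obtain
\begin{equation}
P(\textbf{a}_S \mid \textbf{x}, \Lambda) = P(\textbf{a}_S \mid \textbf{x}_S, \Lambda) \cdot \frac{P(\textbf{x}_{S^c} \mid \textbf{a}_S, \textbf{x}_S, \Lambda)}{P(\textbf{x}_{S^c} \mid \textbf{x}_S, \Lambda)},
\end{equation}
and then expand both conditionals via the chain rule into products $\prod_{l=1}^{n-k} P(x_{j_l} \mid \cdot)$. For each factor in the numerator, the modified free-choice condition of Def.~\ref{lem:mod-freewill} applied to $X_{j_l}$ collapses
\begin{equation}
P(x_{j_l} \mid \textbf{a}_S, x_{j_1}, \dots, x_{j_{l-1}}, \textbf{x}_S, \Lambda) = P(x_{j_l}),
\end{equation}
since in the configuration of Fig.~\ref{fig:n-party-meas-config} the correlation SRV $\mathcal{C}_{\textbf{A}_S}$ is generated outside the future light cone of $X_{j_l}$; each denominator factor likewise reduces to $P(x_{j_l})$ by free-choice of $X_{j_l}$ with respect to the remaining inputs and $\Lambda$. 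Numerator and denominator then cancel term by term, establishing the claim. This step is a direct extension of the three-party calculation of Prop.~\ref{prop:caus-eq-freewill}.

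The real work is in the geometric hypothesis just invoked: for every contiguous block $S \subseteq [n]$ and every $j \in S^c$, the earliest intersection $\bigcap_{i \in S} F^+(A_i)$ must lie outside $F^+(X_j)$ in the 1+1-dim line arrangement. I would handle this by induction on $|S|$. The base case $|S|=2$ follows directly from Prop.~\ref{prop:spacetime-region}: any party $j$ strictly to one side of an adjacent pair is spacelike separated from the earliest point of the pair's shared future cone. For the inductive step, adjoining an interior party to an already contiguous block can only shrink the intersection and never pushes its earliest point further into $F^+(X_j)$, by the nesting property that defines the measurement configuration of Fig.~\ref{fig:n-party-meas-config}. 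This geometric lemma is the main obstacle; once it is secured, the Bayesian manipulation above closes the argument in direct parallel with the three-party case.
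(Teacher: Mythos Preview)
Your algebraic strategy is exactly the ``straightforward extension'' the paper intends: the paper gives no separate proof of this proposition, merely remarking after Prop.~\ref{prop:caus-eq-freewill} that the three-party argument extends. Your Bayes-plus-chain-rule manipulation is the correct generalisation, and invoking Def.~\ref{lem:mod-freewill} factor by factor to collapse each $P(x_{j_l}\mid \textbf{a}_S,\ldots,\Lambda)$ to $P(x_{j_l})$ is the right move.

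One correction on the geometric part: Prop.~\ref{prop:spacetime-region} is not the right citation for your base case. That proposition characterises the spacetime region from which a third party $E$ can influence the $A$--$B$ correlations without violating causality; it does not directly assert that the earliest point of $F^+(A)\cap F^+(B)$ is spacelike from a neighbouring party. What you actually need is much simpler and does not require induction. In the $1{+}1$-dimensional line configuration of Fig.~\ref{fig:n-party-meas-config} with simultaneous measurements at $r_1<\cdots<r_n$, the earliest point of $\bigcap_{i\in S}F^+(A_i)$ for a contiguous block $S=\{m,\dots,m+k-1\}$ sits at spatial position $(r_m+r_{m+k-1})/2$ and time $(r_{m+k-1}-r_m)/(2c)$; a one-line inequality then shows this point lies outside $F^+(X_j)$ precisely when $j<m$ or $j>m+k-1$. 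The nesting property you invoke for the inductive step is true but unnecessary once you compute the earliest intersection point directly.
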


\subsection{Reichenbach's Common Cause Principle.}
In \cite{Reich}, Reichenbach proposed a principle of common cause in an attempt to characterize the asymmetry of time by exploiting a statistical distinction between cause and effect. Reichenbach's principle \cite{Reich} (in the classical setting) states that if two events $A$ and $C$ are statistically correlated, then either $A$ causes $C$ or $C$ causes $A$ or they have a cause operative in their common past, where a classical common cause is a shared random variable $Y$ from which the correlations derive. In other words, we have that $P_{A,C|Y}(a,c|y) = P_{A|Y}(a|y) \times P_{C|Y}(c|y)$, i.e., $A$ and $C$ are independent conditional upon the common cause $Y$ so that this feature could be used to define the distinction between past and future. The notion of common cause can also be extended to the non-classical setting with the shared random variable $Y$ replaced by a quantum state $\rho$. This is akin to the notion of outcome independence from Section \ref{subsec:two-party-Bell-theorem}, where the common cause of the correlations in the two-party Bell experiment is denoted by $\Lambda$. More formally, we have the following definition
% (an extension of Reichenbach's definition to include a \textit{system} of cooperating common causes)  
\begin{definition}
\label{def:RCCP}
For a classical probability measure space $(\Sigma, P)$, let $A=a$ and $C=c$ be two positively correlated events in $\Sigma$, i.e., $P_{A,C}(a,c) > P_{A}(a) \times P_{C}(c)$. $Y \in \Sigma$ is said to be a common cause of the correlations between $A$ and $C$ if the following conditions hold:
\begin{eqnarray}
P_{A,C|Y}(a,c|y) &=& P_{A|Y}(a|y) P_{C|Y}(c|y) \nonumber \\
P_{A,C|Y}(a,c|y^{\perp}) &=& P_{A|Y}(a|y^{\perp}) P_{C|Y}(c|y^{\perp}) \nonumber \\
P_{A|Y}(a|y) &>& P_{A|Y}(a|y^{\perp}) \nonumber \\
P_{C|Y}(c|y) &>& P_{C|Y}(c|y^{\perp}).
\end{eqnarray}
%the following conditions hold:
%\begin{itemize}
%\item $P_{A,C|Y} = P_{A|Y} P_{B|Y}$.
%\end{itemize}
\end{definition}
%%The screening condition and the relativistic causality constraints in Eq.(\ref{eq:rel-caus-3-party}) ensure that in the classical case, the common cause system $\{Y_k\}$ of the SRV's $A$ and $C$ is still contained in the common past, i.e., in the intersection of the past light cones of $A$ and $C$. 
The condition can also be extended to include a \textit{system} of cooperating common causes. 
A partition $\{Y_k\}_{k \in K}$ in $\Sigma$ is said to be the (Reichenbachian) common cause system of the correlation between $A$ and $C$ if the 
% SRV $Y$ is said to be the (Reichenbachian) common cause of the correlation between SRV's $A$ and $C$ if the 
screening condition $P_{A,C|Y_k} = P_{A|Y_k} \times P_{C|Y_k}$ holds for all $k \in K$. In this case, the common causes $\{Y_{k}\}$ are in the common causal past of $A$ and $C$, i.e., in the intersection of the past light cones of $A$ and $C$. 

Now, consider that $A$ and $C$ are the output SRV's of measurements by Alice and Charlie, and let $Y$ be the measurement input of Bob. As we have seen, relativistic causality allows, in certain spacetime configurations, for $Y$ to influence the correlations between $A$ and $C$ without affecting their marginal distributions. Still in the classical setting, denoting this influence by $\Lambda_{Y}$, we obtain the screening conditions 
\begin{equation}
P_{A,C|X,Z,Y,\Lambda_{Y}}(a,c|x,z,y, \lambda_{y}) = P_{A|X,Z,Y,\Lambda_{Y}}(a|x, y, \lambda_{y}) P_{C|Z,Y, \Lambda_{Y}}(c|z,y, \lambda_{y})  \; \; \forall y, \lambda_{y}.
\end{equation}
with the restriction on the marginals 
\begin{eqnarray}
\sum_{\lambda_{y}} P_{\Lambda_{Y}}(\lambda_{y}) P_{A|X, Y, \Lambda_{Y}}(a|x,y,\lambda_{y}) &=& \sum_{\lambda_{y'}} P_{\Lambda_{Y}}(\lambda_{y'}) P_{A|X,Y, \Lambda_{Y}}(a|x,y',\lambda_{y'}) = P_{A|X}(a|x)\nonumber \\
\sum_{\lambda_{y}} P_{\Lambda_{Y}}(\lambda_{y}) P_{C|Z,Y, \Lambda_{Y}}(c|z,y,\lambda_{y}) &=& \sum_{\lambda_{y'}} P_{\Lambda_{Y}}(\lambda_{y'}) P_{C|Z,Y, \Lambda_{Y}}(c|z,y',\lambda_{y'}) = P_{C|Z}(c|z).
\end{eqnarray}
%and $P(A,C|Y^{\perp}, Z_{Y^{\perp}}) = P(A|Y^{\perp}, Z_{Y^{\perp}}) P(C| Y^{\perp}, Z_{Y^{\perp}})$. 
This extension of Reichenbach's principle from Def. \ref{def:RCCP} allows the SRV's $Y, \Lambda_{Y}$ to lie in a region that is even outside the common causal past of $A$ and $C$ so long as the causality constraints in (\ref{eq:rel-caus-3-party-1}) are obeyed. 

%In terms of this principle, the modification in Lemma \ref{lem:mod-freewill} then proposes that the common cause system $\{Y_k\}$ of SRV's $A$ and $C$ does not have to be strictly contained in the common past (the intersection of the past light cones of $A$ and $C$) but rather that $\{Y_k\}$ may lie in a spacetime region such that the future light cone of each $Y_k$ contains the intersection of the future light cones of $A$ and $C$.  

\section{Device-independent cryptography against relativistic adversaries.}
The considerations of the previous sections have important implications for post-quantum cryptographic tasks in the device-independent scenario, where the honest parties are not assumed to know the exact internal workings of their device and the eavesdropper is only constrained by the laws of relativity, see for example \cite{BHK, BCK, BKP}. In particular, two important considerations appear. 
\begin{enumerate}
\item Firstly, note that the set of boxes obeying the relativistic causality considerations forms a larger dimensional polytope than the usual no-signaling polytope. This confers a larger set of attack strategies for an eavesdropper who may prepare boxes for the honest parties from this larger set. As shown in Proposition \ref{prop:RA-rel-caus}, in certain known device-independent protocols for the cryptographic task of randomness amplification \cite{Acin, our}, this larger set of attack strategies can severely compromise the security of the protocol, \text{if} the honest parties were to perform the required Bell test in a spacetime measurement configuration where the superluminal influence can take effect.  

\item Secondly, and crucially, the security of cryptographic protocols (even relying on a two-party Bell test) can be compromised when the measurement event of Eve's system happens in a suitable spacetime location. In particular, as shown in Proposition \ref{prop:qkd-sec-rel-caus}, the property of monogamy of non-local correlations can break down under the relativistic causality constraints so that such an eavesdropper can obtain full information about the output of the honest parties in the protocol.   
\end{enumerate}

Remark that the first type of attack strategy above may be circumvented if the honest parties perform their measurements in a carefully chosen measurement configuration where the usual no-signaling constraints (\ref{eq:Multi-party-NS}) are both necessary and sufficient. In contrast, the second type of attack can only be avoided if certain assumptions are made about the space-time location of the eavesdropper's measurement event, or alternatively if the honest parties' systems are assumed to be sufficiently shielded from all influences, even those respecting causality.

\subsection{Device-independent Randomness Amplification.}
The implications manifest as an adversarial attack strategy in device-independent protocols for randomness generation, amplification and key distribution when the adversary is only subject to the laws of relativity such as considered in \cite{BHK, our, Acin} as opposed to the scenario when the adversary is in addition assumed to obey the laws of quantum theory \cite{VV, Pironio}. The key step in such a device-independent protocol is a test for the violation of a Bell inequality. Multiple observers, who are spatially separated, choose random inputs on their parts of a device and check that the outcomes conform to the violation of some specific Bell inequality. The notion of device-independence then stems from the black-box scenario, the parties do not have access to the inner workings of the device and the security of the protocol is assessed solely based on the input-output statistics of the device, so that the device may have been prepared by the adversary Eve herself.  

Mathematically, the device in any single run of the protocol is described by the set of conditional probability distributions (box) $\mathcal{P} = \{P_{\textbf{A} | \textbf{X}}(\textbf{a} | \textbf{x}) \}$. Here $\textbf{A} = (\textbf{A}^1, \dots, \textbf{A}^n)$ and $\textbf{X} = (\textbf{X}^1, \dots, \textbf{X}^n)$ denote the outputs and inputs of the device for the $n$ honest parties in the protocol. 
%Mathematically, the device is described by the set of conditional probability distributions $\mathcal{P} = \{P(\textbf{a} | \textbf{x})\}$, where $\textbf{a} = (\textbf{a}^1, \dots, \textbf{a}^n)$ and $\textbf{x} = (\textbf{x}^1, \dots, \textbf{x}^n)$ denote the outputs and inputs of the device for the $n$ runs of the protocol, note that each $\textbf{x}^i, \textbf{a}^i$ themselves denote inputs and outputs for the $m$ parties in the protocol, i.e., $\textbf{x}^i = (x^i_1, \dots, x^i_m)$ and $\textbf{a}^i = (a^i_1, \dots, a^i_m)$. 
The parties check the input-output statistics to assess the value of a Bell expression and make an inference about $\mathcal{I} \cdot \{P_{\textbf{A} | \textbf{X}}(\textbf{a} | \textbf{x}) \}$, where $\mathcal{I}$ is an indicator vector for the Bell expression that picks out the suitable input-output pairs. The proof of security of such protocols relies on the inference drawn from the observed Bell violation regarding the amount of randomness present in the output. This is achieved by checking that for all non-signaling boxes $\{P_{\textbf{A} | \textbf{X}}(\textbf{a}| \textbf{x})\}$ that achieve this Bell value, for some function $f(\textbf{a}): |\mathcal{A}|^n \rightarrow \{0,1\}$ of the outputs for some set of inputs $\textbf{x}$, it happens that $\eta \leq P_{\textbf{A} | \textbf{X}}(f(\textbf{a}) = 0| \textbf{x}) \leq 1 - \eta$, where $0 < \eta < 1$ \cite{Acin, our}. As an example, in the protocol of Barrett, Hardy and Kent \cite{BHK} for quantum key distribution, the observation of correlations between Alice and Bob's outputs as in the Braunstein-Caves-Pearle chained Bell inequality \cite{BC, Pearle} leads to the conclusion that in the limit of a large number of runs, each party's output was necessarily uniformly random. This is because all no-signaling boxes that maximally violate the chained Bell inequality possess the property of uniformly random marginals for each party. Similar considerations also hold for the protocols considered in \cite{Acin, our} where the maximal violation of a multi-party inequality belonging to the GHZ-Mermin family \cite{GHZ}, leads to the conclusion that a particular function of the outputs (the majority of a subset of the parties' output bits in this case) gives rise to randomness.

Now, from the considerations of the previous sections, we see that there is a larger set of boxes that respect relativistic causality than those belonging to the usually considered no-signaling polytope, depending on the spatial locations of the parties. This implies that an adversary who is aware of the spatial locations of the parties involved in the protocol may provide them with a box that satisfies only the relativistic causality conditions stated in this paper.

In particular, when the $n$ parties are in a line, the relativistic causality constraints (the subset of the usual no-signaling constraints needed to ensure that causality is not violated) are given from Section \ref{subsec:multi-party-rel-caus}  as 
%Let $S_{m,k}^{n} \subset [n]$ denote a contiguous subset of $[n] = \{1, \dots, n\}$ of size $k$ with initial element $m$, i.e., $S_{m,k}^{n} = \{m, m+1, \dots, m+k-1\}$ for some $1 \leq m \leq n-k+1$. 
%Clearly, the number of such contiguous subsets for fixed $k$ is $n-k+1$, and the total number of contiguous subsets of all possible sizes $k$ is $\sum_{k=1}^{n-1} (n-k+1) = \frac{n^2 + n -2}{2}$. 
%%For a string of outputs $\textbf{a}$ of the $n$ parties, let $\textbf{a}_{S_{m,k}^{n}}$ denote the substring of outputs of the parties belonging to the set $S_{m,k}^{n}$, i.e., $\textbf{a}_{S_{m,k}^{n}} = \{\textbf{a}_i\}$ with $i \in S_{m,k}^{n}$ and let $\textbf{a}_{(S_{m,k}^{n})^{c}}$ denote the outputs of the complementary set of parties. Similarly, let $\textbf{x}_{S_{m,k}^{n}}$ denote the inputs of the parties in the set $S_{m,k}^{n}$ and $\textbf{x}_{(S_{m,k}^{n})^{c}}$ denote the inputs of the complementary set. Then the relativistic causality constraints for the $n$ parties stationed in $1$-D are given by
\begin{widetext}
\begin{eqnarray}
\label{eq:rel-caus-2}
P(\textbf{a}_{S_{m,k}^{n}} | \textbf{x}_{S_{m,k}^{n}} ) = \sum_{\textbf{a'}_{(S_{m,k}^{n})^{c}}} P(\textbf{a'} | \textbf{x'}) = \sum_{\textbf{a''}_{(S_{m,k}^{n})^{c}}} P(\textbf{a''} | \textbf{x''})  \qquad \forall 1 \leq k \leq n-1, 1 \leq m \leq n-k+1
\end{eqnarray}
\end{widetext}
for all $\textbf{a'}, \textbf{a''}$ with $\textbf{a'}_{S_{m,k}^{n}} = \textbf{a''}_{S_{m,k}^{n}} = \textbf{a}_{S_{m,k}^{n}}$ and for all $\textbf{x'}, \textbf{x''}$ with $\textbf{x'}_{S_{m,k}^{n}} = \textbf{x''}_{S_{m,k}^{n}} = \textbf{x}_{S_{m,k}^{n}}$. Recall that here $S_{m,k}^{n} \subset [n]$ denotes a contiguous subset of $[n] = \{1, \dots, n\}$ of size $k$ with initial element $m$, i.e., $S_{m,k}^{n} = \{m, m+1, \dots, m+k-1\}$ for some $1 \leq m \leq n-k+1$ and for a string of outputs $\textbf{a}$ of the $n$ parties, $\textbf{a}_{S_{m,k}^{n}}$ denotes the substring of outputs of the parties belonging to the set $S_{m,k}^{n}$, i.e., $\textbf{a}_{S_{m,k}^{n}} = \{\textbf{a}_i\}$ with $i \in S_{m,k}^{n}$ and let $\textbf{a}_{(S_{m,k}^{n})^{c}}$ denotes the outputs of the complementary set of parties. 
%\textbf{x}^{(k,m)} = (x_1, \dots, x_n)$ and $\textbf{x'}^{(k,m)} = (x'_1, \dots, x'_{m-1}, x_m, \dots, x_{m+k-1}, x'_{m+k}, \dots, x'_n)$. 
In other words, the marginal distribution of the outputs for a contiguous subset of parties' inputs is independent of the complementary set of parties' inputs. On the other hand, for a non-contiguous subsets of parties, the joint probability distribution of their outputs can depend on the inputs of the complementary set, i.e., the parties in between can change the marginal distributions by their choice of inputs while still respecting relativistic causality.

The boxes $\mathcal{P}$ are thus only required to obey the restricted set of constraints imposed in Eq.(\ref{eq:rel-caus-2}) as opposed to the usual no-signaling constraints (which posit that the marginal distribution of \textit{every} subset of parties' outputs is independent of the input of the complementary set of parties). The boxes $\mathcal{P}$ under the reduced set of constraints constitute a set of enhanced attack strategies for an eavesdropper assumed to only obey the causality constraints imposed by relativity in a device-independent cryptographic protocol. Note that the set of boxes still forms a polytope in a larger dimensional space (the number of relativistic causality constraints is smaller than the number of usual no-signaling constraints). Furthermore, note that not all the constraints in Eq.(\ref{eq:rel-caus-2}) are independent.

We illustrate this attack strategy with a fairly generic example, namely we will show that this enhanced set of attack strategies allows the adversary to ensure that the parties cannot extract any randomness from the outputs (for inputs appearing in the inequality), from the violation of the quintessential multi-party Bell inequalities, namely the GHZ-Mermin inequalities \cite{GHZ}. The Mermin inequality is set in the Bell scenario when each of $n$ parties (for $n$ odd, $n \geq 3$) measures one of two inputs $x_i \in \{0,1\}$ and obtains one of two outputs $a_i \in \{0,1\}$. The inequality is given as the following set of constraints on the $n$-party correlators $\langle x_1 \dots x_n \rangle$ for $\sum_i x_i = n - 2k$, with $0 \leq k \leq \lfloor n/2 \rfloor$, $k \in \mathbb{Z}$
%\begin{eqnarray}
%\mathcal{I}^n_{\text{Merm}} \cdot P(\textbf{a} | \textbf{x}) = (-1)^n \langle 1 \dots 1 \rangle + 
%\end{eqnarray}
\begin{eqnarray}
\label{eq:Mermin-corr}
\langle x_1 \dots x_n \rangle = %\left\{ \begin{array}{ll}
(-1)^k,  \; \text{for} \sum_i x_i = n - 2k.
%& \;  \sum_i x_i = n - 2k, \; 0 \leq k \leq \lfloor n/2 \rfloor
%4 k + 1, \; \; k \in \mathbf{Z}_{\geq 0} \\
%-1 & \; \; \sum_i x_i = 4 k - 1, \; \; k \in \mathbf{Z}_{+} 
%\end{array}
%\right.
\end{eqnarray}
Here, the $n$-party correlation function $\langle x_1 \dots x_n \rangle$ is defined as
\begin{eqnarray}
\label{eq:n-party-correlation}
\langle x_1 \dots x_n \rangle =  P_{\textbf{A} | \textbf{X}}(\oplus_{i=1}^{n} a_i = 0 | x_1, \dots, x_n) - P_{\textbf{A} | \textbf{X}}(\oplus_{i=1}^{n} a_i = 1 | x_1, \dots, x_n).  
\end{eqnarray}
When $\sum_i x_i$ is even, no constraints are imposed on the corresponding correlator. 

Measurements by each of the $n$ parties of the Pauli $\sigma_y$ and $\sigma_x$ operators (for $x_i = 0, 1$ respectively) on the $n$-qubit GHZ state 
\begin{equation}
|GHZ_n \rangle = \frac{1}{\sqrt{2}} \left( |0 \rangle_1 \otimes \dots \otimes |0 \rangle_n + |1 \rangle_1 \otimes \dots \otimes |1\rangle_n \right)
\end{equation}
result in a violation of the inequality up to its maximum value, i.e., satisfies all the constraints in Eq.(\ref{eq:Mermin-corr}). 

The inputs $\textbf{x}$ for which constraints are imposed on the correlator are said to appear in the inequality, this set of inputs is denoted by $\mathcal{X}_{\text{Merm}} := \{ \textbf{x} | \sum_i x_i = n - 2k, k \in \mathbb{Z} \}$.
%The inputs $\textbf{x}$ for which $\sum_i x_i$ is odd are said to appear in the inequality, this is denoted as $\textbf{x} \in \mathcal{X}_{\text{Merm}}$. 
% these are the inputs for which the parity of the input bits $\sum_i x_i$ is odd. 
In \cite{DTA14}, it was shown that satisfying Eq.(\ref{eq:Mermin-corr}) implies, in the asymptotic setting of an infinite number of parties $n \rightarrow \infty$, that a particular function of the output bits is fully random. In particular, the following function of the outputs $g_{\textbf{x}}(\textbf{a})$ was considered for any input $\textbf{x}$ that appears in the Mermin inequality. 
\begin{equation}
g(\textbf{a}) = \left\{
\begin{array}{ll}
      1 & \sum_{i} a_i = (4k+2), \; \; k \in \mathbf{Z}_{\geq 0} \wedge  \textbf{x} \in \mathcal{X}_{\text{Merm}} \\
      0 & \text{else} \\
   \end{array}
\right.
\end{equation}
As the number of parties $n \rightarrow \infty$ it was shown that $P(g(\textbf{a}) = 1 | \textbf{x}) \rightarrow \frac{1}{2}$, implying that for all boxes satisfying the usual multi-party no-signaling conditions and the Mermin constraints Eq.(\ref{eq:Mermin-corr}), the bit defined by the function $g( \cdot )$ possesses full intrinsic randomness and defines a process where full randomness amplification takes place. We show in the following proposition that this conclusion no longer holds when in place of the usual no-signaling conditions, only the relativistic causality conditions are taken into account. Furthermore, we show that when considering the inputs appearing in the Mermin inequality, \textit{no function} of the output bits possesses \textit{any} randomness for all odd $n \geq 3$. We leave as an open question whether there exists any multi-party Bell inequality with the property of maximum violation such that all the boxes obeying the new relativistic causality conditions admit a hashing function that defines a (partially) random bit.  

\begin{prop}
\label{prop:RA-rel-caus}
%Let $\mathcal{I}^{n}_{\text{Merm}} \cdot \{P(\textbf{a} | \textbf{x})\}$ denote 
Consider the $n$-party GHZ-Mermin Bell inequality, for odd $n \geq 3$.  Suppose that in some inertial reference frame, the $n$ space-like separated parties are arranged in $1$-D, with $r_1 < \dots < r_n$ and perform their measurements simultaneously, i.e., $t_1 = \dots = t_n$. Then for any input $\textbf{x}^*$ appearing in the inequality, i.e., $\textbf{x}^* \in \mathcal{X}^{n}_{\text{Merm}}$, there exists a box $\mathcal{P}_{\textbf{A} | \textbf{X}}$ violating the Mermin inequality maximally and obeying the relativistic causality constraints in Eq.(\ref{eq:rel-caus}), such that no randomness can be extracted from the outputs $\textbf{a}$ of the box under input $\textbf{x}^*$. In other words, we have 
%such that for any hashing function $h_{\textbf{x}^*}(\textbf{a}) : | \mathcal{A}|^n \rightarrow \{0,1\}$, we have 
\begin{equation}
\label{eq:det-hash}
\mathcal{P}_{\textbf{A} | \textbf{X}}(\textbf{a}^* | \textbf{x}^*) = 1,
\end{equation}   
for some fixed output bit string $\textbf{a}^*$.
\end{prop}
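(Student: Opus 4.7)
My plan is to produce an explicit box $\mathcal{P}_{\textbf{A}|\textbf{X}}$ that (i) achieves maximal violation of the Mermin constraints in Eq.(\ref{eq:Mermin-corr}), (ii) satisfies only the relativistic causality constraints in Eq.(\ref{eq:rel-caus-2}) rather than the full no-signaling constraints in Eq.(\ref{eq:Multi-party-NS}), and (iii) deterministically outputs a fixed bit string $\textbf{a}^*$ at input $\textbf{x}^*$. I first treat the representative case $\textbf{x}^* = (1, \ldots, 1)$ with $\textbf{a}^* = (0, \ldots, 0)$, for which the Mermin parity $k^* \equiv 0 \pmod 2$ is trivially satisfied by the zero string; the argument for a general $\textbf{x}^* \in \mathcal{X}_{\mathrm{Merm}}^n$ will proceed by an analogous construction after an appropriate relabeling dictated by $\textbf{a}^*$.

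The construction introduces a single shared uniform bit $\lambda \in \{0,1\}$ and, for input $\textbf{x}$, sets $a_i = 0$ whenever $x_i = 1$, and on the ``active'' set $S(\textbf{x}) = \{i_1 < \cdots < i_m\} = \{i : x_i = 0\}$ sets $a_{i_j} = \lambda$ for odd $j$ and $a_{i_j} = 1-\lambda$ for even $j$.  Determinism at $\textbf{x}^*$ is immediate since $S = \emptyset$ forces $\textbf{a} = \mathbf{0}$.  The Mermin check is a parity count: for $\textbf{x} \in \mathcal{X}_{\mathrm{Merm}}^n$ one has $|S| = 2k$ with $\sum x_i = n - 2k$, and the alternating assignment contributes $k$ copies of $\lambda$ and $k$ copies of $1-\lambda$, yielding $\bigoplus_i a_i = k\lambda \oplus k(1-\lambda) \equiv k \pmod 2$, which matches Eq.(\ref{eq:Mermin-corr}).

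The heart of the proof is checking that $\mathcal{P}_{\textbf{A}|\textbf{X}}$ respects the contiguous-subset marginals in Eq.(\ref{eq:rel-caus-2}).  For every contiguous $T \subseteq [n]$ one needs $P(\textbf{a}_T | \textbf{x}_T)$ to be independent of $\textbf{x}_{T^c}$.  Single-party marginals and mixed (one active, one passive) pairs are immediate from the uniformity of $\lambda$.  For a contiguous pair of active sites one obtains $(a_i, a_{i+1}) \in \{(\lambda, 1-\lambda), (1-\lambda, \lambda)\}$ depending on the parity of $j$, but in either case the induced distribution is uniform on $\{(0,1),(1,0)\}$, hence independent of $\textbf{x}_{T^c}$.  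Extending to any contiguous $T$, I will observe that the restriction of the alternating assignment to $T$ is one of two complementary strings related by $\lambda \leftrightarrow 1-\lambda$, so averaging over $\lambda$ collapses the parity-dependence and leaves a fixed marginal on $T$.

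The main obstacle is precisely this last case analysis: verifying that when $T$ partially overlaps $S$, the parity of the first active index in $T$ as an element of the global $S$ gets absorbed into the $\lambda \leftrightarrow 1-\lambda$ symmetry and hence does not leak a dependence on $\textbf{x}_{T^c}$ into the marginal on $T$.  A secondary obstacle is lifting the construction to arbitrary $\textbf{x}^* \in \mathcal{X}_{\mathrm{Merm}}^n$: the alternating pattern must be combined with the offsets from $\textbf{a}^*$ so that the total XOR $\bigoplus_i a_i^* \oplus \bigoplus_{i \in S} \mathrm{pattern}_i$ reproduces the Mermin parity $k_{\textbf{x}} \bmod 2$ for each $\textbf{x}$; this forces a case distinction on the parities of $a = |\{i : x_i^* = 0, x_i = 1\}|$ and $b = |\{i : x_i^* = 1, x_i = 0\}|$, but the same shared-$\lambda$ template still works provided the alternating sign is chosen to cancel the $\textbf{a}^*$-induced offset on $S$, which can be verified exactly as in the $n=3$ illustrative example of the paper.
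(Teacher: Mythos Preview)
Your construction for $\textbf{x}^* = (1,\ldots,1)$ is correct and in fact produces exactly the same box as the paper's algorithm: when $\textbf{x}^* = \mathbf{1}$ every $\textbf{x}^*_{i_j}=1$, so the paper's rules reduce to ``odd step flips $\textbf{a}^l$ at $i_{2j+1}$, even step flips $\textbf{a}^r$ at $i_{2j}$'', which gives precisely your $\lambda=1$ and $\lambda=0$ branches. Your closed-form presentation is cleaner than the paper's binary-tree recursion and inductive verification: the structural fact you isolate --- that for a contiguous window $T$ the active sites $T\cap S(\textbf{x})$ form a \emph{consecutive} run of $S(\textbf{x})$, so the restricted alternating string is one of two complements exchanged by $\lambda\leftrightarrow 1-\lambda$ --- is exactly the mechanism that makes the paper's induction on Hamming distance succeed, and you have exposed it directly rather than burying it in case analysis.

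Where your proposal is genuinely incomplete is the passage to an arbitrary $\textbf{x}^*\in\mathcal{X}^n_{\text{Merm}}$. The paper does \emph{not} reduce to $\textbf{x}^*=\mathbf{1}$ by a relabeling --- the Mermin rule $\langle\textbf{x}\rangle=(-1)^k$ with $k=(n-\sum_i x_i)/2$ is not invariant under single-site input flips --- but instead runs the same algorithm with the flip target ($\textbf{a}^l$ versus $\textbf{a}^r$) now determined by $\textbf{x}^*_{i_j}\oplus[j\text{ even}]$. Your sketch (``combine the alternating pattern with offsets from $\textbf{a}^*$'', case-split on the parities of $a,b$) points in this direction but is not a proof: the appeal to $\textbf{a}^*$ is unclear since $\textbf{a}^*=\mathbf{0}$ throughout, and the $n=3$ example you cite is a different box (deterministic $B$, not deterministic joint output at $\textbf{x}^*$). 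More substantively, once $\textbf{x}^*$ mixes $0$'s and $1$'s the ``which of $\textbf{a}^l,\textbf{a}^r$ flips'' rule alternates irregularly along $S(\textbf{x})$, and you still owe the check that the two output strings remain complements on every contiguous window so that the $\lambda\leftrightarrow 1-\lambda$ averaging continues to erase the dependence on $\textbf{x}_{T^c}$. This is precisely where the paper's step-by-step bookkeeping is doing real work, and you should either carry out the analogous verification or exhibit an explicit closed form for general $\textbf{x}^*$.
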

\begin{proof}
See the Supplemental Material.  
\end{proof}

\subsection{Device-independent key distribution against relativistic eavesdroppers.}
The considerations of the previous subsection also extend to multi-party protocols for device-independent key distribution against relativistic eavesdoppers, namely that the boxes that the honest parties use in such protocols are only subject to the subset of the no-signaling constraints imposed by relativistic causality. Here, we focus on the second important consideration highlighted at the beginning of this section, namely the compromised security of two-party protocols due to the failure of monogamy of non-local correlations under the relativistic causality constraints when the eavesdropper's measurement event is at an appropriate spacetime location. 

In a QKD protocol, we model the initial correlations between Alice, Bob and an eavesdropper Eve by a distribution $P_{A,B,C|X,Y,Z}$.
In the device-independent setting, this distribution is a priori unknown and may have been chosen by the adversary. We consider the situation when the security claims are supposed to hold for any possible initial distribution compatible with the constraints imposed by relativity. Alice and Bob have access to a public authenticated communication channel so that any information sent through this channel is available to Eve.  
In each step of the protocol, Alice and Bob either access correlated data, perform local operations or exchange messages $M$ over the public authenticated channel and in the final step, they generate the keys $K_A, K_B \in \{0,1\}^{N}$. The protocol is said to be secure if the resulting distribution $P^{\text{real}}_{K_A, K_B, M, C|Z}$ is indistinguishable from an ideal one $2^{-N} \delta_{k_A, k_B} P^{\text{real}}_{M,C|Z}$, where $P^{\text{real}}_{M,C|Z}$ is the marginal distribution obtained from $P^{\text{real}}_{K_A, K_B, M, C|Z}$ and $2^{-N} \delta_{k_A, k_B}$ indicates that Alice and Bob's versions of the secret key are identical and uniformly distributed independent of $M, C, Z$. In the notion of universally composable security, the distinguishability of these two distributions, quantified by $D\left( P^{\text{real}}_{K_A, K_B, M, C|Z}, 2^{-N} \delta_{k_A, k_B} P^{\text{real}}_{M,E|Z} \right)$, can be made arbitrarily small for large $N$, so that when the secret key is used in another subsequent cryptographic task, the composed protocol is as secure as if an ideal secret key was used. Here $D(P_{X}, Q_{X})$ denotes the total variation distance $D(P_{X}, Q_{X}) := \frac{1}{2} \sum_{x} \left| P_{X}(x) - Q_{X}(x) \right|$. 

Here, we focus on the key distribution protocols from \cite{BHK, BCK} that were proven secure against eavesdroppers obeying the usual no-signaling constraints and examine what happens when we modify these to the relativistically causal constraints. These protocols are based on the quantum violation of the Braunstein-Caves-Pearle chain inequalities \cite{BC,Pearle}.  
The chained Bell inequalities are a family of two-party correlation Bell inequalities (\textsc{XOR} games) with $m \geq 2$ inputs and two outputs per party, that generalize the well-known CHSH inequality. As usual, labeling the inputs of Alice and Bob by $x, y \in [m]$ respectively, and the outputs by $a, b \in \{0,1\}$, the chain Bell expression $\mathcal{I}^{m, \text{ch}}_{AB}$ is explicitly written as
%\begin{equation}
%\label{eq:m-chain}
%\mathcal{I}^{m, \text{ch}}_{AB} :=\left( \sum_{\substack{x,y \in [m] \\ x = y \;  \vee \; x = y +1}} P_{A,B|X,Y}(a \oplus b = 1 |x, y) \right) + P_{A,B|X,Y}(a \oplus b = 0 |0, N-1) \geq 1.
%\end{equation}  
%\begin{equation}
%\label{eq:m-chain}
%\mathcal{I}^{m, \text{ch}}_{AB} :=\left( \sum_{\substack{x,y \in [m] \\ x = y \;  \vee \; y = x +1}} P_{A,B|X,Y}(a \oplus b = 0 |x, y) \right) + P_{A,B|X,Y}(a \oplus b = 1 |N, 1) \leq 2m-1.
%\end{equation}  
\begin{equation}
\label{eq:m-chain}
\mathcal{I}^{m, \text{ch}}_{AB}\left(P_{A,B|X,Y} \right) := \left( \sum_{\substack{x,y \in [m] \\ x = y \; \vee \; x = y + 1}} P_{A,B|X,Y}(a \oplus b = 1 | x,y) \right) + P_{A,B|X,Y}(a \oplus b = 0 | 1, m) \geq 1.
\end{equation}
%%Alternatively, the expression may be written using the two-party correlation functions $\langle A_j B_k \rangle$ with 
%%\begin{equation}
%%\langle A_x B_y \rangle = P_{A,B|X,Y}(a \oplus b = 0 | x, y)  - P_{A,B|X,Y}(a \oplus b = 1 | x, y)
%%\end{equation}
%%as
%%\begin{equation}
%%\label{eq:m-chain-2}
%%\mathcal{I}^{m, \text{ch}}_{AB} = \sum_{i=1}^{m} \left[ \langle A_i B_i \rangle + \langle A_i B_{i+1} \rangle \right] \leq 2m-2,
%%\end{equation}
%%where $B_{m+1} = - B_{1}$.  
The minimum value of the chain expression in Eq.(\ref{eq:m-chain}) within classical theories is $1$. In general no-signaling theories obeying Eq.(\ref{eq:two-party-NS}), the algebraic minimum of $0$ can be achieved. Measurements on a quantum state can give rise to correlations that achieve the minimum quantum value of $2 m \sin^2\left(\frac{\pi}{4m}\right)$, which approaches the algebraic minimum of $0$ for large $m$. This value is achieved when Alice and Bob share the maximally entangled state
\begin{equation}
| \phi_{+} \rangle = \frac{1}{\sqrt{2}} \left( |00 \rangle + |11 \rangle \right)
\end{equation}
and perform the measurements corresponding to the basis
\begin{eqnarray}
\left\lbrace \cos \left(\frac{\theta_i}{2} \right) | 0 \rangle + \sin \left(\frac{\theta_i}{2} \right) | 1 \rangle, \sin \left(\frac{\theta_i}{2} \right) |0\rangle - \cos \left(\frac{\theta_i}{2} \right) |1\rangle \right\rbrace,
\end{eqnarray}
where
\begin{eqnarray}
\theta_x = \frac{\pi(x-1)}{m},  \; \; \; \forall x \in [m] \nonumber  \\ 
\theta_y = \frac{\pi(2y-1)}{2m} \; \; \; \forall y \in [m]
\end{eqnarray}
denote Alice and Bob's measurement angles. 

%%
%%maximum quantum value $2m \cos(\frac{\pi}{2m})$ for $\mathcal{I}^{m, \text{ch}}_{AB}$. This value is achieved when Alice and Bob share the maximally entangled state
%% \begin{equation}
%%| \phi_{+} \rangle = \frac{1}{\sqrt{2}} \left( |00 \rangle + |11 \rangle \right)
%%\end{equation}
%%and perform the measurements corresponding to the observables $A_x, B_y$ given as
%%\begin{eqnarray}
%%A_x &:=& \sin(\alpha_x) \sigma_{x} + \cos(\alpha_x) \sigma_{z}, \nonumber \\
%%B_y &:=& \sin(\beta_y) \sigma_{x} + \cos(\beta_y) \sigma_{z}.  
%%\end{eqnarray}
%%with $\alpha_x := \frac{\pi(2x-1)}{m}$ and $\beta_y := \frac{\pi(y-1)}{m}$ for $x,y = 1, \dots, m$, and $\sigma_x, \sigma_z$ are the usual Pauli observables. 

The significance of the quantum violation of the chained Bell inequalities for generating secure keys is that, in the limit of large $m$, the correlations that achieve the minimum quantum value become monogamous and uniform. In other words, consider any distribution $P_{A,B,C|X,Y,Z}$ obeying the usual tripartite no-signaling constraints (where $(Z,C)$ denote the input and output of Eve on her system) for which $\mathcal{I}^{m, \text{ch}}_{AB}(P)$ is close to $0$. Then, for any choice of input $Z = z$ by Eve, her output $C$ is virtually uncorrelated with $A$ and $P_{A|X}(a|x)$ is virtually indistinguishable from uniform, for all $a \in \{0,1\}$. This fact that the violation of the chain inequality leads to private randomness and secure key in the outputs of Alice and Bob for large number of inputs $m$ is captured by the following Proposition \ref{prop:chain-sec-ns}. 
%%\begin{prop}[\cite{BHK, BCK}]
%%\label{prop:chain-sec-ns}
%%Any no-signaling distribution $\{P_{A,B|X,Y}\}$ for which $\mathcal{I}^{m, \text{ch}}_{AB}(P) > 2m - \mathcal{I}^*$ satisfies
%%\begin{eqnarray}
%%P(A_x = a) &\leq & \frac{1}{2}\left[1 + (2m- \mathcal{I}^*) \right], \nonumber \\
%%P(B_y = b) &\leq & \frac{1}{2}\left[ 1 + (2m - \mathcal{I}^*) \right],
%%\end{eqnarray}
%%for all $a, b \in \{0,1\}$ and $x, y \in [m]$.
%%\end{prop}

\begin{prop}[\cite{BHK, BCK}] 
\label{prop:chain-sec-ns}
For any distribution $\{P_{A,B,C|X,Y,Z} \}$ obeying the usual multi-party no-signaling constraints (\ref{eq:Multi-party-NS}) in which $A$ and $B$ are binary, there holds
\begin{equation}
D\left( P_{A,C|X,Y,Z}(a,c| x,y,z) , P^{u}_{A|X}(a|x) \times P_{C|X,Y,Z}(c|x,y,z) \right) \leq \frac{1}{2} \mathcal{I}^{m, \text{ch}}_{AB}\left(P_{A,B,C|X,Y,Z} \right),
\end{equation}
for all $x,y \in [m]$, and $z$, and
\begin{equation}
D(P_{C|x,y,z,a}, P_{C|z}) \leq \mathcal{I}^{m, \text{ch}}_{AB}\left(P_{A,B,C|X,Y,Z} \right), 
\end{equation}
for all $a \in \{0,1\}$, $x,y \in [m]$ and $z$. Here $\mathcal{I}^{m, \text{ch}}_{AB}\left(P_{A,B,C|X,Y,Z} \right) = \mathcal{I}^{m, \text{ch}}_{AB}\left(P_{A,B|X,Y} \right)$ denotes the violation of the chain inequality by the marginal box $P_{A,B|X,Y}$ and $P^{u}_{A|X}$ denotes the uniform distribution, i.e., $P^{u}_{A|X}(a|x) = \frac{1}{2}$ for all $a \in \{0,1\}$, $x \in [m]$. 
\end{prop}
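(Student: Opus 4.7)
The plan is to prove Part~1 by telescoping triangle inequalities along the cyclic structure of the chain Bell expression (\ref{eq:m-chain}), and then extract Part~2 from Part~1 via Bayes' rule. A preliminary reduction first: the assumed multi-party no-signaling constraints (\ref{eq:Multi-party-NS}) make the marginals $P_{A,C|X,Z}:=P_{A,C|X,Y,Z}$ and $P_{C|Z}:=P_{C|X,Y,Z}$ well-defined and independent of the remaining inputs, and also give $\sum_a P_{A,C|X,Z}(a,c|x,z)=P_{C|Z}(c|z)$ for every $x$. Consequently Part~1 reduces to showing that $P_{A,C|X,Z}(\cdot,\cdot|x,z)$ is close in total variation to $\tfrac{1}{2}P_{C|Z}(\cdot|z)$.

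The workhorse is the elementary bound, valid for every $x,y,z,a,c$:
\begin{equation}
\bigl| P_{A,C|X,Z}(a,c|x,z) - P_{B,C|Y,Z}(a,c|y,z) \bigr| \;\leq\; P_{A,B|X,Y}(a\neq b\mid x,y),
\end{equation}
obtained by writing each side as the marginal over $b$ (resp.\ $a$) of $P_{A,B,C|X,Y,Z}$ and noting that the terms with matching outputs cancel. An analogous inequality, with the right-hand side replaced by $P_{A,B|X,Y}(a=b|x,y)$ and the second argument flipped to $a\oplus 1$, holds along the wrap-around edge. Now fix $x^{*}$ and telescope along the cycle of edges $(x^*,x^*),(x^*{+}1,x^*),(x^*{+}1,x^*{+}1),\dots$, visiting each of the $2m-1$ chain edges with constraint $a\neq b$ and the single wrap-around edge $(1,m)$ with constraint $a=b$. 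The odd parity of XOR flips around the cycle returns to input $x^*$ on Alice's side but with output flipped, giving
\begin{equation}
\sum_{a,c} \bigl| P_{A,C|X,Z}(a,c|x^*,z) - P_{A,C|X,Z}(a\oplus 1,c|x^*,z)\bigr| \;\leq\; 2\,\mathcal{I}^{m,\text{ch}}_{AB}(P_{A,B|X,Y}).
\end{equation}
Combining with the identity $P_{A,C|X,Z}(a,c|x^*,z)-\tfrac{1}{2}P_{C|Z}(c|z)=\tfrac{1}{2}\bigl[P_{A,C|X,Z}(a,c|x^*,z)-P_{A,C|X,Z}(a\oplus 1,c|x^*,z)\bigr]$ (whose right-hand side uses $\sum_a P_{A,C|X,Z}=P_{C|Z}$) then yields Part~1 with the stated factor $\tfrac{1}{2}$.

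For Part~2, I would invoke Bayes' rule, $P_{C|X,Y,Z,A}(c|x,y,z,a)=P_{A,C|X,Z}(a,c|x,z)/P_{A|X}(a|x)$, together with the estimate $|P_{A|X}(a|x)-\tfrac12|\leq\mathcal{I}^{m,\text{ch}}_{AB}$ (obtained by marginalizing Part~1 over $c$). Splitting $|P_{A,C|X,Z}(a,c|x,z)-P_{A|X}(a|x)P_{C|Z}(c|z)|$ through $\tfrac12P_{C|Z}(c|z)$ by a triangle inequality, summing over $c$, and dividing by $P_{A|X}(a|x)$ would yield the claimed bound. The main obstacle I anticipate is the parity bookkeeping in the telescoping step: one must track the XOR flip around the cycle to ensure that the terminal argument is $a\oplus 1$ (not $a$), and verify that summing the right-hand sides over all edges reproduces $\mathcal{I}^{m,\text{ch}}_{AB}$ with the correct combinatorial factor. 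Part~2 also requires care because $P_{A|X}(a|x)$ appears in a denominator and is itself only known to lie near $\tfrac12$.
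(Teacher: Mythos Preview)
The paper does not supply its own proof of this proposition; it is quoted from \cite{BHK, BCK} and used as a known benchmark against which the authors' Proposition~\ref{prop:qkd-sec-rel-caus} is contrasted. So there is nothing in the paper to compare your argument against directly, and I assess your sketch on its own merits.

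Your Part~1 is the standard and correct argument. The workhorse inequality you state, the telescoping around the $2m$-edge cycle with the single parity flip at the wrap-around edge $(1,m)$, and the final identity $P_{A,C|X,Z}(a,c|x,z)-\tfrac12 P_{C|Z}(c|z)=\tfrac12\bigl[P_{A,C|X,Z}(a,c|x,z)-P_{A,C|X,Z}(a\oplus 1,c|x,z)\bigr]$ combine exactly as you describe to give the factor $\tfrac12$. Your own caveat about parity bookkeeping is well placed but not an obstacle: ordering the edges as $A_{x^*}\!\sim\! B_{x^*}\!\sim\! A_{x^*+1}\!\sim\!\cdots\!\sim\! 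B_m\!\not\sim\! A_1\!\sim\!\cdots\!\sim\! A_{x^*}$ makes the single flip manifest.

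Your Part~2 has a real, if small, gap. Marginalising Part~1 over $c$ actually gives $\bigl|P_{A|X}(a|x)-\tfrac12\bigr|\le\tfrac12\,\mathcal{I}^{m,\text{ch}}_{AB}$ (not $\mathcal{I}^{m,\text{ch}}_{AB}$), and the symmetry $P(a,c)-\tfrac12 P(c)=-[P(\bar a,c)-\tfrac12 P(c)]$ gives, for each fixed $a$, $\sum_c|P_{A,C|X,Z}(a,c|x,z)-\tfrac12 P_{C|Z}(c|z)|\le\tfrac12\,\mathcal{I}^{m,\text{ch}}_{AB}$. Your triangle split then yields $\sum_c|P_{A,C|X,Z}(a,c|x,z)-P_{A|X}(a|x)P_{C|Z}(c|z)|\le \mathcal{I}^{m,\text{ch}}_{AB}$, and after dividing by $2P_{A|X}(a|x)\ge 1-\mathcal{I}^{m,\text{ch}}_{AB}$ you obtain
\[
D\bigl(P_{C|x,y,z,a},P_{C|z}\bigr)\;\le\;\frac{\mathcal{I}^{m,\text{ch}}_{AB}}{1-\mathcal{I}^{m,\text{ch}}_{AB}},
\]
not $\mathcal{I}^{m,\text{ch}}_{AB}$. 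This is the issue you yourself flag about the denominator, and it is genuine: your route does not deliver the exact constant stated. For the purposes of this paper (which only uses the proposition qualitatively, in the small-$\mathcal{I}$ regime) the discrepancy is harmless, but if you want the sharp constant you would need to revisit the original arguments in \cite{BHK,BCK,BKP} rather than pass through Part~1 and Bayes.
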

%and $D(P_{X}, Q_{X})$ denotes the total variation distance $D(P_{X}, Q_{X}) := \frac{1}{2} \sum_{x} \left| P_{X}(x) - Q_{X}(x) \right|$.    
This is also sometimes equivalently stated as the classical mutual information between the outputs $A$ and $C$ of Alice and Eve being bounded by the violation of the chain inequality
%\begin{equation}
$I(A : C) \leq \frac{1}{2}\mathcal{I}^{m, \text{ch}}_{AB}\left(P_{A,B|X,Y} \right)$. 
%\end{equation}

We now show that the above Proposition \ref{prop:chain-sec-ns} does not hold for relativistic causal distributions $\{P_{A,B,C|X,Y,Z}\}$ in the measurement configuration of Fig. \ref{fig:three-party-meas-config}, thus implying that the security of QKD protocols such as \cite{BHK, BCK} that are based on this proposition may be compromised when Eve happens to perform her measurement at an appropriate spacetime location. 
\begin{prop}
\label{prop:qkd-sec-rel-caus}
Consider a two-party QKD protocol where Alice and Bob perform a test of the Braunstein-Caves-Pearle chained Bell inequality $\mathcal{I}^{m}_{\text{ch}}$ (\ref{eq:m-chain}) with $m \geq 2$ of inputs per party, and Eve measures a single input $Z = 1$. Suppose that in some inertial reference frame, the three space-like separated parties are arranged in the measurement configuration of Fig. \ref{fig:three-party-meas-config}, i.e., in $1$-D with $t_A = t_B = t_C$ and $\textbf{r}_{A} < \textbf{r}_B < \textbf{r}_C$ with the intersection of $A$ and $C$'s future light cones contained within the future light cone of $B$. 
%$1$-D, with $r_A < r_B < r_C$ and perform their measurements simultaneously, i.e., $t_A = t_B = t_E$. 
%Let the spacetime coordinates $(t_A, \textbf{r}_A), (t_B, \textbf{r}_B)$ and $(t_C, \textbf{r}_C)$ of the measurement events of Alice, Bob and Eve in some inertial reference frame be such that $t_A = t_B =t_C$ and  
%For any observable $K$ of  Bob, i.e., $K \in \{B_1, \dots, B_m\}$, there exists a relativistic causal box $\mathcal{P}_{A,B,C|X,Y,Z}$ such that 
For fixed $m$, and any chosen input $y$ of Bob, there exists a relativistic causal box $\mathcal{P}_{A,B,C|X,Y,Z}$ such that 
% an input $y$ of Bob, such that  
\begin{eqnarray}
\label{eq:BC-rel}
&&\mathcal{I}^{m, \text{ch}}_{AB}\left(\mathcal{P}_{A,B,C|X,Y,Z} \right) = 0, \; \;\text{and} \nonumber \\
&&D\left( P_{B,C|X,Y,Z}(b,c| x,y,z=1) , P^{u}_{B|Y}(b|y) \times P_{C|X,Y,Z}(c|x,y,z=1) \right) = 1 \; \; \forall  x \in [m].
%\mathcal{P}_{A,C|X,Y,Z}(b \oplus c = 0|x,y,1) &=& 1 
%\; \; \; \text{for fixed $y^* (=0)$}. 
\end{eqnarray}
%%such that $C$ is perfectly correlated with $K$ even when the algebraic violation of $\mathcal{I}^{m, \text{ch}}_{AB}\left(\mathcal{P}_{A,B,C|X,Y,Z}\right)$ is attained, i.e., 
%%\begin{equation}
%%\label{eq:mono-chain-corr}
%%\left[\mathcal{I}^{m, \text{ch}}_{AB} + \langle K C \rangle\right]_{\mathcal{P}_{A,B,C|X,Y,Z}} = 2m+1. 
%%\end{equation}
\end{prop}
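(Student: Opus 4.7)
The plan is to exhibit an explicit three-party box $\mathcal{P}_{A,B,C|X,Y,Z}$ that satisfies the relativistic causality constraints of Prop.~\ref{prop:rel-cau-constraints} for the spacetime configuration of Fig.~\ref{fig:three-party-meas-config}, whose $A$-$B$ marginal saturates the chain Bell expression at its algebraic minimum, and whose $B$-$C$ marginal is maximally correlated, thereby exhibiting the breakdown of monogamy. Let $\mathcal{P}^{\text{ch}}_{A,B|X,Y}$ denote a two-party no-signaling box realizing $\mathcal{I}^{m,\text{ch}}_{AB} = 0$ (such an extremal ``chain box'' exists as a vertex of the two-party no-signaling polytope and has uniform single-party marginals). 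I then define, for Eve's unique input $Z=1$,
\[
\mathcal{P}_{A,B,C|X,Y,Z}(a,b,c|x,y,1) := \mathcal{P}^{\text{ch}}_{A,B|X,Y}(a,b|x,y)\,\delta_{c,b},
\]
i.e.\ Eve's output is a perfect copy of Bob's.

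The first step is to verify the relativistic causality constraints of Prop.~\ref{prop:rel-cau-constraints}. Summing over $c$ yields $P_{A,B|X,Y,Z=1} = \mathcal{P}^{\text{ch}}_{A,B|X,Y}$, independent of Eve's input (which is anyway fixed). Summing over $a$ yields $P_{B,C|X,Y,Z=1}(b,c|x,y,1) = \tfrac12\,\delta_{c,b}$, which is manifestly independent of $x$. The single-party marginals $P_{A|X}$, $P_{B|Y}$, $P_{C|Z}$ are all uniform and well-defined. The only distribution exhibiting an ``anomalous'' dependence is $P_{A,C|X,Y,Z=1}(a,c|x,y,1) = \mathcal{P}^{\text{ch}}_{A,B|X,Y}(a,c|x,y)$, which depends on Bob's input $y$; this is precisely the dependence allowed by Prop.~\ref{prop:rel-cau-constraints} in the given spacetime configuration.

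The second step is straightforward: since the $A$-$B$ marginal of $\mathcal{P}$ coincides with $\mathcal{P}^{\text{ch}}_{A,B|X,Y}$, one has $\mathcal{I}^{m,\text{ch}}_{AB}(\mathcal{P}) = 0$. The third step is to compare $P_{B,C|Y,Z=1}(b,c|y,1) = \tfrac12\delta_{b,c}$ with $P^{u}_{B|Y}(b|y)\cdot P_{C|X,Y,Z=1}(c|x,y,1) = \tfrac14$ for all $(b,c)$, and to read off the total variation distance, which saturates the bound contradicting the monogamy conclusion of Prop.~\ref{prop:chain-sec-ns}. This directly demonstrates that an adversary whose measurement event is placed at the appropriate location can perfectly learn Bob's output from her own, despite a maximal chain-inequality violation between Alice and Bob.

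The main obstacle is the causality check in step one, specifically the justification that the $y$-dependence of $P_{A,C|X,Y,Z=1}$ does not produce a signaling path from Bob to any spacetime point outside his future light cone. This is exactly the content of the geometric argument in the proof of Prop.~\ref{prop:rel-cau-constraints}: in the chosen configuration the intersection of the future light cones of $A$ and $C$ lies within the future light cone of $B$, so any concentration of joint $A$-$C$ information occurs at events timelike-separated from $B$. Hence the box defined above is admissible, and the security claim of Prop.~\ref{prop:chain-sec-ns} fails whenever Eve is free to place her measurement event in such a spacetime location relative to Alice and Bob.
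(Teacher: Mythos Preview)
Your construction is essentially identical to the paper's: the paper writes out the explicit entries of the box, which amount precisely to taking a specific chain-box $\mathcal{P}^{\text{ch}}_{A,B|X,Y}$ (correlated outcomes for $(x,y)\neq(1,m)$, anti-correlated for $(x,y)=(1,m)$) and setting $c=b$ deterministically, i.e.\ exactly your $\mathcal{P}^{\text{ch}}_{A,B|X,Y}(a,b|x,y)\,\delta_{c,b}$. The verification of the relativistic-causality constraints and of $\mathcal{I}^{m,\text{ch}}_{AB}=0$ proceeds the same way in both; your abstract formulation has the minor advantage of making transparent that the mechanism is simply ``Eve clones Bob's outcome,'' which the explicit table in the paper leaves implicit.

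One small remark: with the paper's normalization $D(P,Q)=\tfrac12\sum|P-Q|$, both your box and the paper's give $D\big(\tfrac12\delta_{b,c},\tfrac14\big)=\tfrac12$, not $1$; the essential point, that $D>0$ despite $\mathcal{I}^{m,\text{ch}}_{AB}=0$ (so the monogamy bound of Prop.~\ref{prop:chain-sec-ns} is violated and Eve learns Bob's bit perfectly), is of course unaffected.
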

\begin{proof}
The proof is by explicit construction of the box $\mathcal{P}_{A,B,C|X,Y,Z}$ that satisfies the conditions in (\ref{eq:BC-rel}). Simply, this box is defined by the entries
\begin{eqnarray}
P_{A,B,C|X,Y,Z}(0,1,1|1,m,1) = P_{A,B,C|X,Y,Z}(1,0,0|1,m,1) &=& \frac{1}{2} \nonumber \\
P_{A,B,C|X,Y,Z}(0,0,0|x,y,1) = P_{A,B,C|X,Y,Z}(1,1,1|x,y,1) &=& \frac{1}{2} \; \; \; \forall (x,y) \neq (1,m).
\end{eqnarray}
By construction, the marginal box $\mathcal{P}_{A,B|X,Y}$ outputs correlated answers for all input pairs $(x,y)$ except for $(1,m)$ for which it outputs anti-correlated answers so that $\mathcal{I}^{m, \text{ch}}_{AB}\left(\mathcal{P}_{A,B,C|X,Y,Z} \right) = 0$. Also, by construction each marginal $\mathcal{P}_{A|X}, \mathcal{P}_{B|Y}, \mathcal{P}_{C|Z}$ is uniform and the marginals $\mathcal{P}_{A,B|X,Y}$ and $\mathcal{P}_{B,C|Y,Z}$ are well-defined, so that the box obeys the relativistic causality constraints for this scenario. Finally, Bob's output is perfectly correlated with the output of Eve, so that her guessing probability of his seemingly random output is $1$ inspite of the fact that the chain inequality is maximally violated. Also note that during the key run, as usual in a QKD protocol \cite{BHK, BCK}, Alice and Bob announce their inputs whence Eve is able to guess Alice's output for any of her inputs also with unit probability. 
\end{proof}

\begin{remark} 
We remark that while the attack in Proposition \ref{prop:qkd-sec-rel-caus} is allowed within the laws of relativity in that it does not lead to any causal loops, in a cryptographic scenario one may additionally assume that the honest parties perform their Bell test within shielded laboratories, where the assumption of shielding (which ensures that no influence, superluminal or otherwise, is transmitted from their systems to Eve's and vice versa) is formally equivalent to imposing the no-signaling constraints on the distributions $P_{A,B,C|X,Y,Z}$. 
\end{remark}

\begin{remark}
Additionally, the honest parties may also vary (randomly) the timing of their measurement events to attempt to ensure that Eve's measurement happens outside the space-time region in Proposition \ref{prop:spacetime-region}.   
\end{remark}

\begin{remark}
We also remark that in such applications, the boxes should be labeled by the set of space-time coordinates of the measurement events $P^{(t_A, \textbf{r}_A), (t_B, \textbf{r}_B), (t_C, \textbf{r}_C)}_{A,B,C|X,Y,Z}$ to deduce the relativistic causality constraints obeyed by the box (in each run of the Bell test).  
\end{remark}

\section{Properties of Causal vs No-Signaling theories.}
The general properties of theories that obey the no-signaling constraints (\ref{eq:Multi-party-NS}) were studied in \cite{Mas06}. A number of properties that were considered to be quintessentially quantum such as monogamy of correlations, no-cloning, randomness, secrecy etc. were found to be present in such theories. As we have seen, in general the set of relativistically causal correlations is a larger set than the set of no-signaling correlations. Paradoxically, just as the relaxation to the Minkowski causality conditions lead to a larger set of attack strategies for an eavesdropper (and a consequent loss in extractable randomness), we find that the paradigmatic property of monogamy of non-local correlations which was hitherto thought to be generic in all theories compatible with relativity may be circumvented in specific spacetime measurement configurations.

\subsection{Monogamy of non-local correlations.}
%In the next proposition, e consider the phenomenon of monogamy of non-local correlations under the new relativistically causal constraints. 
We now show that the property of monogamy of non-local correlations is significantly weakened under the relativistic causality constraints and the feature of monogamy of correlations violating the CHSH inequality even disappears in certain spacetime configurations. 
Specifically, we consider the three-party Bell scenario where the parties Alice, Bob and Charlie perform two binary outcome measurements $x, y, z \in \{0,1\}$ and obtain outcomes $a, b, c \in \{0,1\}$ respectively. We label the corresponding binary observables of each party by $A_x, B_y$ and $C_z$ respectively. We consider the well-known CHSH inequality \cite{CHSH} between Alice-Bob and Bob-Charlie. The CHSH expression $\langle CHSH \rangle_{AB}$ reads as
\begin{equation}
\langle CHSH \rangle^{(AB)} := \langle A_0 B_0 \rangle + \langle A_0 B_1 \rangle + \langle A_1 B_0 \rangle - \langle A_1 B_1 \rangle,
\end{equation}
where as usual $\langle A_x B_y \rangle = P_{A,B|X,Y}(a \oplus b = 0|x, y) - P_{A,B|X,Y}(a \oplus b = 1|x,y)$. 
In a local hidden variable theory, the value $\langle CHSH \rangle_{AB}$ is bounded by $2$. The well-known Popescu-Rohrlich (PR) box is a two-party no-signaling box that achieves a value of $4$ for the expression. In the three-party scenario, under the usual no-signaling constraints, the non-local correlations exhibit a phenomenon of monogamy that is captured by the relation \cite{Toner}
\begin{equation}
\label{eq:ns-mono}
\langle CHSH \rangle^{(AB)} + \langle CHSH \rangle^{(BC)} \leq 4.
\end{equation}
In other words, when Alice-Bob observe maximum violation of the CHSH inequality, no correlations can occur between the observables of Bob and Charlie. This phenomenon of monogamy of correlations has found application as the underlying feature that is responsible for the security of many device-independent cryptographic protocols. By observing a sufficiently high violation of the Bell inequality, Alice and Bob are able to ensure that their systems are not highly correlated with any system held by a third party such as an eavesdropper. 

We find however that under the relativistically causal constraints that occur when the three parties are arranged in $1$-dimension, the phenomenon of monogamy is considerably weakened in general and in the above mentioned Bell scenario, it completely disappears. This is captured by the following proposition.      

\begin{prop}
\label{prop:rel-caus-mono}
Consider a three-party Bell scenario, with Alice, Bob and Charlie each performing two measurements $x, y, z \in \{0,1\}$ of two outcomes $a, b, c \in \{0,1\}$ respectively. Suppose that in some inertial reference frame, the three space-like separated parties are arranged in $1$-D, with $r_A < r_B < r_C$ and perform their measurements simultaneously, i.e., $t_A = t_B = t_C$. Then, there exists a three-party relativistically causal box $\mathcal{P}_{A,B,C|X,Y,X}(a,b,c|x,y,z)$ such that
\begin{equation}
\label{eq:rel-caus-mono}
\langle CHSH \rangle^{(AB)}_{\mathcal{P}} + \langle CHSH \rangle^{(BC)}_{\mathcal{P}} = 8.
\end{equation}   
\end{prop}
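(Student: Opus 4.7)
The plan is to exhibit an explicit three-party box whose Alice-Bob and Bob-Charlie marginals are both Popescu-Rohrlich (PR) boxes, while living inside the larger polytope of relativistically causal correlations identified in Prop.~\ref{prop:rel-cau-constraints}. The key observation driving the construction is that in the given measurement configuration, only the marginals $P_{A,B|X,Y}$, $P_{B,C|Y,Z}$ and the single-party marginals must be well-defined; the marginal $P_{A,C|X,Z}$ is permitted to depend on Bob's input $Y$. Saturation of Eq.~(\ref{eq:rel-caus-mono}) then reduces to finding a joint distribution compatible with both PR relations simultaneously.

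First I would check that the stated spacetime arrangement ($t_A=t_B=t_C$ with $r_A<r_B<r_C$) indeed falls within the regime of Fig.~\ref{fig:three-party-meas-config}, i.e.\ that the intersection of the future light cones of $A$ and $C$ is contained in the future light cone of $B$. A short case analysis on the spatial coordinate $r$ of a candidate event $(t,r)$ in the $A$-$C$ intersection shows $t\geq\max(|r-r_A|,|r-r_C|)\geq |r-r_B|$ whenever $r_A<r_B<r_C$, so Prop.~\ref{prop:rel-cau-constraints} applies and the relevant constraints are those of Eq.~(\ref{eq:rel-caus-3-party-1}).

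Next I would define the box, with binary inputs and outputs, by the simultaneous coupling
\begin{equation}
\mathcal{P}_{A,B,C|X,Y,Z}(a,b,c|x,y,z) \;=\; \tfrac{1}{2}\,\delta_{a\oplus b,\,xy}\,\delta_{b\oplus c,\,yz},
\end{equation}
which can be realized operationally by drawing $b\in\{0,1\}$ uniformly and setting $a=b\oplus xy$, $c=b\oplus yz$. Summing over $c$ yields $P_{A,B|X,Y,Z}(a,b|x,y,z)=\tfrac{1}{2}\delta_{a\oplus b,xy}$, which is independent of $z$ and is precisely a PR box on $A$-$B$; summing over $a$ gives the analogous PR box on $B$-$C$, independent of $x$. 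Summing over any pair of outputs yields the uniform distribution on the remaining party, so all single-party marginals are well-defined, and the four constraints of Eq.~(\ref{eq:rel-caus-3-party-1}) are satisfied. The only non-standard marginal is $P_{A,C|X,Y,Z}(a,c|x,y,z)=\tfrac{1}{2}\delta_{a\oplus c,\,y(x\oplus z)}$, whose explicit dependence on $y$ is exactly the phenomenon that relativistic causality permits here.

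Finally, since the $A$-$B$ and $B$-$C$ marginals are PR boxes, they each saturate the algebraic maximum of CHSH, giving $\langle CHSH\rangle^{(AB)}_{\mathcal{P}}=\langle CHSH\rangle^{(BC)}_{\mathcal{P}}=4$ and hence the sum $8$. The main conceptual point — not really a technical obstacle but the step that required insight — is recognizing that in the relativistically causal polytope one can use Bob as a shared randomness source whose output $b$ serves as the common bit that is XORed with $xy$ on one side and with $yz$ on the other, a coupling that is explicitly forbidden by the tripartite no-signaling constraints of Eq.~(\ref{eq:Multi-party-NS}) because it forces $P_{A,C|X,Y,Z}$ to depend on $y$.
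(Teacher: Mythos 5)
Your construction is exactly the box the paper writes out in tabular form: the paper's explicit entries are precisely $\mathcal{P}(a,b,c|x,y,z)=\tfrac12\,\delta_{a\oplus b,\,xy}\,\delta_{b\oplus c,\,yz}$, so your compact formula and the paper's table describe the same object, and your verification (including the check that the given spacetime arrangement is of the required type, and the reduction to two PR-box marginals each saturating CHSH at $4$) matches the paper's argument, spelled out in a bit more detail than the paper's "it can be readily verified."
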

\begin{proof}
See the Supplemental Material. 
\end{proof}

\section{Explaining quantum correlations by finite (superluminal) speed v-causal models.}
\label{sec:v-causal-exp}
In \cite{BPAL+12, BBLG13}, the question was considered whether quantum correlations somehow arise from outside spacetime or can the correlations be explained by causal (even if superluminal) influences propagating continuously in space. The possibility of ``$v$-causal theories" was considered, i.e., theories incorporating a finite superluminal influence propagating at speed $v > c$ in a privileged reference frame. For two-party Bell experiments, the best one can hope to do is to experimentally lower-bound such a $v$ by testing Bell inequality violations between systems in laboratories farther apart and better synchronized. However, in the multi-party scenario, the authors of \cite{BPAL+12, BBLG13} showed that surprisingly the strength of multi-party quantum nonlocality is large enough that any $v$-causal model that attempts to explain the quantum correlations also gives rise to predictions that can lead to superluminal communication. The argument used to reach this conclusion relied on a special kind of Bell inequality, termed ``hidden influence" inequalities. Here, we re-examine this argument in light of the formulation of relativistic causality and free-choice constraints in this paper.
% to ensure that a finite speed explanation would also lead to causal loops. 

%Firstly remark that as seen before, in order to have a violation of causality, it is necessary to consider events from different inertial reference frames. This is for example seen in Figure \ref{fig:two-party-ccl} where the introduction of Charlie and Dave moving with uniform velocity relative to Alice and Bob was necessary to create a closed causal loop and the consequent violation of causality. While in \cite{BPAL+12, BBLG13}, the authors consider events in a privileged reference frame $\mathcal{I}$, they still show that a superluminal transmission of information occurs from an event at a specific spacetime location to an event outside its future light cone, so that in this regard a causal loop would still result.  

In particular, two measurement configurations were considered: a four-party Bell experiment in \cite{BPAL+12} and a three-party Bell experiment in \cite{BBLG13}. In the four-party Bell experiment \cite{BPAL+12} with spacelike separated parties $A$, $B$, $C$ and $D$, the measurement configuration was as shown in Figure \ref{fig:four-party-v-caus}. Here, in addition to the usual light cones, the $v$-causal model also gives rise to a past and future $v$-cone in a privileged reference frame. In these models, two events that are causally related by a $v$ signal could be correlated while correlations between events that are outside each other's $v$ cones are assumed to be local.   
%while in the three-party Bell scenario \cite{BBLG13} the measurement configuration was as shown in Figure \ref{fig:three-party-v-caus}.
%The attempt was to rule out $v$-causal models of a particular type: 
A particular dichotomy theorem was proven: namely that either the statistics of $BC|AD$ is local or the no-signaling conditions are violated. This was based on the following ``hidden influence" inequality \cite{BPAL+12}
\begin{eqnarray}
\label{eq:hidden-inf-ineq}
\mathcal{I} &=& -3 \langle A_0 \rangle - \langle B_0 \rangle - \langle B_1 \rangle - \langle C_0 \rangle - 3 \langle D_0 \rangle 
- \langle A_1 B_0 \rangle - \langle A_1 B_1 \rangle  + \langle A_0 C_0 \rangle + 2 \langle A_1 C_0 \rangle + \langle A_0 D_0 \rangle \nonumber \\
&&+ \langle B_0 D_1 \rangle 
- \langle B_1 D_1 \rangle - \langle C_0 D_0 \rangle - 2 \langle C_1 D_1 \rangle + \langle A_0 B_0 D_0 \rangle + \langle A_0 B_0 D_1 \rangle + \langle A_0 B_1 D_0 \rangle
- \langle A_0 B_1 D_1 \rangle  \nonumber \\ 
&&  - \langle A_1 B_0 D_0 \rangle - \langle A_1 B_1 D_0 \rangle
 +\langle A_0 C_0 D_0 \rangle + 2 \langle A_1 C_0 D_0 \rangle - 2 \langle A_0 C_1 D_1 \rangle \leq 7,
\end{eqnarray}
which is satisfied by all no-signaling boxes $P_{A,B,C,D|X,Y,Z,W}$ for which the conditional boxes $P_{B,C|Y,Z,A,X,D,W}$ are local. The hidden influence inequality (\ref{eq:hidden-inf-ineq}) is violated by quantum boxes which achieve a value $\mathcal{I}^{\text{qm}} \simeq 7.2$ giving rise to the conclusion that quantum correlations cannot be explained by finite-speed $v$-causal influences. 
%%Here, we show that this violation does not in fact lead to a violation of relativistic causality, thereby rendering the question whether quantum correlations admit explanation via finite-speed superluminal influences open again. 
%\begin{prop}
%\label{prop:rel-caus-finite-speed}
%In the measurement configuration in Fig. \ref{fig:four-party-v-caus}, the quantum correlations violating the respective hidden influence inequality in Eq.(\ref{eq:hidden-inf-ineq}) can be explained by a finite-speed $v$-causal model without violating relativistic causality. 
%under the reduced free-will definition in Def. \ref{def:mod-freewill}.
%\end{prop}
%\begin{proof}

Now, by Eq. (\ref{eq:rel-caus}), the set of constraints that are necessary and sufficient to preserve causality in the measurement configuration of Fig. \ref{fig:four-party-v-caus} is given by the condition that the marginal distribution of the outputs for any contiguous subset of parties is well-defined, i.e., that $P_{A_1, A_2, A_3|X_1, X_2, X_3}(a_1, a_2, a_3 | x_1, x_2, x_3)$, $P_{A_2, A_3, A_4| X_2, X_3, X_4}(a_2, a_3, a_4 | x_2, x_3, x_4)$, $P_{A_1, A_2|X_1, X_2}(a_1, a_2 | x_1, x_2)$, $P_{A_2, A_3|X_2, X_3}(a_2, a_3 | x_2, x_3)$, $P_{A_3, A_4|X_3, X_4}(a_3, a_4|x_3, x_4)$, $P_{A_1|X_1}(a_1|x_1)$, $P_{A_2|X_2}(a_2|x_2)$, $P_{A_3|X_3}(a_3|x_3)$ and $P_{A_4|X_4}(a_4|x_4)$ are well-defined. 
\begin{eqnarray}
\label{eq:four-party-rel-caus}
\sum_{a_4} P(a_1, a_2, a_3, a_4 | x_1, x_2, x_3, x_4) &=& \sum_{a_4} P(a_1, a_2, a_3, a_4 | x_1, x_2, x_3, x'_4) \; \; \; \forall  a_1, a_2, a_3, x_1, x_2, x_3, x_4, x'_4 \nonumber \\
\sum_{a_1} P(a_1, a_2, a_3, a_4 | x_1, x_2, x_3, x_4) &=& \sum_{a_1} P(a_1, a_2, a_3, a_4 | x'_1, x_2, x_3, x_4) \; \; \; \forall a_2, a_3, a_4, x_2, x_3, x_4, x_1, x'_1 \nonumber \\
\sum_{a_1, a_2} P(a_1, a_2, a_3, a_4|x_1, x_2, x_3, x_4) &=& \sum_{a_1, a_2} P(a_1, a_2, a_3, a_4|x'_1, x'_2, x_3, x_4) \; \; \; \forall a_3, a_4, x_3, x_4, x_1, x_2, x'_1, x'_2 \nonumber \\
\sum_{a_2, a_3} P(a_1, a_2, a_3, a_4|x_1, x_2, x_3, x_4) &=& \sum_{a_2, a_3} P(a_1, a_2, a_3, a_4 | x_1, x'_2, x'_3, x_4) \; \; \; \forall a_1, a_4, x_1, x_4, x_2, x_3, x'_2, x'_3 \nonumber \\
\sum_{a_3, a_4} P(a_1, a_2, a_3, a_4|x_1, x_2, x_3, x_4) &=& \sum_{a_3, a_4} P(a_1, a_2, a_3, a_4 | x_1, x_2, x'_3, x'_4) \; \;\; \forall a_1, a_2, x_1, x_2, x_3, x_4, x'_3, x'_4 \nonumber \\
\sum_{a_2, a_3, a_4} P(a_1, a_2, a_3, a_4 | x_1, x_2,x_3, x_4) &=& \sum_{a_2, a_3, a_4} P(a_1, a_2, a_3, a_4 | x_1, x'_2, x'_3, x'_4) \; \; \; \forall a_1, x_1, x_2, x_3, x_4, x'_2, x'_3, x'_4 \nonumber \\
\sum_{a_1, a_2, a_3} P(a_1, a_2, a_3, a_4 | x_1, x_2, x_3, x_4) &=& \sum_{a_1, a_2, a_3} P(a_1, a_2, a_3, a_4 | x'_1, x'_2, x'_3, x_4) \; \; \; \forall a_4, x_4, x_1, x_2, x_3, x'_1, x'_2, x'_3 \nonumber \\
\sum_{a_1, a_3, a_4} P(a_1, a_2, a_3, a_4 | x_1, x_2, x_3, x_4) &=& \sum_{a_1, a_3, a_4} P(a_1, a_2, a_3, a_4 | x'_1, x_2, x'_3, x'_4) \; \; \; \forall a_2, x_2, x_1, x_3, x_4, x'_1, x'_3, x'_4 \nonumber \\
\sum_{a_1, a_2, a_4} P(a_1, a_2, a_3, a_4 | x_1, x_2, x_3, x_4) &=& \sum_{a_1, a_2, a_4} P(a_1, a_2, a_3, a_4 | x'_1, x'_2, x_3, x'_4) \; \; \; \forall a_3, x_3, x_1, x_2, x_4, x'_1, x'_2, x'_4. 
\end{eqnarray}
For the particular sequence of measurement events in the Fig. \ref{fig:four-party-v-caus}, with $t_A < t_D < t_B \sim t_C$, the $v$-causal model predicts that the correlations $BC|AD$ of systems $B$ and $C$ conditioned upon the input-output of systems $A$ and $D$ are local. This implies by the result of \cite{BPAL+12} that the usual no-signaling conditions which impose that $P_{A_1, A_2, A_3|X_1, X_2, X_3}(a_1, a_2, a_3|x_1, x_2, x_3)$, $P_{A_1, A_2, A_4|X_1, X_2, X_4}(a_1, a_2, a_4|x_1, x_2, x_4)$, $P_{A_1, A_3, A_4|X_1, X_3, X_4}(a_1, a_3, a_4|x_1, x_3, x_4)$ and $P_{A_2, A_3, A_4| X_2, X_3, X_4}(a_2, a_3, a_4 | x_2, x_3, x_4)$ are well-defined is violated. In \cite{BPAL+12}, the authors argue that it is the $P_{A_1, A_2, A_3|X_1, X_2, X_3}(a_1, a_2, a_3|x_1, x_2, x_3)$ or the $P_{A_2, A_3, A_4| X_2, X_3, X_4}(a_2, a_3, a_4 | x_2, x_3, x_4)$ that is not well-defined. The argument is that since the $ACD$ measurement events happen \textit{outside} the future $v$-cone of $B$ in Fig. \ref{fig:four-party-v-caus}, and $B$ chooses his measurement \textit{freely} at that location (with the notion of free will according to Definition \ref{lem:CR-freewill}), $B$'s input $X_2$ cannot cause (be correlated to) the output distribution of $ACD$. An analogous argument gives that the joint output distribution of $ABD$ cannot depend on $C$'s input $X_3$. 

%As shown in Theorem \ref{} 
%However, note 
%This argument is refuted as an argument against causality violation
%\textit{under the reduced free-will assumption} 
%by the following considerations:
While the above argument was in a single privileged reference frame, to make the argument for causality violation by the $v$-causal model (modulo the assumptions of \cite{BPAL+12}), one must take into account the following.
\begin{enumerate}
\item A consideration of measurement events in different inertial reference frames is necessary for superluminal influences to lead to causal loops. %remark that as seen before, in order to have a violation of causality, it is necessary to consider events from different inertial reference frames. 
This is for example seen in Figure \ref{fig:two-party-ccl} where the introduction of Charlie and Dave moving with uniform velocity relative to Alice and Bob was necessary to create a closed causal loop and the consequent violation of causality. Superluminal signals within a single reference frame do not lead to any causal paradoxes. Note also that, when  investigating $v$-causal models in different reference frames, one can always find a reference frame in which two space-like separated events are simultaneous, in which case the non-locality seen from that reference frame requires an explanation in terms of influences propagating at infinite speed (in that frame). 

\item An influence propagating faster-than-light in one frame of reference, can travel back in time in another frame. Explicitly, consider a superluminal $v$-speed influence propagating in an inertial reference frame $\mathcal{I}$ from $(t_B, x_B)$ to the event at $(t_D, x_D)$ with $v = \frac{x_D - x_B}{t_D - t_B}$ where $x_D > x_B$ and $t_D > t_B$ so that $D$ happens after $B$ in frame $\mathcal{I}$. Now, consider another reference frame $\mathcal{I}'$ moving at speed $u < c$ relative to $\mathcal{I}$. According to the Lorentz transformations, we have
\begin{eqnarray}
t'_D - t'_B &=& \frac{(t_D - t_B) - \frac{u}{c^2} (x_D - x_B)}{\sqrt{1 - \frac{u^2}{c^2}}} \nonumber \\
&=& (t_D - t_B) \frac{1 - \frac{u v}{c^2}}{\sqrt{1 - \frac{u^2}{c^2}}}.
\end{eqnarray}
Now with $v > c$, we may find a reference frame $\mathcal{I}'$ moving at speed $u < c$ and obeying $u v > c^2$ so that $1 - \frac{uv}{c^2} < 0$. In this case, we get $t'_D < t'_B$ and the event $D$ precedes the event $B$ in reference frame $\mathcal{I}'$. 
In particular, the above argument shows that the fact that the $ACD$ measurement events are outside the future $v$-cone of $B$ in a privileged reference frame does not imply that they will remain outside the future $v$-cone of $B$ in every inertial reference frame, since it is $c$-causality that is preserved by the Lorentz transformation. Therefore, one may find inertial frames in which a superluminal influence propagates backwards in time and influences the $ACD$ distribution. That such an influence would not lead to causal loops in the measurement configuration of Fig. \ref{fig:four-party-v-caus} is guaranteed by the argument of Proposition \ref{prop:rel-cau-constraints} since the intersection of the future $c$-cones of $A$, $C$ and $D$ lies entirely within the future $c$-cone of $B$. Analogously, the $ABD$ correlations may also depend on $C$'s input. Remark however, that in order for the argument of \cite{BPAL+12} to work, it is necessary to assume that the $BC|AD$ correlations are local, despite this fact. 
 
\item In order to incorporate superluminal influences within a theory with the aim to preserve causality, it is necessary to consider the free-choice definition in Def. \ref{lem:mod-freewill}. In particular, since such superluminal influences allow correlations between measurement inputs and joint distributions of sets of measurement outcomes as long as they do not lead to causal paradoxes, the imposition of the free-choice condition in Def. \ref{lem:CR-freewill} is too strong a requirement in a theory with superluminal influences. A modification of the measurement configuration is then required to ensure that the argument stands even with the notion of free-will in Def. \ref{lem:mod-freewill}. As an example, let us impose a requirement that the parties' measurement events be performed at the coordinates given by $(t_A, \textbf{r}_A) = \left(0, (0,0,0) \right)$, $(t_D, \textbf{r}_D) = \left(\frac{1}{v}, (1,1,0) \right)$, $(t_B, \textbf{r}_B) = \left(\frac{1+\sqrt{2}}{v}, (1,0,0) \right)$ and $(t_C, \textbf{r}_C) = \left(\frac{1+\sqrt{2}}{v}, (0,1,0) \right)$. For $v > (1+\sqrt{2}) c$, these coordinates ensure that the usual no-signaling restrictions are both necessary and sufficient to preserve causality, while still ensuring that the $v$-causal order $A < D < (B \sim C)$ is maintained. In other words, one can check that for any subset of the parties, there exists a point in the intersection of their future light cones that is outside the future light cone of the remaining party. An explanation of the quantum violation of the hidden influence inequality in this configuration would then lead to the conclusion (modulo the assumptions above and in \cite{BPAL+12}) that the $v$-causal model results in a causal loop. 
%Finally, remark that by the argument in step 1 above, a superluminal influence in one frame can move backwards in time in another frame, so even if the measurement $B$ were to be chosen freely according to Def. \ref{lem:CR-freewill}, it is still possible for it to be correlated with the output distribution of $ACD$ in the framework of \cite{BPAL+12}.  
\end{enumerate}
%In summary, propagation of superluminal influences in the particular measurement configuration in Fig. \ref{fig:four-party-v-caus} does not lead to any causal paradoxes. 
%We remark that in light of item $(3)$ above, the argument stands even without the modification of no-signaling constraints considered in this paper.  

%We now turn to the superluminal influences within the measurement configuration of Fig. \ref{fig:three-party-v-caus}. 
For the three-party measurement configuration of \cite{BBLG13}, an analogous argument as above shows that since the intersection of $A$ and $C$'s future light cones is contained within $B$'s future light cone, by the argument in Proposition \ref{prop:rel-cau-constraints} no causal loops would result if the correlations $\mathcal{C}_{AC}$ depend upon $B$'s input. A suitable modification of the measurement configuration would ensure that the argument is maintained even under the notion of free-choice in (\ref{lem:mod-freewill}). 

\begin{center}
\begin{figure}[t!]
		\includegraphics[width=0.75\textwidth]{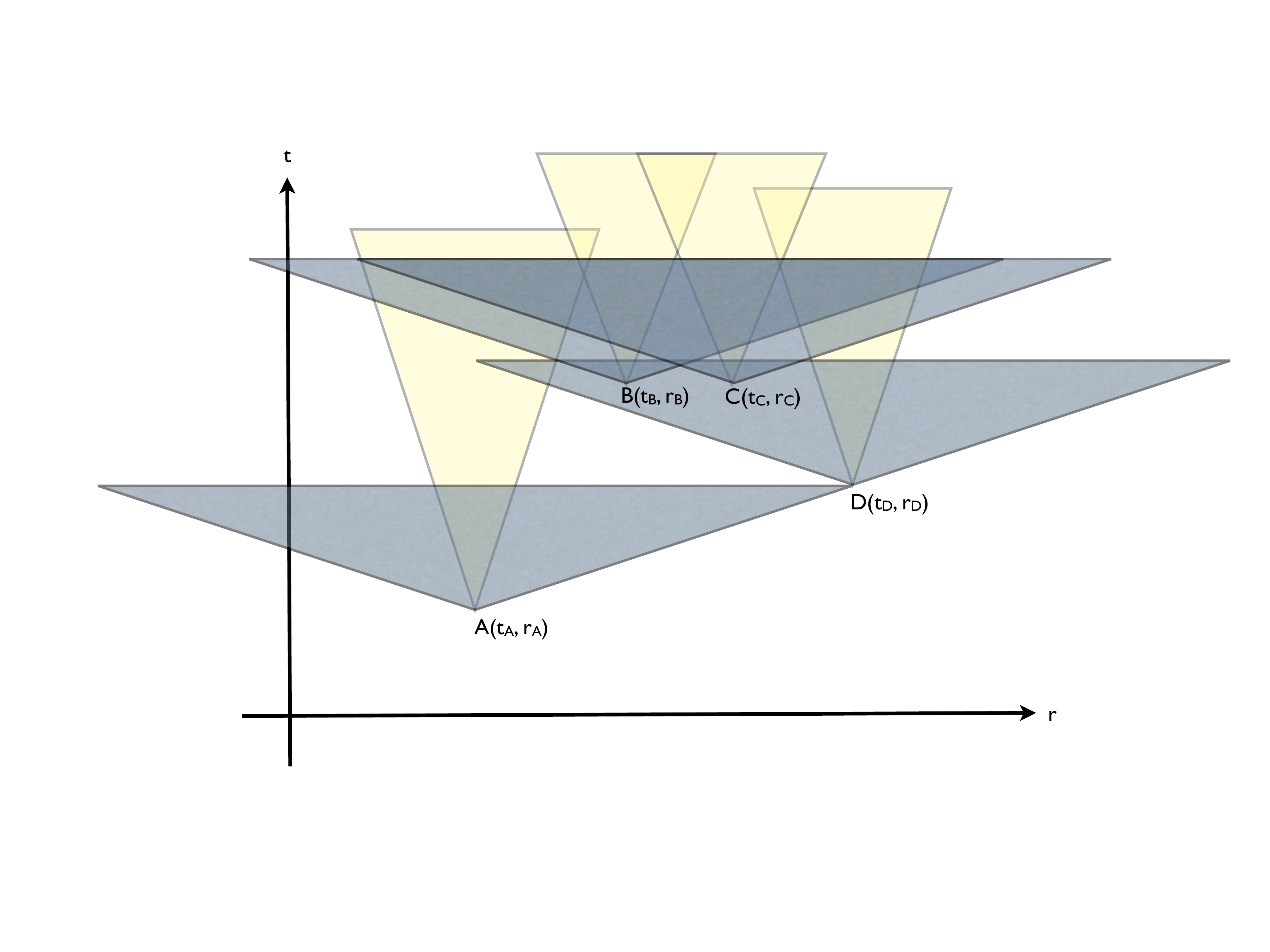}
\caption{The spacetime measurement configuration of the four-party Bell experiment considered in \cite{BPAL+12}. The spacetime coordinates of the measurement events of the four parties are labeled as $(t_A, \textbf{r}_A)$, $(t_B, \textbf{r}_B)$, $(t_C, \textbf{r}_C)$ and $(t_D, \textbf{r}_D)$ respectively. The yellow cones denote the future light cones of the four measurement events in the chosen reference frame while the grey cones denote the $v$-like future cones, for $v > c$. The argument in \cite{BPAL+12} is that due to a randomness in the choice of measurement time and freedom in the choice of measurement setting, either $B$ or $C$ is unable to influence the correlations $ACD$, $ABD$ respectively.}
	\label{fig:four-party-v-caus}
\end{figure}
\end{center}

%%\begin{center}
%%\begin{figure}[t!]
%%		\includegraphics[width=0.75\textwidth]{three-party-v-causal.pdf}
%%\caption{.}
%%	\label{fig:three-party-v-caus}
%%\end{figure}
%%\end{center}
%We find that in the scenarios considered in 

\section{Multiparty Nonlocality.}
Compared with the scenario of two-party nonlocality where $P_{A,B|X,Y}$ is either local or non-local, in the multi-party scenario, different kinds of nonlocality can be distinguished. In the tripartite scenario, the local correlations are defined as usual as those $P_{A,B,C|X,Y,Z}$ that can be expressed as
\begin{equation}
P_{A,B,C|X,Y,Z}(a,b,c|x,y,z) = \sum_{\lambda} q_{\Lambda}(\lambda) P_{A|X,\Lambda}(a| x, \lambda) P_{B|Y, \Lambda}(b|y, \lambda) P_{C|Z, \Lambda}(c|z, \lambda),
\end{equation}
with $\sum_{\lambda} q_{\Lambda}(\lambda) = 1$. 
While correlations not of the form above are non-local, different kinds of tripartite nonlocality can be distinguished \cite{BBGP13, GWAN12}. 

In pioneering work \cite{Sve87}, Svetlichny introduced the notion of genuine $3$-way nonlocal or Svetlichny nonlocal correlations which are those $P_{A,B,C|X,Y,Z}$ that \textit{cannot} be expressed in the form
\begin{eqnarray}
\label{eq:bilocal}
P_{A,B,C|X,Y,Z}(a,b,c|x,y,z) &=& r_{AB|C} \sum_{\lambda} q_{\Lambda}(\lambda) P_{A,B| X, Y, \Lambda}(a,b|x,y,\lambda) P_{C|Z, \Lambda}(c|z, \lambda) \nonumber \\ &+&  r_{AC|B} \sum_{\gamma} q_{\Gamma}(\gamma) P_{A,C|X,Z, \Gamma}(a,c|x,z, \gamma) P_{B|Y, \Gamma}(b|y, \gamma)  \nonumber \\ &+& r_{BC|A} \sum_{\upsilon} q_{\Upsilon}(\upsilon) P_{B,C|Y, Z, \Upsilon}(b,c|y,z, \upsilon) P_{A|X, \Upsilon}(a|x,\upsilon),  
\end{eqnarray}
with $r_{AB|C}, r_{AC|B}, r_{BC|A} \geq 0$, $r_{AB|C} + r_{AC|B} + r_{BC|A} = 1$ and $\sum_{\lambda} q_{\Lambda}(\lambda) = \sum_{\gamma} q_{\Gamma}(\gamma) = \sum_{\upsilon} q_{\Upsilon}(\upsilon) = 1$. 
Here, the bipartite distributions are \textit{not} required to obey the bipartite no-signaling conditions, so these can be arbitrary signaling correlations. Correlations of the form in Eq.(\ref{eq:bilocal}) are referred to as \textit{bilocal} (BL) or ``$S_2$-local" . Svetlichny introduced an inequality, the violation of which guarantees that the correlations are Svetlichny nonlocal, i.e., not of the form in (\ref{eq:bilocal}). The inequality is expressed in terms of the correlation functions  $\langle A_x B_y C_z \rangle$ with 
\begin{eqnarray}
\label{eq:prob-corrfn-old}
P_{A,B,C|X,Y,Z}(a,b,c|x,y,z) = \frac{1}{8} [1 + (-1)^a \langle A_x \rangle + (-1)^b \langle B_y \rangle + (-1)^c \langle C_z \rangle + (-1)^{a + b} \langle A_x B_y \rangle + \nonumber \\
(-1)^{b+ c} \langle B_y C_z \rangle + (-1)^{a+ c} \langle A_x C_z \rangle + (-1)^{a+b+c} \langle A_x B_y C_z \rangle ].
\end{eqnarray}
In terms of the correlators, Svetlichny's inequality is explicitly written as
\begin{eqnarray}
\label{eq:Svetlichny-ineq}
\mathcal{I}_{\text{Sve}} := \vert \langle A_1 B_1 C_1 \rangle + \langle A_1 B_1 C_2 \rangle + \langle A_2 B_1 C_1 \rangle - \langle A_2 B_1 C_2 \rangle + \langle A_1 B_2 C_1 \rangle - \langle A_1 B_2 C_2 \rangle - \langle A_2 B_2 C_1 \rangle - \langle A_2 B_2 C_2 \rangle \vert \leq 4. 
\end{eqnarray} 
Suitable measurements on $GHZ$ and $W$-class states lead to a violation of the inequality showing that quantum correlations are genuinely tripartite non-local according to this notion. 

Alternatively, when the bipartite distributions in (\ref{eq:bilocal}) obey the two-party no-signaling constraints, the resulting correlations $P_{A,B,C|X,Y,Z}$ are said to be \textit{no-signaling bilocal} (NSBL). In the intermediate situation, when the bipartite distributions allow for signaling in one direction, the resulting correlations are termed \textit{time-ordered bilocal} (TOBL) i.e., when $P_{A,B|X,Y, \Lambda}$ and the other bipartite distributions are of the form $P^{A \rightarrow B}_{A,B|X,Y, \Lambda}$ or $P^{B \rightarrow A}_{A,B|X,Y, \Lambda}$ where these obey only the unidirectional no-signaling restrictions
\begin{eqnarray}
P^{A \rightarrow B}_{A|X, \Lambda}(a|x,\lambda) &=& \sum_{b} P^{A \rightarrow B}_{A,B|X,Y, \Lambda}(a,b|x,y,\lambda) \nonumber \\
P^{B \rightarrow A}_{B|Y, \Lambda}(b|y,\lambda) &=& \sum_{a} P^{B \rightarrow A}_{A,B|X,Y, \Lambda}(a,b|x,y,\lambda). 
\end{eqnarray}   
Clearly, $NSBL \subset TOBL \subset BL$. In formulating an operational framework for nonlocality \cite{GWAN12}, i.e., terming genuine tripartite nonlocality as that which cannot be created by local operations and allowed classical communication when two parties collaborate, the authors of \cite{GWAN12} showed that surprisingly being outside the set $BL$ is sufficient but not necessary for genuine tripartite nonlocality, i.e., there exist correlations $P_{A,B,C|X,Y,Z} \in BL$ which would also be genuinely tripartite nonlocal under an operational consideration. In other words, there are correaltions in $BL$ which under wirings and classical communication prior to the inputs between $B$ and $C$ can result in nonlocality in the cut $A - BC$, so that violation of Svetlichny's inequality is sufficient but not necessary for genuine multiparty nonlocality. A fundamental reason why the seemingly natural definition of Svetlichny (\ref{eq:bilocal}) does not capture the operational notion of genuine multi-way nonlocality is the lack of the no-signaling restrictions on the bipartite distributions in (\ref{eq:bilocal}). Violation of an inequality separating $NSBL$ correlations from the rest by $P_{A,B,C|X,Y,Z}$ indicates that these correlations are genuinely $3$-way NS nonlocal \cite{BBGP13}. Here we review the notion of genuine $3$-way nonlocality in light of the considerations of previous sections. In particular, we introduce a notion of relativistically causal bilocal (RCBL) correlations and propose an inequality the violation of which suggests that the resulting correlations $P_{A,B,C|X,Y,Z}$ are genuinely relativistically causal $3$-way nonlocal. 

\begin{definition}
\label{def:rel-caus-bilocal}
Suppose that $P_{A,B,C|X,Y,Z}(a,b,c|x,y,z)$ can be written in the form
\begin{eqnarray}
P_{A,B,C|X,Y,Z}(a,b,c|x,y,z) &=& r_{AB|C} \sum_{\lambda} q_{\Lambda}(\lambda) P_{A,B|X,Y,\lambda}(a,b|x,y) P_{C|Z,\lambda}(c|z) \nonumber \\ &+& r_{AC|B} \sum_{\gamma} q_{\Gamma}(\gamma) P_{A,C|X,Y,Z, \Gamma}(a,c|x,y,z, \gamma) P_{B|Y, \Gamma}(b|y, \gamma)  \nonumber \\
&+& r_{BC|A} \sum_{\upsilon} q_{\Upsilon}(\upsilon) P_{B,C|Y,Z,\Upsilon}(b,c|y,z,\upsilon) P_{A|X,\Upsilon}(a|x,\upsilon) 
\end{eqnarray}
with $r_{AB|C}, r_{AC|B}, r_{BC|A} \geq 0$, $r_{AB|C} + r_{AC|B} + r_{BC|A} = 1$ and $\sum_{\lambda} q_{\Lambda}(\lambda) = \sum_{\gamma} q_{\Gamma}(\gamma) = \sum_{\upsilon} q_{\Upsilon}(\upsilon) = 1$, 
where the terms obey the relativistic causality constraints Eq.(\ref{eq:rel-caus-3-party-1}), i.e., each of the marginals $P_{A|X}$, $P_{B|Y}$ and $P_{C|Z}$ is well-defined independently of the other parties' inputs while the two-party term $P_{A,C|X,Y,Z, \Gamma}$ exhibits explicit dependence on $Y$. Then the correlations $P_{A,B,C|X,Y,Z}(a,b,c|x,y,z)$ are said to be relativistically causal bi-local (RCBL). Otherwise, we say that they are genuinely tripartite relativistically causal non-local.  
\end{definition}

In the final statements of this section, we will show $RCBL$ correlations that violate the Svetlichny inequality and are hence outside $BL$  ($RCBL \nsubseteq BL$) and also we exhibit bilocal $BL$ correlations that are outside the $RCBL$ set ($BL \nsubseteq RCBL$) showing that these two sets are incomparable and give rise to two distinct notions of genuine tripartite non-locality.  
%Clearly, $NSBL \subseteq RCBL$ and $BL \nsubseteq RCBL$ since the set of bilocal correlations include those which allow signaling between two parties at a time. We now show also that $RCBL \nsubseteq BL$ so that these two sets are incomparable in general. 
\begin{lemma} 
We have 
\begin{equation}
NSBL \subseteq RCBL \nsubseteq BL. 
\end{equation}
\end{lemma}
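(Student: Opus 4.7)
The lemma decomposes into two separate claims: the inclusion $NSBL \subseteq RCBL$ and the non-containment $RCBL \nsubseteq BL$. I would handle them in that order.

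The inclusion $NSBL \subseteq RCBL$ reduces to an inspection of definitions. An NSBL distribution has a decomposition (\ref{eq:bilocal}) in which every bipartite term respects two-party no-signaling; in particular, the $(A,C)$ term $P_{A,C|X,Z,\Gamma}$ carries no $Y$-dependence. Since the RCBL definition merely allows---but does not require---such $Y$-dependence, the NSBL decomposition is automatically a valid RCBL decomposition, provided the convex combination satisfies the relativistic causality constraints (\ref{eq:rel-caus-3-party-1}). A short calculation confirms that tripartite distributions of NSBL form obey the full multi-party no-signaling conditions (\ref{eq:Multi-party-NS}), which are strictly stronger than (\ref{eq:rel-caus-3-party-1}); hence the relativistic constraints hold and $P_{A,B,C|X,Y,Z} \in RCBL$.

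For $RCBL \nsubseteq BL$, my plan is to produce an explicit box $\mathcal{P}^*$ in RCBL that violates Svetlichny's inequality (\ref{eq:Svetlichny-ineq}). Using binary inputs $x, y, z \in \{1,2\}$ and binary outputs $a, b, c \in \{0,1\}$, I make Bob's output deterministic ($b = 0$ for both values of $y$) and let the $(A, C)$ block be a Popescu--Rohrlich-type distribution whose specific form is switched by $Y$: for $y = 1$, the standard PR relation $a \oplus c = (x-1)(z-1)$ with uniform marginals, and for $y = 2$, a locally relabeled PR box in which $a \oplus c = 0$ exactly when $(x, z) = (1, 1)$ and $a \oplus c = 1$ otherwise, again with uniform marginals. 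Both are legitimate two-party no-signaling distributions. The decomposition $\mathcal{P}^*(a,b,c|x,y,z) = P_{A,C|X,Y,Z}(a,c|x,y,z)\cdot P_{B|Y}(b|y)$ with $r_{AC|B}=1$ and a single $\gamma$ is manifestly RCBL: the single-party marginals are uniform, and $P_{A,B|X,Y}$, $P_{B,C|Y,Z}$ are products of uniform distributions with the deterministic $b=0$, so (\ref{eq:rel-caus-3-party-1}) is satisfied while $P_{A,C|X,Y,Z}$ depends genuinely on $Y$ as permitted by Definition \ref{def:rel-caus-bilocal}.

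The final step is to verify $\mathcal{I}_{\text{Sve}}(\mathcal{P}^*) = 8 > 4$. With $b = 0$, the tripartite correlators reduce to $\langle A_x C_z \rangle|_{Y=y}$, and the two PR variants have been chosen so that the first four terms of (\ref{eq:Svetlichny-ineq}) (those with $y = 1$) constitute the standard CHSH expression maximally violated by the PR box and sum to $4$, while the last four terms (those with $y = 2$) constitute a differently-signed CHSH variant maximally violated by the relabeled PR box and also sum to $4$. Summing, $\mathcal{I}_{\text{Sve}}(\mathcal{P}^*) = 8$, so $\mathcal{P}^* \notin BL$ by Svetlichny's theorem. The only conceptually nontrivial step is the choice of the second PR variant: one must pick the relabeling so that the sign pattern $(+,-,-,-)$ of the last four Svetlichny terms is matched; once this pattern is recognised as anti-correlation everywhere except at one input pair, the construction is essentially forced, and the rest of the argument is routine bookkeeping.
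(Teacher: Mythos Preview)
Your proposal is correct and follows essentially the same route as the paper: for $NSBL \subseteq RCBL$ you note that the relativistic-causality constraints are a subset of the no-signaling ones, and for $RCBL \nsubseteq BL$ you build the very same box the paper uses---a deterministic Bob output combined with a $Y$-switched pair of PR-type $(A,C)$ distributions matching the two CHSH-type sign patterns in Svetlichny's expression---yielding $\mathcal{I}_{\text{Sve}}=8$. The only cosmetic difference is the output labelling ($b=0$ versus the paper's $b=1$).
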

\begin{proof} 
Evidently, $NSBL \subseteq RCBL$ since the relativistic causality constraints Eq.(\ref{eq:rel-caus-3-party-1}) are a subset of the usual no-signaling constraints (\ref{eq:Multi-party-NS}). To see that $RCBL \nsubseteq BL$, we exhibit a box $P_{A,B,C|X,Y,Z} \in RCBL$ that violates the Svetlichny inequality (\ref{eq:Svetlichny-ineq}). This box is simply (and non-uniquely) given by the correlators 
\begin{eqnarray}
\langle A_1 C_1 \rangle_{y=1} &=& \langle A_1 C_2 \rangle_{y=1} = \langle A_2 C_1 \rangle_{y=1} = - \langle A_2 C_2 \rangle_{y=1}  = 1 \nonumber \\
\langle A_1 C_1 \rangle_{y=2} &=& - \langle A_1 C_2 \rangle_{y=2} = - \langle A_2 C_1 \rangle_{y=2} = - \langle A_2 C_2 \rangle_{y=1}  = 1 \nonumber \\
\langle B_1 \rangle &=& \langle B_2 \rangle = 1. 
\end{eqnarray} 
As an explicit example, we consider the box
\begin{eqnarray}
P_{A,B,C|X,Y,Z}(1,1,1|x,y,z) &=& P_{A,B,C|X,Y,Z}(2,1,2|x,y,z) = \frac{1}{2} \; \;\; (x,y,z) \in \{(1,1,1), (1,1,2), (1,2,1), (2,1,1)\} \nonumber \\
P_{A,B,C|X,Y,Z}(1,1,2|x,y,z) &=& P_{A,B,C|X,Y,Z}(2,1,1|x,y,z) = \frac{1}{2} \; \; \; (x,y,z) \in \{(1,2,2), (2,1,2), (2,2,1), (2,2,2)\}
\end{eqnarray}
This achieves the algebraic value of $8$ for the Svetlichny Bell expression (\ref{eq:Svetlichny-ineq}), maximally violating the inequality. Evidently, any inequality based only on the three-body correlators $\langle A_x B_y C_z \rangle$ cannot be used to separate $RCBL$ from genuine tripartite causal non-local boxes.   
\end{proof}

As such, an interesting question is to find a separating hyperplane to separate the set of $RCBL$ from the set $Q$ of quantum correlations. While a detailed investigation of this is carried out in future work \cite{our3}, here we introduce an inequality that accomplishes this in the tripartite scenario. Let us first write the probablilities $P_{A,B,C|X,Y,Z}$ in terms of the correlation functions. The difference here is that the correlations $\langle A_{x} C_{z} \rangle_{y}$ exhibit a dependence on Bob's measurement input $Y=y$. We thus have  
\begin{eqnarray}
\label{eq:prob-corrfn-new}
P_{A,B,C|X,Y,Z}(a,b,c|x,y,z) = \frac{1}{8} [1 + (-1)^a \langle A_x \rangle + (-1)^b \langle B_y \rangle + (-1)^c \langle C_z \rangle + (-1)^{a + b} \langle A_x B_y \rangle + \nonumber \\
(-1)^{b+ c} \langle B_y C_z \rangle + (-1)^{a+ c} \langle A_x C_z \rangle_{y} + (-1)^{a+b+c} \langle A_x B_y C_z \rangle ],
\end{eqnarray}
with the correlator $\langle A_x C_z \rangle_{y}$ exhibiting explicit dependence on $y$ in contrast to Eq.(\ref{eq:prob-corrfn-old}). 
\begin{lemma}
\label{lem:RCBL-ineq}
Consider the three-party Bell scenario, with each party performing one of two dichotomic measurements. In measurement configurations such as Fig.\ref{fig:three-party-meas-config}, the following inequality holds for all relativistically causal bi-local boxes $P_{A,B,C|X,Y,Z} \in RCBL$.  
\begin{eqnarray}
\label{eq:RCBL-ineq}
\mathcal{I}_{\text{RCBL}} := 2\langle A_1 B_1 \rangle + \langle A_1 C_1 \rangle_{y=1} + \langle A_1 C_1 \rangle_{y=2} +2 \langle B_1 C_2 \rangle -2 \langle A_2 B_2 C_1 \rangle +2 \langle A_2 B_2 C_2 \rangle \leq 6.
\end{eqnarray}
Suitably chosen measurements on the $GHZ$ state $|GHZ \rangle = \frac{1}{\sqrt{2}} \left( |000 \rangle + |111 \rangle \right)$ attain a value $2(1+2 \sqrt{2}) \approx 7.657$ violating the inequality, showing that quantum correlations are genuinely tripartite relativistically causal non-local. Furthermore, these quantum correlations belong to the set $BL$, hence $BL \nsubseteq RCBL$. 
%%%\begin{eqnarray}
%%%0 \leq \mathcal{I}_{\text{RCBL}} := 6 - 2 \langle A_1 B_1 \rangle - 2 \langle A_2 B_1 \rangle - \frac{1}{2} \langle A_1 C_1 \rangle_{y=1} - \frac{1}{2} \langle A_1 C_1 \rangle_{y=2} + \frac{1}{2} \langle A_2 C_1 \rangle_{y=1} + \frac{1}{2} \langle A_2 C_1 \rangle_{y=2} - \langle A_1 B_2 C_1 \rangle \nonumber \\
%%% + \langle A_2 B_2 C_1 \rangle - \frac{1}{2} \langle A_1 C_2 \rangle_{y=1} - \frac{1}{2} \langle A_1 C_2 \rangle_{y=2} + \frac{1}{2} \langle A_2 C_2 \rangle_{y=1} + \frac{1}{2} \langle A_2 C_2 \rangle_{y=2} + \langle A_1 B_2 C_2 \rangle - \langle A_2 B_2 C_2 \rangle. 
%%%\end{eqnarray}
%Suitably chosen measurements on the $W$ state $| W \rangle = \frac{1}{\sqrt{3}} \left( |001\rangle + |010 \rangle + |100 \rangle \right)$ lead to a violation of the inequality showing that quantum correlations are genuinely tripartite relativistically causal non-local.
\end{lemma}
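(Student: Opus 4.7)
I split the lemma into three claims: (i) $\mathcal{I}_{\text{RCBL}} \leq 6$ for every $RCBL$ box, (ii) a quantum violation $\mathcal{I}_{\text{RCBL}}^{Q} = 2(1+2\sqrt{2})$ on the GHZ state, and (iii) the non-inclusion $BL \nsubseteq RCBL$ via the same quantum box. Step (i) is carried out by convexity on the three bilocal terms separately, step (ii) is a direct computation, and step (iii) reduces to certifying bilocal membership of the quantum box.

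\textbf{Step (i), the bound $\mathcal{I}_{\text{RCBL}} \leq 6$.}
For a single $AB|C$ branch $P_{A,B|X,Y,\lambda}P_{C|Z,\lambda}$, the causality constraints (\ref{eq:rel-caus-3-party-1}) applied at the level of that branch force $P_{A,B|X,Y,\lambda}$ to be fully two-way no-signaling (otherwise the three-party marginals $P_{A|X}$ or $P_{B|Y}$ would acquire $y$- or $x$-dependence). Hence $\langle A_xC_z\rangle_{y,\lambda} = \langle A_x\rangle_\lambda \langle C_z\rangle_\lambda$ has no $y$-dependence and $\langle A_xB_yC_z\rangle_\lambda = \langle A_xB_y\rangle_\lambda \langle C_z\rangle_\lambda$. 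The expression is then linear in $c_j := \langle C_j\rangle_\lambda$, its maximum lies at $c_j \in \{\pm 1\}$, and a short check on the extreme points of the $2$-input/$2$-output NS polytope for $(A,B)$ (the $16$ local deterministic vertices and the $8$ PR boxes) gives $\mathcal{I}^{AB|C}_\lambda \leq 6$. The $BC|A$ branch follows identically under $A \leftrightarrow C$. For the $AC|B$ branch $P_{A,C|X,Y,Z,\gamma}P_{B|Y,\gamma}$, only $P_{A|X,\gamma}$ and $P_{C|Z,\gamma}$ are forced to be well-defined, while $\langle A_xC_z\rangle_{y,\gamma}$ may depend on $y$. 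Linearity in $b_j := \langle B_j\rangle_\gamma$ again localizes the maximum to $b_j \in \{\pm 1\}$; each sign choice reduces to maximizing an expression of the form
\begin{equation*}
\pm 2\langle A_1\rangle_\gamma \pm 2\langle C_2\rangle_\gamma + \langle A_1C_1\rangle_{y=1,\gamma} + \langle A_1C_1\rangle_{y=2,\gamma} \pm 2\bigl\langle A_2(C_2 - C_1)\bigr\rangle_{y=2,\gamma}
\end{equation*}
over a pair of NS $(A,C)$ boxes for $y = 1, 2$ sharing the common marginals $\langle A_x\rangle_\gamma$ and $\langle C_z\rangle_\gamma$. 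Using the joint-distribution bound $\langle A_1C_1\rangle_{y=1} \leq 1 - |\langle A_1\rangle_\gamma - \langle C_1\rangle_\gamma|$ together with the elementary consequence $\langle A_1C_1\rangle_{y=2} + 2\langle A_2(C_2 - C_1)\rangle_{y=2} \leq 5$ (seen by writing each summand as a single CHSH-like correlator bounded by $2$ and the first by $1$, or directly on the $16 + 8$ vertices), each of the four sign choices closes at $6$. Convex combination over $\lambda, \gamma, \upsilon$ then yields the bound for all $RCBL$.

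\textbf{Step (ii), quantum violation on GHZ.}
Take $|GHZ\rangle = \tfrac{1}{\sqrt{2}}(|000\rangle + |111\rangle)$ with $A_1 = B_1 = \sigma_z$, $A_2 = B_2 = \sigma_x$, $C_1 = (\sigma_z - \sigma_x)/\sqrt{2}$, $C_2 = (\sigma_z + \sigma_x)/\sqrt{2}$. From the standard identities $\langle ZZI\rangle = \langle IZZ\rangle = \langle ZIZ\rangle = \langle XXX\rangle = 1$ and $\langle ZIX\rangle = \langle IZX\rangle = \langle ZZX\rangle = \langle XXZ\rangle = 0$ on $|GHZ\rangle$, I compute $\langle A_1B_1\rangle = 1$, $\langle A_1C_1\rangle = \langle A_1C_2\rangle = \langle B_1C_2\rangle = 1/\sqrt{2}$, $\langle A_2B_2C_1\rangle = -1/\sqrt{2}$, and $\langle A_2B_2C_2\rangle = 1/\sqrt{2}$. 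Since the quantum statistics have $y$-independent $(A,C)$-correlators, $\langle A_1C_1\rangle_{y=1} + \langle A_1C_1\rangle_{y=2} = \sqrt{2}$, and summing gives $\mathcal{I}_{\text{RCBL}} = 2 + \sqrt{2} + \sqrt{2} + \sqrt{2} + \sqrt{2} = 2 + 4\sqrt{2} = 2(1+2\sqrt{2})$.

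\textbf{Step (iii) and main obstacle.}
For $BL \nsubseteq RCBL$ it remains to certify that the above quantum box lies in $BL$. A direct computation of (\ref{eq:Svetlichny-ineq}) gives $\mathcal{I}_{\text{Sve}} = 0$, since every three-body correlator involving $B_1 = \sigma_z$ vanishes on $|GHZ\rangle$ and the two nonvanishing terms $\langle A_2B_2C_z\rangle = \mp 1/\sqrt{2}$ cancel in the Svetlichny sum. Because the $3$-party/$2$-input/$2$-output bilocal polytope is characterized (up to party permutations) by Svetlichny-type facets, checking this finite list certifies membership in $BL$; as a constructive cross-check I would exhibit an explicit bilocal model exploiting that $\sigma_z$ on any qubit of $|GHZ\rangle$ collapses the remaining pair to $\tfrac{1}{2}(|00\rangle\langle 00| + |11\rangle\langle 11|)$, so the conditional $(B,C)$, $(A,C)$, $(A,B)$ statistics can be simulated with shared randomness plus signaling confined to one bipartite factor. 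The hardest step is the $AC|B$ optimization in (i), where the $y$-dependence ties two NS bipartite $(A,C)$ boxes through shared single-party marginals; the explicit $BL$ certification in (iii) is a secondary obstacle, best handled via the GHZ-collapse construction above.
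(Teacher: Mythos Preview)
Your Step~(ii) is correct and in fact coincides with the paper's argument: the same measurements $A_1=B_1=\sigma_z$, $A_2=B_2=\sigma_x$, $C_{1,2}=(\sigma_z\mp\sigma_x)/\sqrt 2$ on $|GHZ\rangle$ are used, and the value $2(1+2\sqrt 2)$ drops out exactly as you compute. For Step~(iii), the paper does no more than you do: it observes that these measurements do not violate (\ref{eq:Svetlichny-ineq}) and asserts (without displaying it) that an explicit bilocal decomposition of the form (\ref{eq:bilocal}) exists. Your claim that the $(3,2,2)$ bilocal polytope is cut out by Svetlichny-type facets alone would need a reference or proof; the constructive route via GHZ collapse that you sketch is the safer one, but you have not actually carried it out.

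The genuine gap is in Step~(i), and it is worth naming precisely because the paper sidesteps it entirely by declaring that the inequality is verified by linear programming on the $RCBL$ polytope. Your analytic route is more ambitious, but the $AC|B$ branch does not close as written. After factoring out $b_1,b_2\in\{\pm1\}$ you are left with
\[
\pm 2\langle A_1\rangle_\gamma \pm 2\langle C_2\rangle_\gamma + \langle A_1C_1\rangle_{y=1,\gamma} + \langle A_1C_1\rangle_{y=2,\gamma} \pm 2\langle A_2(C_2-C_1)\rangle_{y=2,\gamma},
\]
and the two bounds you invoke, $\langle A_1C_1\rangle_{y=1}\le 1-|\langle A_1\rangle-\langle C_1\rangle|$ and the crude $\langle A_1C_1\rangle_{y=2}+2\langle A_2(C_2-C_1)\rangle_{y=2}\le 5$, combine to at most $4+1+5=10$, not $6$. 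The shared-marginal constraint linking the $y=1$ and $y=2$ boxes is doing essential work that your estimates discard; for instance, achieving $5$ on the $y=2$ part via a PR-type box forces all single-party marginals to vanish, killing the $\pm 2\langle A_1\rangle\pm 2\langle C_2\rangle$ terms, and this kind of trade-off must be tracked case by case. Either tighten the analysis to exploit that coupling explicitly, or fall back to the paper's LP verification, which is legitimate since $RCBL$ is a polytope whose vertices can be enumerated.
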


%%\begin{equation}
%%| W \rangle = \frac{1}{\sqrt{3}} \left( |001\rangle + |010 \rangle + |100 \rangle \right).
%%\end{equation}
\begin{proof}
One can directly check that the inequality (\ref{eq:RCBL-ineq}) is valid for all relativistically causal bilocal (RCBL) correlations by linear programming (recall that the set $RCBL$ is a convex polytope). The measurements on $|GHZ \rangle$ are given by
\begin{eqnarray}
A_1 &=& \sigma_z, \quad A_2 = \sigma_x, \nonumber \\
B_1 &=& \sigma_z, \quad B_2 = \sigma_x, \nonumber \\
C_1 &=& \frac{1}{\sqrt{2}} \left(\sigma_z - \sigma_x \right), \; \; C_2 = \frac{1}{\sqrt{2}} \left(\sigma_z + \sigma_x \right).
\end{eqnarray}
These measurements on the $GHZ$ state give the value $\mathcal{I}^{q}_{\text{RCBL}} \geq 2(1+2\sqrt{2})$ violating the inequality showing that quantum correlations are genuinely tripartite relativistically causal nonlocal. Interestingly, these measurements on the $GHZ$ state do not lead to a violation of Svetlichny's original inequality (\ref{eq:Svetlichny-ineq}), and one can in fact find an explicit decomposition of the resulting box in the form (\ref{eq:bilocal}), showing that $BL \nsubseteq RCBL$. 

%parametrized as 
%\begin{eqnarray}
%A_i &=& \sin(\alpha_i)  \sigma_x + \cos(\alpha_i) \sigma_z, \nonumber \\
%B_j &=& \sin(\beta_j) \sigma_x + \cos(\beta_j) \sigma_z, \nonumber \\
%C_k &=& \sin(\gamma_k) \sigma_x + \cos(\gamma_k) \sigma_z.
%\end{eqnarray}
%With the following parameters,
%\begin{eqnarray}
%\alpha_1 = 4.51, \alpha_2 = -1.76,  
%\beta_1  = 4.81, \beta_2 = 6.13,  
%\gamma_1 = -1.13, \gamma_2 = 4.98.
%\end{eqnarray}
%%one sees a violation of the inequality 
%%\begin{equation}
%%\mathcal{I}^{\text{qm}}_{\text{RCBL}} \leq -0.67.
%%\end{equation}
%%This shows that at least in the tripartite scenario, quantum correlations are genuinely tripartite relativistically causal non-local.  
\end{proof}

\section{Discussion.}
A central point of departure in deriving the relativistic causality constraints as opposed to the usual no-signaling constraints is that in the former, we also account for influences that propagate faster than light as long as they do not lead to causal loops. We discuss in this section why this is not any cause for alarm. Firstly, remark that resonance-induced superluminal \textit{group} velocities (with values up to $300$ times $c$) have been known for a while \cite{Brillouin}, these group velocities do not create any problem of principle, since the \textit{signal} velocity still has $c$ as an upper bound (as in this paper). The group velocity of a wave may exceed the speed of light and can even become negative, but in such cases, no energy or information actually travels faster than $c$. Experiments showing group velocities greater than $c$ include that of Wang et al. \cite{WKD00}, who produced a laser pulse in atomic cesium gas with a group velocity index of $n_g = -310 (\pm 5)$. The observed superluminal propagation of the wave is not at odds with causality, and is instead a consequence of classical interference between its constituent frequency components in a region of anomalous dispersion \cite{WKD00}.

%Understandably, since no superluminal particles or fields have been experimentally detected thus far, this is a contentious issue which requires further justification. 
Secondly, as mentioned before, the justification that special relativity does not necessarily exclude faster-than-light propagation has also been known for a while \cite{BDS62, Feinberg67}.
% since a seminal paper by Bilaniuk, Deshpande and Sudarshan \cite{BDS62} and Feinberg \cite{Feinberg67}. 
A main objection to the existence of these particles was that they would lead to violations of causality and as such must be unphysical \cite{Terletskii68}, which as we have seen can be overcome by the considerations of this paper. 
%Here, we discuss these objections in light of the considerations of the present paper. 

Thirdly, recall that the notion of point-to-region signaling considered in this paper is formulated as the nonlocal influence of a field rather than of a superluminal particle. 
%Note that as mentioned before, this is a different regime from the ``intermediate" one of relativistic particle rather than
%quantum field where the concept of relativistic causality
%has been successfully developed  with help of ideas from transport theory \cite{Eckstein1, Eckstein2, Eckstein-Miller-models}. 
Evidence in favor of a superluminal field emerges from the fact that spin-zero FTL particles (satisfying the Klein-Gordon equation) \textit{cannot} be localized as pointlike particles, contrary to all other known fundamental constituents of matter \cite{BFKL71}. A brief overview of the argument is as follows, for details see \cite{BFKL71, FKL69}. Consider the wavefunction $\psi(x,t)$ of a free tachyon of spin zero obeying the Klein-Gordon equation in $1+1$ dimensions
\begin{equation}
\left(\frac{\partial^2}{\partial t^2} - \frac{\partial^2}{\partial x^2} + m_0^2 \right) \psi(x,t) = 0,
\end{equation} 
with solutions that are superpositions of 
\begin{equation}
\psi(x,t) =  e^{-i(E t - p x)}, 
\end{equation}
with the energy-momentum relation $E^2 - p^2 = m_0^2$. 
For the particle moving with velocity $v > c$, we have that $m_0$ is  imaginary and $m_0^2$ is negative, so that we can consider the solutions with $|p| > |E|$ as well as those with $|p| < |E|$. If the initial conditions for the wave equation, $\psi(x,t)$ and $\frac{\partial \psi(x,t)}{\partial t}$ at time $t=0$ are zero outside some interval $[-X, X]$, then they remain outside the interval $[-X - t, X + t]$ for any $t > 0$ by the standard property of the wave equation. In other words, the support of the evolved field configuration lies inside the light-cone of the initial data. So that localized (particle-like) point-to-point influences do not move faster than the speed of light in vacuum, as considered in this paper. On the other hand, if we ignore the solution with $|p| < |E|$ (which has energy $E$ imaginary), then it becomes impossible to solve the equation for initial conditions that are zero outside some interval $[-X,X]$ \cite{BFKL71} so that we do not have a localized influence but only a delocalized one which is then allowed to be superluminal, consistent with the considerations of this paper.   

Finally, note that in Algebraic Quantum Field Theory, as mentioned earlier, to every bounded open subset $\mathcal{O}$ of spacetime, one can associate observables $\mathcal{U}(\mathcal{O})$ and observables in spacelike separated regions $\mathcal{O}, \mathcal{O'}$ are required to commute, i.e., for all $A \in \mathcal{U}(\mathcal{O})$ and $B \in \mathcal{U}(\mathcal{O'})$ we have $[A, B] = 0$ thus ensuring that the causal constraints are satisfied. In linear field theories, there is a complete duality between particles and fields, via the well-known Fock space construction. However as Haag pointed out \cite{Haag}, the one-to-one correspondence between particles and fields holds only in the asymptotic sense, i.e, as time goes to $\pm \infty$ (in the Heisenberg picture), the limit fields obey the free field equation and canonical commutation relations so as to describe free particles; so that it may well be that a quantum field theory does not describe any localized particles at all \cite{Haag}. 
%%Finally, let us stress once more that this is different regime from the ``intermediate" one of relativistic particle rather than
%quantum field where the concept of relativistic causality
%has been successfully developed  with help of ideas from
%the transport theory \cite{Eckstein1, Eckstein2, Eckstein-Miller-models}.
%In this context, we remark that in the intermediate regime between non-relativistic quantum theory and quantum field theory, the causal structure is well-established and emerges elegantly from the Dirac equation, a rigorous notion of a causal relation between probability measures on a given spacetime has been formulated \cite{Eckstein1, Eckstein2}.  

\section{Conclusions.}
%\textit{Relativity and superluminal velocities.-}
%%Our hypothetical influences propagate at speeds $u > c$ in some chosen inertial reference frame. One might wonder how compatible this is with the traditional assertion that relativity does not permit superluminal influences. Firstly, as we have argued, the superluminal influence does not lead to \textit{superluminal transmission of information}, which is what relativity theory actually prohibits, since this can lead to causal loops and grandfather paradoxes. Apart from the possibility of causality violations, superluminal particle propagation has yet another issue, the destabilization of the vacuum. 

In this paper, we have examined the no-signaling constraints from the point of view of causality and proposed a modification to these constraints to a subset that ensures preservation of causality depending on the spacetime configuration of measurement events. As a consequence, we find that boxes in the post-quantum scenario, where one only restricts to the constraints imposed by relativistic causality, should be labeled by the space-time locations of the parties performing the measurement. The correlations that the parties observe can then exhibit various dependencies while still being consistent with relativistic causality. 
%The precise space-time region where this influence can occur was shown.  

Several questions are open. 
%One important old question that is now reopen is whether quantum non-local correlations can in fact be explained by a finite speed causal model or whether a multi-party Bell experiment can be devised such that its quantum violation does not admit such an explanation. 
One important question is to formulate protocols (under suitable assumptions) for cryptographic tasks such as randomness amplification, expansion and key distribution against an adversary only limited by the laws of relativity (from creating closed causal loops). To do this, an intermediate task is to investigate the existence of multi-party Bell inequalities which allow for certification of randomness against such a relativistically causal adversary. A geometric question is to characterize the subset of multi-party relativistically causal constraints and the space-time regions where they are necessary and sufficient. The notion of genuine multi-party nonlocality as evidenced by the violation of a relativistically causal bilocal ($RCBL$) inequality calls for further investigation \cite{our3}, in particular in light of studies such as \cite{Fritz}. A pressing open question is to investigate whether years after special relativity was formulated, such superluminal influences can be tested experimentally, and what new effects they could lead to. Both non-relativistic quantum theory and relativistic quantum field theory \cite{ER89} are well-known to obey a no superluminal signaling condition, proposals to modify quantum theory by introducing non-linearities have been shown to lead to signaling \cite{Wei89, Cza91, Gis90}. An important open question is to investigate feasible mechanisms for the 
point-to-region superluminal influences within the paradigm of non-linear modifications to quantum theory.

\textit{Acknowledgments.-}
We acknowledge stimulating discussions with Andrzej Grudka and Jakub Rembieli\'{n}ski. R.R. acknowledges useful discussions with Piotr Mironowicz, Karol Horodecki and Roberto Salazar and thanks Stefano Pironio for valuable comments on an earlier version of the manuscript. This work was made possible through the support of a grant from the John Templeton Foundation. The opinions expressed in this publication are those of the authors and do not necessarily reflect the views of the John Templeton Foundation. It was also supported by the ERC AdG grant QOLAPS. R.R. acknowledges support from the grant "Causality in quantum theory: foundations and applications" funded by La Fondation Philippe Wiener - Maurice Anspach.

%\textbf{Supplemental Material.}
%Here, we present the proofs of the propositions stated in the main text. 

\textbf{Supplemental Material.}
Here, we present the formal proofs of the propositions stated in the main text. 

\textbf{Proposition 5}. \textit{Consider measurement events $A, B$ with corresponding space coordinates $\textbf{r}_A, \textbf{r}_B$ in a chosen inertial reference frame $\mathcal{I}$. Then a measurement event $E$ can superluminally influence the correlations between $A$ and $B$ at speed $u > c$ without violating causality in $\textsl{I}$ if and only if its space coordinate $\textbf{r}_E$ satisfies
\begin{equation}
\textbf{r}_E \in Seg(\bigcirc(AB; \varphi_{\alpha}))
\end{equation}  
for any circle $\bigcirc$ with $AB$ as a chord and having angle $\varphi_{\alpha}$ as the angle in the corresponding minor segment, where $\varphi_{\alpha} = \pi - 2 \arcsin(\alpha)$ and $\alpha = c/u$, and if its time coordinate $t_E$ satisfies 
\begin{equation}
t_E \leq \min \left[ t_A - \frac{|\textbf{r}_A - \textbf{r}_E|}{u}, t_B - \frac{|\textbf{r}_B - \textbf{r}_E|}{u} \right]. 
\end{equation}} 
%\end{prop}
\begin{proof}
We observe that in order for Eve to signal to a space-time point $S$ via her influence of the correlations at $A$ and $B$ the following constraints must hold. The sum of the time taken for an influence (at speed $u>c$) to move from $E$ to $A$ and the time taken for a signal (at speed $c$) to move from $A$ to $S$ must be less than the time taken for a signal at speed $c$ to travel from $E$ to $S$ directly. To elaborate, $E$ may superluminally influence the measurement at $A$ which subsequently signals to the event $S$ or alternatively, $E$ may signal to $S$ (signaling being by definition at speed $c$).
%Our task is to show that for any $S$ with space coordinate $\textbf{r}_S$, the first signal does not reach $S$ before the second. 
A similar condition must also hold for the influence traveling via point $B$. 

Let us examine the first constraint captured by the following equation
\begin{equation}
\tau_{E \xrightarrow{u} A} + \tau_{A \xrightarrow{c} S} < \tau_{E \xrightarrow{c} S}
\end{equation} 
and derive a geometric criterion for when it can hold.

\begin{center}
\begin{figure}[t!]
		\includegraphics[width=0.6\textwidth]{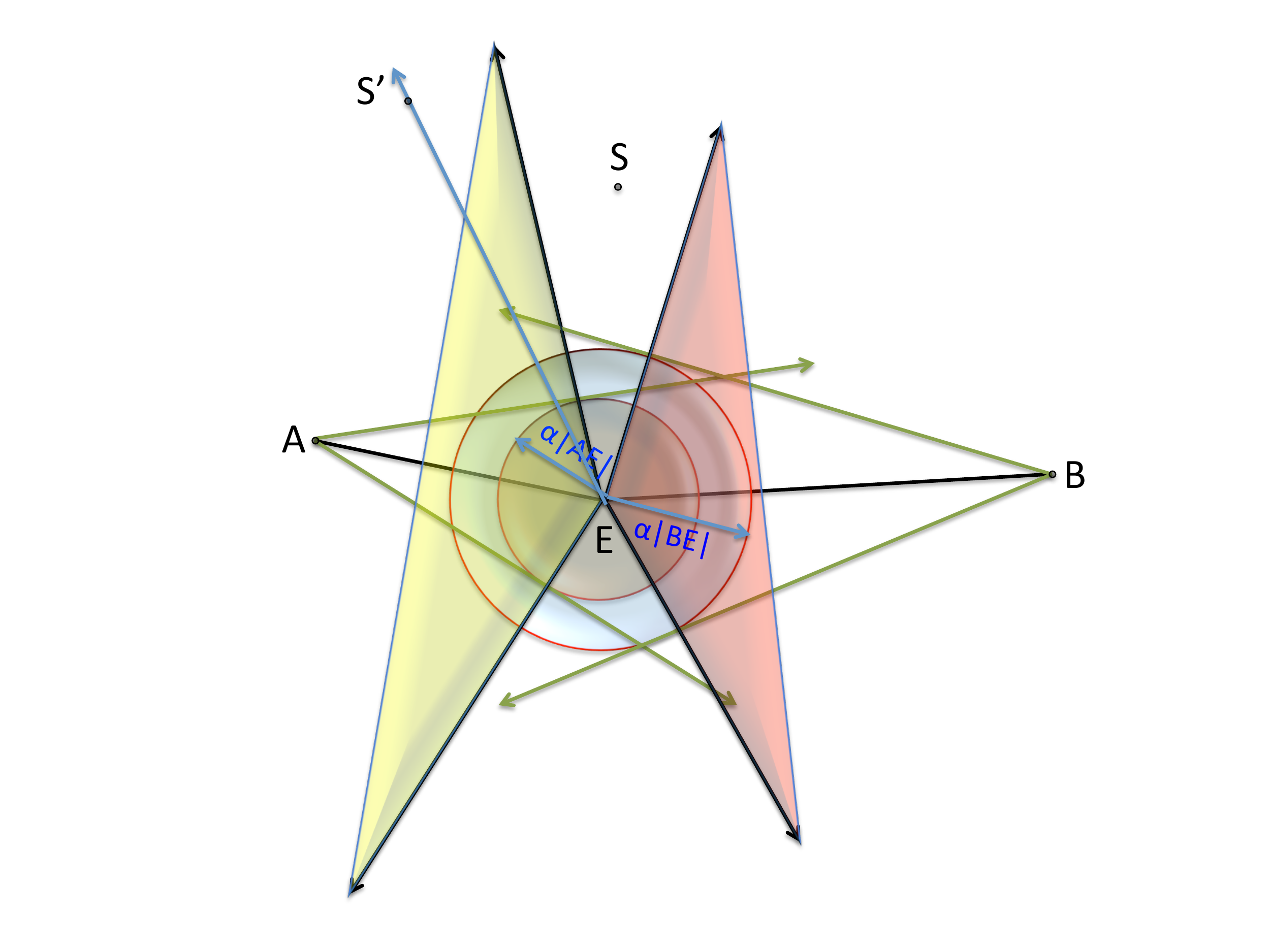}
\caption{Illustration of the constraint on the spacetime locations of the measurement events from Lemma \ref{lem:dual-signal}. $E$ superluminally influences the correlations between $A$ and $B$ while ensuring that the marginal distributions are unaffected. In order to ensure that $E$ does not send a superluminal signal to a spacetime location $S$ via the influence of the correlations between $AB$, we impose the constraints $\tau_{E \xrightarrow{u} A} + \tau_{A \xrightarrow{c} S} < \tau_{E \xrightarrow{c} S}$ and $\tau_{E \xrightarrow{u} B} + \tau_{B \xrightarrow{c} S} < \tau_{E \xrightarrow{c} S}$. Lemma \ref{lem:dual-signal} shows that for fixed locations $\textbf{r}_A$ and $\textbf{r}_B$, no superluminal signaling takes place iff the cones shown in the figure $\textsl{C}(E; T(A, \textbf{B}(E; \;  \alpha |EA|)))$ and $\textsl{C}(E; T(B, \textbf{B}(E; \;  \alpha |EB|)))$ either share a common tangent plane or intersect only at $E$.}
	\label{fig:signaling-region-cond}
\end{figure}
\end{center}

\begin{lemma}
\label{lem:single-signal}
Given measurement events $E, S$ and $A$ with corresponding space-time co-ordinates $\textsc{x}_E$, $\textsc{x}_S$ and $\textsc{x}_A$ in a chosen inertial reference frame $\textsl{I}$ and an influence propagating from $E$ to $A$ at speed $u > c$, $E$ superluminally signals to $S$ via $A$ if and only if the space coordinates satisfy
\begin{equation}
\label{eq:single-signal}
\texttt{B}(E; \; \alpha |EA|) \cap \texttt{B}(S; \; |AS|) = \varnothing.
\end{equation}
where $\textbf{B}(K; \; r)$ denotes a sphere with center at $K$ and radius $r$, $|EA| = |\textbf{r}_A - \textbf{r}_E|$, $|AS| = |\textbf{r}_S - \textbf{r}_A|$ and $\alpha = \frac{c}{u} < 1$.  
\end{lemma}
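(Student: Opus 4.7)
The plan is to reduce the signaling inequality to a statement about distances and then recognize it as the standard disjointness condition for two closed balls. Starting from the definition, $E$ superluminally signals to $S$ via $A$ precisely when the two-leg trajectory---superluminal influence at speed $u$ from $E$ to $A$ followed by a light signal at speed $c$ from $A$ to $S$---arrives at $S$ strictly earlier than a direct light signal from $E$ to $S$, i.e.\
\begin{equation*}
\frac{|EA|}{u} + \frac{|AS|}{c} \;<\; \frac{|ES|}{c}.
\end{equation*}
Multiplying through by $c$ and using $\alpha = c/u < 1$, this reduces to the purely spatial inequality $\alpha |EA| + |AS| < |ES|$.

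Next I would recognise $\alpha|EA|$ and $|AS|$ as exactly the radii of the two balls appearing in the lemma: a ball around $E$ of radius $\alpha|EA|$ and a ball around $S$ whose boundary passes through $A$. The assertion $\textbf{B}(E;\alpha|EA|) \cap \textbf{B}(S;|AS|) = \varnothing$ is then the textbook characterisation of disjoint closed balls in Euclidean space, namely that the distance between the centres strictly exceeds the sum of the radii, $|ES| > \alpha|EA|+|AS|$. For the forward direction, any common point $P$ of the two balls would, by the triangle inequality, satisfy $|ES| \leq |EP| + |PS| \leq \alpha|EA| + |AS|$, contradicting the signaling condition; the converse direction is obtained by the same triangle inequality applied to a hypothetical point of overlap.

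The only subtlety worth noting is whether one ball could be contained in the other in a way that slips past the sum-of-radii criterion, but containment $\textbf{B}(E;\alpha|EA|) \subseteq \textbf{B}(S;|AS|)$ forces $|ES| + \alpha|EA| \leq |AS|$, hence $|ES| \leq |AS| + \alpha|EA|$, so the balls are not disjoint anyway and the criterion remains unconditional. The reduction is essentially a one-line unit conversion followed by a standard fact about Euclidean balls, so I do not expect any substantive obstacle---the content of Lemma~\ref{lem:single-signal} is really a repackaging of the signaling inequality in geometric form, which then sets up the iterated use in Proposition~\ref{prop:spacetime-region} where one must intersect this condition with the analogous one obtained by routing the influence through $B$.
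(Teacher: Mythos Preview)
Your proof is correct and follows essentially the same route as the paper: both reduce the timing inequality $\tau_{E\xrightarrow{u}A}+\tau_{A\xrightarrow{c}S}<\tau_{E\xrightarrow{c}S}$ to $\alpha|EA|+|AS|<|ES|$ and identify this with the disjointness of the two balls. Your added justification via the triangle inequality (and the containment remark) is a slight elaboration of what the paper simply asserts, but the argument is the same in substance.
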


\begin{proof}
We have that $E$ influences the measurement event $A$ at speed $u > c$. There are thus two ways for $E$ to signal to $S$, 
\begin{enumerate}
\item $E \xrightarrow{u} A \xrightarrow{c} S$
\item $E \xrightarrow{c} S$.
\end{enumerate}
%In words, $E$ may superluminally influence (at speed $u$) the measurement at $A$ which subsequently signals (at speed $c$) to the event $S$ or, alternatively, $E$ may signal to $S$ (signaling being by definition at speed $c$).
Our task is to show that for any $S$ with space coordinate $\textbf{r}_S$, the first signal does not reach $S$ before the second. 
%Let us denote $\alpha := \frac{c}{u} < 1$. 
Consider a ball $\texttt{B}(E; \; \alpha |EA|)$ of radius $\alpha |EA|$ with centre $E$, where $|EA| := | \textbf{r}_E - \textbf{r}_A|$ corresponds to the distance between $E$ and $A$ in the reference frame $\textsl{I}$. We also consider the ball $\texttt{B}(S; \; |AS|)$ with centre $S$ and radius $|AS| := |\textbf{r}_S - \textbf{r}_A|$.

The constraint we impose is  
%Then, we have that  
\begin{equation}
\tau_{E \xrightarrow{u} A} + \tau_{A \xrightarrow{c} S} < \tau_{E \xrightarrow{c} S}
\end{equation} 
This is equivalent to 
\begin{equation}
(\textsl{t}_A - \textsl{t}_E) + \frac{|AS|}{c} < \frac{|ES|}{c}.
\end{equation}
Now recall that $(\textsl{t}_A - \textsl{t}_E) = \frac{|EA|}{u}$ so that the above condition may be rewritten as
\begin{equation}
\alpha |EA| + |AS| < |ES|.
\end{equation}
This in turn is equivalent to the criterion that the spheres $\texttt{B}(E; \; \alpha |EA|)$ and  $\texttt{B}(S; \; |SA|)$ do not intersect, 
%if and only if 
\begin{equation}
\texttt{B}(E; \; \alpha |EA|) \cap \texttt{B}(S; \; |SA|) = \varnothing.
\end{equation}
In other words, $E$ is able to superluminally signal to a point $S$ via the superluminal influence on $A$ if and only if the two spheres so constructed do not intersect.  
\end{proof}
%The constraint in Eq.(\ref{eq:single-signal}) is illustrated in Fig \ref{fig:signaling-region-cond}.  

We now proceed to analyze the question whether $E$ is able to superluminally signal to any point $S$ via her superluminal influence of the correlations $A-B$ and a subsequent signal (at speed $c$) propagating from both $A$ and $B$ to $S$. Let $\textbf{r}_A$ and $\textbf{r}_B$ be the fixed laboratory locations. We propose a geometric condition identifying for given  
%We propose a geometric condition identifying, for given 
$\textbf{r}_A, \textbf{r}_B$ the positions $\textbf{r}_E$ for which \textit{no} $S$ exists such that $E$ superluminally signals to $S$ via the above mentioned route. 

Let $T(M, \textbf{B}(K; \; r))$ denote the tangent cone to the sphere $\textbf{B}(K; \; r)$ emanating from point $M$. For a given tangent cone to a sphere $\textbf{B}(K; \; r)$, let $\textsl{C}(K; T(M, \textbf{B}(K; \; r)))$ denote the cone with apex point $K$ and passing through the points of intersection of the sphere and its tangent cone, i.e., through $\textbf{B}(K; \; r) \cap T(M, \textbf{B}(K; \; r))$.
\begin{lemma}
\label{lem:dual-signal}
For fixed $\textbf{r}_A$ and $\textbf{r}_B$, there does not exist any $S$ such that $E$ superluminally signals to $S$ via $A$ and $B$ if and only if the cones $\textsl{C}(E; T(A, \textbf{B}(E; \;  \alpha |EA|)))$ and $\textsl{C}(E; T(B, \textbf{B}(E; \;  \alpha |EB|)))$ either share a common tangent plane or intersect only at $E$. 
\end{lemma}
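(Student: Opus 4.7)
The plan is to reduce the question to a cone-intersection statement via two applications of Lemma~\ref{lem:single-signal}. First, observe that $E$'s hypothetical influence acts only on the correlation $\mathcal{C}_{AB}$ and not on the marginal distributions at $A$ or $B$ individually; hence an observer at $S$ detects the influence only after receiving outputs from both wings. A successful superluminal signal via $A,B$ therefore requires both routes $E\xrightarrow{u}A\xrightarrow{c}S$ and $E\xrightarrow{u}B\xrightarrow{c}S$ to beat the direct light signal $E\xrightarrow{c}S$. Applying Lemma~\ref{lem:single-signal} to each route, such an $S$ exists if and only if there is $\textbf{r}_S$ with
\begin{equation*}
|ES|-|AS|>\alpha|EA|\qquad\text{and}\qquad|ES|-|BS|>\alpha|EB|.
\end{equation*}

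Next I would convert each inequality into an angular condition at $E$. Place $E$ at the origin, let $d_A:=|EA|$, and let $\theta_A$ denote the angle between $\overrightarrow{ES}$ and $\overrightarrow{EA}$; a routine application of the law of cosines shows that $|ES|-|AS|>\alpha d_A$ is equivalent to $\cos\theta_A>\alpha$ \emph{and} $|ES|>\tfrac{d_A(1-\alpha^2)}{2(\cos\theta_A-\alpha)}$. Thus the equality locus is one sheet of a hyperboloid of revolution about the $EA$-axis, and its asymptotic cone has apex $E$, axis $\overrightarrow{EA}$, and half-angle $\arccos(\alpha)$. A quick tangent-point computation (right triangle with legs $|EP|=\alpha d_A$ and $|AP|=d_A\sqrt{1-\alpha^2}$) identifies this asymptotic cone precisely with $\textsl{C}(E;T(A,\textbf{B}(E;\alpha|EA|)))$. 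Hence the $A$-feasible region is an open subset of the open cone $\textsl{C}(E;T(A,\textbf{B}(E;\alpha|EA|)))$ that is entered along every strictly-interior direction for sufficiently large $|ES|$; the $B$-feasible region is described symmetrically with respect to the second cone.

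Because both cones share the apex $E$ and have the same half-angle $\arccos(\alpha)$, their open interiors intersect in a three-dimensional solid angle iff $\angle AEB<2\arccos(\alpha)$; they share exactly one common generator (equivalently, a common tangent plane) iff $\angle AEB=2\arccos(\alpha)$; and they meet only at $E$ iff $\angle AEB>2\arccos(\alpha)$. In the latter two cases, every direction lying in both closed cones satisfies $\cos\theta_A=\alpha$ or $\cos\theta_B=\alpha$, so the strict feasibility inequality fails on at least one side and the common feasible set is empty. In the overlapping case, choose any direction $\theta^{\ast}$ in the interior of the intersection and place $S$ along the corresponding ray at a distance exceeding both thresholds $\tfrac{d_A(1-\alpha^2)}{2(\cos\theta_A(\theta^{\ast})-\alpha)}$ and $\tfrac{d_B(1-\alpha^2)}{2(\cos\theta_B(\theta^{\ast})-\alpha)}$: by construction this $S$ belongs to both feasible regions, producing the undesired signaling event. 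Combining the three cases yields the claimed equivalence.

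\textbf{Main obstacle.} The subtle step is the tangential case: one must argue that on the unique common generator the strict inequality $|ES|-|AS|>\alpha|EA|$ is never attained but only approached as $|ES|\to\infty$, so that externally tangent cones still yield an empty intersection of the \emph{open} feasible sets. This follows from the divergence of the threshold $\tfrac{d_A(1-\alpha^2)}{2(\cos\theta_A-\alpha)}$ as $\cos\theta_A\downarrow\alpha$, but the argument must be phrased carefully to avoid conflating asymptotic overlap of the hyperboloidal sheets with genuine overlap of the strictly feasible regions.
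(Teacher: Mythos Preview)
Your proof is correct and follows the same logical skeleton as the paper: apply Lemma~\ref{lem:single-signal} to each wing, identify the feasible $S$-directions for each route with the interior of the cone $\textsl{C}(E;T(\cdot,\textbf{B}(E;\alpha|\cdot|)))$, and then argue that a signalling $S$ exists precisely when the two open cones have nonempty three-dimensional overlap. The paper carries this out purely geometrically---it keeps the sphere-intersection criterion $\texttt{B}(E;\alpha|EA|)\cap\texttt{B}(S;|SA|)=\varnothing$ from Lemma~\ref{lem:single-signal} intact and proves via a short Claim that along any ray interior to the cone one can push $S$ far enough that the two spheres separate. You instead translate the timing inequality $|ES|-|AS|>\alpha|EA|$ into the explicit hyperboloid condition $\cos\theta_A>\alpha$ together with the radial threshold $|ES|>d_A(1-\alpha^2)/\bigl(2(\cos\theta_A-\alpha)\bigr)$, and then read off the cone half-angle $\arccos(\alpha)$ directly. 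This buys you a sharper handling of the tangential case $\angle AEB=2\arccos(\alpha)$: the divergence of the threshold as $\cos\theta_A\downarrow\alpha$ makes it transparent why the unique common generator carries no strictly feasible point, a step the paper treats more implicitly. Either route is fine; yours is more quantitative, the paper's more synthetic.
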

\begin{proof}
From Lemma \ref{lem:single-signal}, for a fixed inertial reference frame, we know that for any $\textbf{r}_E, \textbf{r}_A$, $E$ is able to superluminally signal to any point $S$ with corresponding space coordinate $\textbf{r}_S$ via $E$'s superluminal influence of the event $A$ if and only if $\texttt{B}(E; \; \alpha |EA|) \cap \texttt{B}(S; \; |SA|) = \varnothing$. 

In order for $E$ to superluminally signal to $S$ via her influence of the correlations between $A$ and $B$, a signal (at speed $c$) must reach $S$ from \textit{both} $A$ and $B$ before the signal from $E$ at speed $c$ reaches $S$. Now, in order for the condition in Eq. (\ref{eq:single-signal}) to hold for the signal from $A$, we show that $S$ must obey 
\begin{equation}
\label{eq:signal-A}
 S \in \text{int} \left[ \textsl{C}(E; T(A, \textbf{B}(E; \;  \alpha |EA|)))\right].
\end{equation}
Points in space that are outside this cone do not perceive any superluminal signal from $E$ in that for these points $\tau_{E \xrightarrow{c} S} \leq \tau_{E \xrightarrow{u} A} + \tau_{A \xrightarrow{c} S}$. On the other hand, we have the following.
\begin{claim}
\label{cl:sign-point}
On every ray from the apex $E$ through the interior of the cone, one can find a point $P$ such that 
\begin{equation}
\tau_{E \xrightarrow{u} A} + \tau_{A \xrightarrow{c} P} \leq \tau_{E \xrightarrow{c} P}.
\end{equation} 
\end{claim}
\begin{proof}
%To see this, drop a perpendicular from $A$ onto the ray emanating from $E$ and passing through the interior of the cone, let the foot of the perpendicular be denoted $F$. 
Consider a far enough point $P$ on the ray emanating from $E$ and passing through the interior of the cone such that the sphere $\textbf{B}(P; |PA|)$ does not intersect $\textbf{B}(E; \alpha |EA|)$. Let $F$ denote the point at which the ray intersects the sphere $\textbf{B}(P; |PA|)$. By construction, we have  $\measuredangle PFA < \frac{\pi}{2}$ so that such a sphere can always be drawn and we see that $F$ does not lie on $\textbf{B}(E; \alpha |EA|)$. 
%This construction is illustrated in the Fig....
\end{proof}

Similarly, for the signal from $B$ to reach $S$ before the one from $E$, we must have 
\begin{equation}
\label{eq:signal-B}
 S \in \text{int} \left[\textsl{C}(E; T(B, \textbf{B}(E; \;  \alpha |EB|)))\right].
\end{equation}
Therefore, in order to exclude every point in space from a superluminal influence from $E$, one must ensure that either $\textsl{C}(E; T(A, \textbf{B}(E; \;  \alpha |EA|))) \cap \textsl{C}(E; T(B, \textbf{B}(E; \;  \alpha |EB|))) = E$ or that the two cones only intersect at a common tangent plane. In this case, no point $S$ different from $E$ exists such that $E$ superluminally signals to $S$ via her influence of $A$ and $B$.

Alternatively, if the intersection $\textsl{C}(E; T(A, \textbf{B}(E; \;  \alpha |EA|))) \cap \textsl{C}(E; T(B, \textbf{B}(E; \;  \alpha |EB|)))$ is a three-dimensional region, then by Claim \ref{cl:sign-point} one can explicitly find an $S$ in this region obeying both Eqs. (\ref{eq:signal-A}) and (\ref{eq:signal-B}), so that superluminal signaling to such an $S$ occurs.
\end{proof}

The constraint in Lemma.(\ref{lem:dual-signal}) is illustrated in Fig \ref{fig:signaling-region-cond}. 
Having Lemma \ref{lem:dual-signal} we can proceed to identify for given $\textbf{r}_A, \textbf{r}_B$ the region of allowable $\textbf{r}_E$ such that no causality condition is violated (in the chosen reference frame). Given the points $A, B$ let $\bigcirc(AB; \theta)$ denote a circle containing $AB$ as a chord and $\theta$ as the angle in the minor segment of the circle. Note that there are an infinite number of such circles, one for each plane containing the line segment $AB$. Let us also denote by $Seg(\bigcirc(AB; \theta))$ the corresponding minor segment of the circle. We now show that the possible space coordinates $\textbf{r}_E$ lie within the shaded region (in green) in Fig. \ref{fig:aposteriori-ns-region}.  

%%%\begin{prop}
%%%Consider measurement events $A, B$ with corresponding space coordinates $\textbf{r}_A, \textbf{r}_B$ in a chosen inertial reference frame $\textsl{I}$. Then a measurement event $E$ can superluminally influence the correlations between $A$ and $B$ at speed $u > c$ without violating causality in $\textsl{I}$ if and only if its space coordinate $\textbf{r}_E$ satisfies
%%%\begin{equation}
%%%\textbf{r}_E \in Seg(\bigcirc(AB; \varphi_{\alpha}))
%%%\end{equation}  
%%%for any circle $\bigcirc$ with $AB$ as a chord and having angle $\varphi_{\alpha}$ as the angle in the corresponding minor segment, where $\varphi_{\alpha} = \pi - 2 \arcsin(\alpha)$ and $\alpha = c/u$, and if its time coordinate $t_E$ satisfies 
%%%\begin{equation}
%%%t_E \leq \min \left[ t_A - \frac{|\textbf{r}_A - \textbf{r}_E|}{u}, t_B - \frac{|\textbf{r}_B - \textbf{r}_E|}{u} \right]. 
%%%\end{equation} 
%%%\end{prop}
%%%\begin{proof}
We have seen in Lemma \ref{lem:dual-signal} that in order for $E$ to influence the correlations between $A$ and $B$ at speed $u$ without violating causality, we must have that the cones $\textsl{C}(E; T(A, \textbf{B}(E; \;  \alpha |EA|)))$ and $\textsl{C}(E; T(B, \textbf{B}(E; \;  \alpha |EB|)))$ at most share a common tangent plane, in particular that the intersection of their interiors is empty. 

Let us first consider the case that $\textsl{C}(E; T(A, \textbf{B}(E; \;  \alpha |EA|)))$ and $\textsl{C}(E; T(B, \textbf{B}(E; \;  \alpha |EB|)))$ share a common tangent plane passing through the points $A_1, B_1$ where $A_1$ is a point of intersection of the sphere $\textbf{B}(E; \;  \alpha |EA|)$ and its tangent plane from $A$, $T(A, \textbf{B}(E; \;  \alpha |EA|))$ and similarly $B_1$ is a point of intersection of the sphere $\textbf{B}(E; \;  \alpha |EB|)$ and $T(B, \textbf{B}(E; \;  \alpha |EB|))$, such that $E, A_1, B_1$ lie on one line. By the property of the tangent we have that $\measuredangle E A_1 A = \measuredangle E B_1 B = \frac{\pi}{2}$. Also by definition we have that $\frac{|EA_1|}{|AE|} = \frac{|E B_1|}{|BE|} = \alpha = c/u$ so that $\measuredangle E A A_1 = \measuredangle E B B_1 = \arcsin{(\alpha)}$. Consequently, we obtain that any point $E$ for which the cones $\textsl{C}(E; T(A, \textbf{B}(E; \;  \alpha |EA|)))$ and $\textsl{C}(E; T(B, \textbf{B}(E; \;  \alpha |EB|)))$ share a common tangent plane has space coordinate in the chosen inertial reference frame satisfying
\begin{equation}
\measuredangle A E B = \pi - 2 \arcsin(\alpha) = \varphi_{\alpha}.
\end{equation}
By the fact that angles on the same segment of a circle are equal, we have that all such points lie on a minor segment of a circle (the segment is minor since $\pi - 2 \arcsin(\alpha) \geq \frac{\pi}{2}$ for $\alpha < 1$) having $AB$ as a chord and with $\varphi_{\alpha}$ as the corresponding angle in the segment. This shows the statement for points on the boundary of the region $Seg(\bigcirc(AB; \varphi_{\alpha}))$. For all such points the cones $\textsl{C}(E; T(A, \textbf{B}(E; \;  \alpha |EA|)))$ and $\textsl{C}(E; T(B, \textbf{B}(E; \;  \alpha |EB|)))$ share a common tangent plane. 

Now consider the points that lie within this region. For any such point, we have that $\measuredangle AEB > \pi - 2 \arcsin(\alpha)$. By construction we know that $\measuredangle E A A_1 = \measuredangle E B B_1 = \arcsin{(\alpha)}$ so that by the perpendicularity of the tangent with the radius of the sphere, we have $\measuredangle A E A_1 = \measuredangle B E B_1 = \pi/2 - \arcsin(\alpha)$. 
Therefore, for points that lie within the region, the cones $\textsl{C}(E; T(A, \textbf{B}(E; \;  \alpha |EA|)))$ and $\textsl{C}(E; T(B, \textbf{B}(E; \;  \alpha |EB|)))$ intersect only at $E$, this is the case illustrated in Fig. \ref{fig:signaling-region-cond}. 

For points outside this region we readily see that $\textsl{C}(E; T(A, \textbf{B}(E; \;  \alpha |EA|)))$ and $\textsl{C}(E; T(B, \textbf{B}(E; \;  \alpha |EB|)))$ neither intersect only at $E$ nor share only a common tangent plane. We see by Lemma \ref{lem:dual-signal} that such points violate causality and therefore cannot be the space coordinates of $E$ in the chosen reference frame.
%\end{proof}
\end{proof}

\begin{center}
\begin{figure}[t!]
		\includegraphics[width=0.45\textwidth]{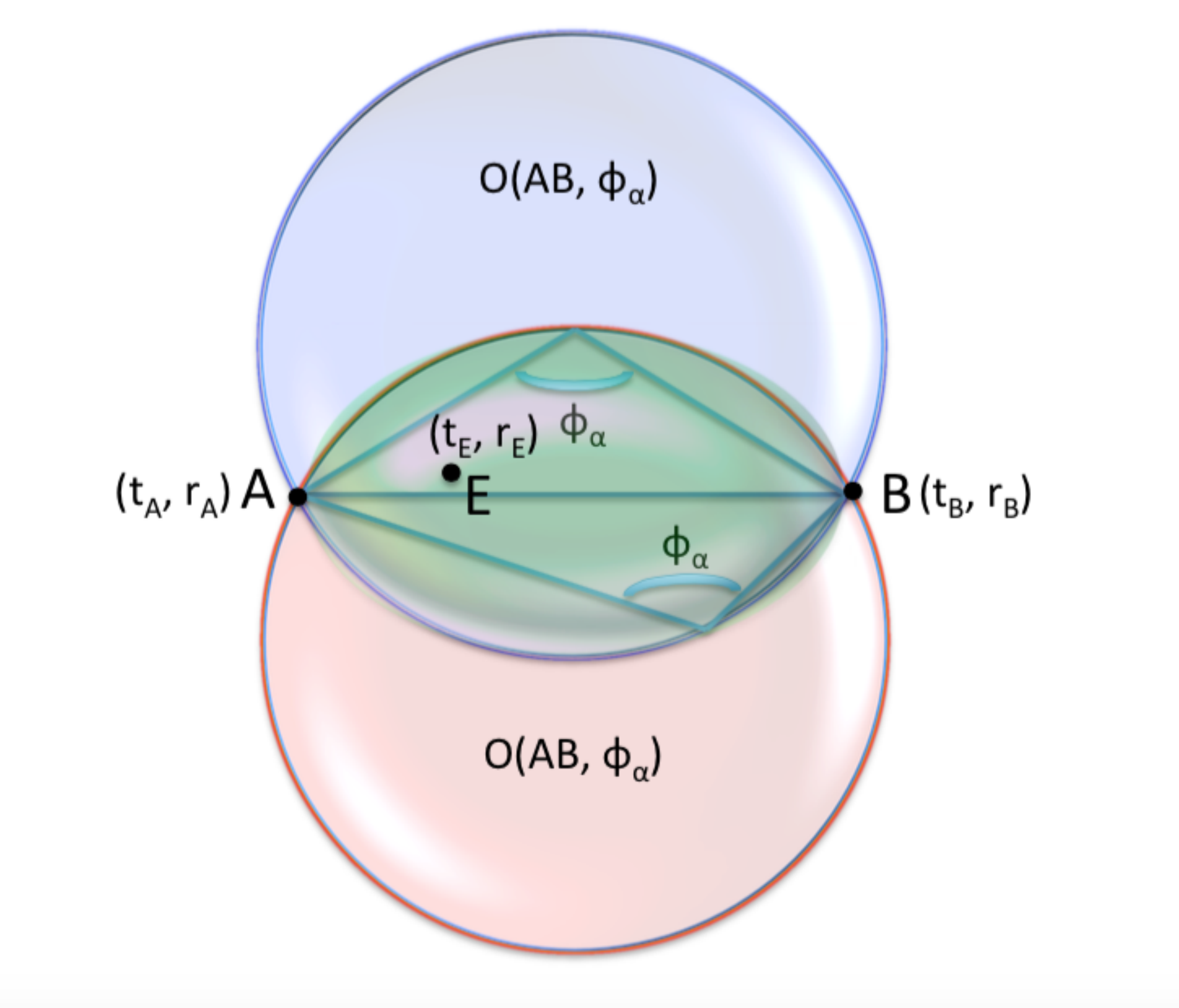}
\caption{The spacetime region, for fixed $\textbf{r}_A$, $\textbf{r}_B$ from which $E$ is able to superluminally influence the correlations $AB$ without changing the marginals while preserving causality. The figure illustrates the algebraic condition found in Prop. \ref{prop:spacetime-region}.}
	\label{fig:aposteriori-ns-region}
\end{figure}
\end{center}

%\begin{prop}
%Let $\mathcal{I}^{n}_{\text{Merm}} \cdot \{P(\textbf{a} | \textbf{x})\}$ denote 
\textbf{Proposition 12.} \textit{Consider the $n$-party GHZ-Mermin Bell inequality, for odd $n \geq 3$.  For any input $\textbf{x}^*$ appearing in the inequality, i.e., $\textbf{x}^* \in \mathcal{X}^{n}_{\text{Merm}}$, there exists a box $\mathcal{P}_{\textbf{A} | \textbf{X}}$ violating the Mermin inequality maximally and obeying the relativistic causality constraints in Eq.(\ref{eq:rel-caus}), such that no randomness can be extracted from the outputs $\textbf{a}$ of the box under input $\textbf{x}^*$. In other words, we have 
%such that for any hashing function $h_{\textbf{x}^*}(\textbf{a}) : | \mathcal{A}|^n \rightarrow \{0,1\}$, we have 
\begin{equation}
\label{eq:det-hash}
\mathcal{P}_{\textbf{A} | \textbf{X}}(\textbf{a}^* | \textbf{x}^*) = 1,
\end{equation}   
for some fixed output bit string $\textbf{a}^*$.}
%\end{prop}
\begin{proof}
The proof is by explicit construction of the $n$-party box $\mathcal{P}$ satisfying the relativistic causality constraints Eq.(\ref{eq:rel-caus}) and violating the Mermin inequality maximally that returns a deterministic output for the fixed input $\textbf{x}^* \in \mathcal{X}^{n}_{\text{Merm}}$. 

% given any chosen input $\textbf{x} \in \mathcal{X}^{n}_{\text{Merm}}$ appearing in the Mermin inequality. 
Let us first list the set of constraints that $\mathcal{P}$ must obey. Split the set of inputs $\mathcal{X}^{n}_{\text{Merm}}$ appearing in the Mermin inequality into two subsets 
\begin{eqnarray}
\mathcal{X}^{n,-1}_{\text{Merm}} & := & \{\textbf{x} | \sum_i x_i = n - 2k, k \in \mathbb{Z}, \text{k odd}\} \nonumber \\
\mathcal{X}^{n,1}_{\text{Merm}} & := &\{\textbf{x} | \sum_i x_i = n - 2k, k \in \mathbb{Z}, \text{k even}\}. 
\end{eqnarray}
Then for the inputs $\textbf{x}$ in the set $\mathcal{X}^{n,-1}_{\text{Merm}}$, the box $\mathcal{P}$ is required to give outputs such that the value of the $n$-party correlation function is $-1$, i.e., 
\begin{eqnarray}
\label{eq:constraint-1}
\langle x_1 \dots x_n \rangle = P_{\textbf{A} | \textbf{X}}(\oplus_{i=1}^{n} a_i = 0 | x_1, \dots, x_n) - P_{\textbf{A} | \textbf{X}}(\oplus_{i=1}^{n} a_i = 1 | x_1, \dots, x_n) = -1, \quad \forall \textbf{x} \in \mathcal{X}^{n,-1}_{\text{Merm}}.
\end{eqnarray}
Similarly for the inputs $\textbf{x} \in \mathcal{X}^{n,1}_{\text{Merm}}$, the outputs from the box are required to yield a value of $1$ for the $n$-party correlation function, i.e.,
\begin{eqnarray}
\label{eq:constraint-2}
\langle x_1 \dots x_n \rangle = P_{\textbf{A} | \textbf{X}}(\oplus_{i=1}^{n} a_i = 0 | x_1, \dots, x_n) - P_{\textbf{A} | \textbf{X}}(\oplus_{i=1}^{n} a_i = 1 | x_1, \dots, x_n) = 1, \quad \forall \textbf{x} \in \mathcal{X}^{n,1}_{\text{Merm}}.
\end{eqnarray}
For the given fixed input $\textbf{x}^* \in \mathcal{X}^{n}_{\text{Merm}}$, the box $\mathcal{P}$ is required to return a deterministic input $\textbf{a}^*$, this is the constraint from Eq.(\ref{eq:det-hash}). The box $\mathcal{P}$ is also required to obey all of the relativistic causality constraints from Eq.(\ref{eq:rel-caus}). Finally, as usual we have the non-negativity and normalization constraints for the probability distributions, i.e.,
\begin{eqnarray}
\label{eq:norm-nonneg} 
\mathcal{P}_{\textbf{A} | \textbf{X}}(\textbf{a} | \textbf{x}) \geq 0 \; \; \forall \textbf{a}, \textbf{x} \nonumber \\
\sum_{\textbf{a}} \mathcal{P}_{\textbf{A} | \textbf{X}}(\textbf{a} | \textbf{x}) = 1 \; \; \forall \textbf{x}. 
\end{eqnarray}
%constraints on the correlators is given from Eq.(\ref{eq:Mermin-corr}) by $\langle x_1 \dots x_n \rangle = -1$ for $\textbf{x} \in \mathcal{X}^{n,-1}_{\text{Merm}}$ and $\langle x_1 \dots x_n \rangle = +1$ for $\textbf{x} \in \mathcal{X}^{n,1}_{\text{Merm}}$. 
%Now, for given $\textbf{x}^* \in \mathcal{X}^{n,1}_{\text{Merm}}$ our task is to construct a box $\mathcal{P}$ satisfying the constraints Eq.(\ref{eq:rel-caus}) and Eq.(\ref{eq:Mermin-corr}) and possessing the property Eq.(\ref{eq:det-hash}) that $\mathcal{P}(\textbf{a}^* | \textbf{x}^*) = 1$ for some fixed output $\textbf{a}^*$. 
We will show a binary-tree algorithmic construction of $\mathcal{P}_{\textbf{A} | \textbf{X}}$ obeying all the constraints Eqs.(\ref{eq:rel-caus}, \ref{eq:det-hash}, \ref{eq:constraint-1}, \ref{eq:constraint-2}, \ref{eq:norm-nonneg}) for given $\textbf{x}^* \in \mathcal{X}^{n,1}_{\text{Merm}}$. 
%and given hashing function $h_{\textbf{x}^*}(\textbf{a})$, 
The construction for $\textbf{x} \in \mathcal{X}^{n,-1}_{\text{Merm}}$ will follow along analogous lines.

%Having set up the base step, let us now proceed to give the general algorithmic construction that defines a set of two output strings that occur with probability $\frac{1}{2}$ for each of the input strings $\textbf{x}^* \oplus \textbf{1}^{i_1} \oplus \textbf{1}^{i_2} \oplus \dots \oplus \textbf{1}^{i_k}$ for $1 \leq k \leq n$ in the box. 
%To do this, we give an algorithm that defines the two specific output bit strings for the generic input $\textbf{x}^* \oplus \textbf{1}^{i_1} \oplus \textbf{1}^{i_2} \oplus \dots \oplus \textbf{1}^{i_k}$. 

\begin{algorithm}[h]
\caption{Construction of box $\mathcal{P}_{\textbf{A} | \textbf{X}}$}
\label{CHalgorithm}
\begin{algorithmic}[1]
\Procedure{construction of $\mathcal{P}_{\textbf{A} | \textbf{X}}$} {}
\State Let $\textbf{x}^* \in \mathcal{X}^{n,1}_{\text{Merm}}$ be given. Initiate as step $0$, $\textbf{a}^l(\textbf{x}^*) = \textbf{a}^r(\textbf{x}^*) = \textbf{a}^*$ (the all-$0$ bit string). 
\State At the $(2j+1)$-th step, $0 \leq j \leq \frac{n-1}{2}$, $\forall$ $1 \leq i_1 \leq \dots \leq i_{2j+1} \leq n$, 
if $\textbf{x}^*_{i_{2j+1}} = 0$ define
\begin{eqnarray}
\label{eq:odd-step-0}
\textbf{a}^l(\textbf{x}^* \oplus \textbf{1}^{i_1} \oplus \dots \oplus \textbf{1}^{i_{2j+1}}) &:=& \textbf{a}^l(\textbf{x}^* \oplus \textbf{1}^{i_1} \oplus \dots \oplus \textbf{1}^{i_{2j}}) \nonumber \\
\textbf{a}^r(\textbf{x}^* \oplus \textbf{1}^{i_1} \oplus \dots \oplus \textbf{1}^{i_{2j+1}}) &:=& \textbf{a}^r(\textbf{x}^* \oplus \textbf{1}^{i_1} \oplus \dots \oplus \textbf{1}^{i_{2j}}) \oplus \textbf{1}^{i_j}. \nonumber \\
\end{eqnarray} 
If on the other hand, $\textbf{x}^*_{i_{2j+1}} = 1$ define
\begin{eqnarray}
\label{eq:odd-step-1}
\textbf{a}^l(\textbf{x}^* \oplus \textbf{1}^{i_1} \oplus \dots \oplus \textbf{1}^{i_{2j+1}}) &:=& \textbf{a}^l(\textbf{x}^* \oplus \textbf{1}^{i_1} \oplus \dots \oplus \textbf{1}^{i_{2j}}) \oplus \textbf{1}^{i_{2j+1}} \nonumber \\
\textbf{a}^r(\textbf{x}^* \oplus \textbf{1}^{i_1} \oplus \dots \oplus \textbf{1}^{i_{2j+1}}) &:=& \textbf{a}^r(\textbf{x}^* \oplus \textbf{1}^{i_1} \oplus \dots \oplus \textbf{1}^{i_{2j}}). \nonumber \\
\end{eqnarray}
\State At the $2j$-th step $1 \leq j \leq  \frac{n-1}{2}$,  $\forall$ $1 \leq i_1 \leq \dots \leq i_{2j} \leq n$, if $\textbf{x}^*_{i_{2j}} = 0$ define
\begin{eqnarray}
\label{eq:even-step-0}
\textbf{a}^l(\textbf{x}^* \oplus \textbf{1}^{i_1} \oplus \dots \oplus \textbf{1}^{i_{2j}}) &:=& \textbf{a}^l(\textbf{x}^* \oplus \textbf{1}^{i_1} \oplus \dots \oplus \textbf{1}^{i_{2j-1}}) \oplus \textbf{1}^{i_{2j}} \nonumber \\
\textbf{a}^r(\textbf{x}^* \oplus \textbf{1}^{i_1} \oplus \dots \oplus \textbf{1}^{i_{2j}}) &:=& \textbf{a}^r(\textbf{x}^* \oplus \textbf{1}^{i_1} \oplus \dots \oplus \textbf{1}^{i_{2j-1}}). \nonumber \\
\end{eqnarray} 
If on the other hand, $\textbf{x}^*_{i_{2j}} = 1$ define
\begin{eqnarray}
\label{eq:even-step-1}
\textbf{a}^l(\textbf{x}^* \oplus \textbf{1}^{i_1} \oplus \dots \oplus \textbf{1}^{i_{2j}}) &:=& \textbf{a}^l(\textbf{x}^* \oplus \textbf{1}^{i_1} \oplus \dots \oplus \textbf{1}^{i_{2j-1}}) \nonumber \\
\textbf{a}^r(\textbf{x}^* \oplus \textbf{1}^{i_1} \oplus \dots \oplus \textbf{1}^{i_{2j}}) &:=& \textbf{a}^r(\textbf{x}^* \oplus \textbf{1}^{i_1} \oplus \dots \oplus \textbf{1}^{i_{2j-1}}) \oplus \textbf{1}^{i_{2j}}. \nonumber \\
\end{eqnarray}
\State $\forall \textbf{x}$, set 
\begin{eqnarray}
&&\mathcal{P}_{\textbf{A} | \textbf{X}}(\textbf{a}^l(\textbf{x}) | \textbf{x}) = \mathcal{P}_{\textbf{A} | \textbf{X}}(\textbf{a}^r(\textbf{x}) | \textbf{x}) = \frac{1}{2}, \nonumber \\
&&\mathcal{P}_{\textbf{A} | \textbf{X}}(\textbf{a} | \textbf{x}) = 0, \; \; \; \text{otherwise}.
\end{eqnarray}
\EndProcedure
\end{algorithmic}
\end{algorithm}

The proof that the binary tree algorithm yields a box with the desired properties is done by induction. First consider the base steps. We are given $\textbf{x}^* \in \mathcal{X}^{n,1}_{\text{Merm}}$, and a fixed output bit string of even parity $\textbf{a}^*$ (the all-$0$ string). 
%The box $\mathcal{P}^*$ is firstly required to satisfy  
%\begin{equation}
%\label{eq:merm-corr-part}
%$\langle \textbf{x}^* \rangle_{\mathcal{P}^*} = 1$
%\end{equation}
%from the Mermin constraints Eq.(\ref{eq:Mermin-corr}). 
%Without loss of generality, one may assume that $h_{\textbf{x}^*}(\textbf{a}^*) = 1$ for a particular string of outputs $\textbf{a}^*$ of even parity (say, the all-$0$ string). 
%Consider a particular output bit string of even parity $\textbf{a}^*$ (the all $0$ string, say). 
The algorithm sets $\mathcal{P}_{\textbf{A} | \textbf{X}}(\textbf{a}^*|\textbf{x}^*) = 1$ and $\mathcal{P}_{\textbf{A} | \textbf{X}}(\textbf{a} | \textbf{x}^*) = 0$ for $\textbf{a} \neq \textbf{a}^*$. This assignment clearly satisfies both Eqs. (\ref{eq:det-hash}) and the Mermin constraint $\langle x_1^* \dots x_n^* \rangle_{\mathcal{P}} = 1$.
% (\ref{eq:merm-corr-part}). 
In particular, since the output distribution is deterministic for the chosen input $\textbf{x}^*$, no hash function chosen by the parties can extract any randomness from the outputs for this input. 
%Note that if $h_{\textbf{x}^*}(\textbf{a}^*) = s^*$, then Eq.(\ref{eq:det-hash}) is satisfied with $s = s^*$.

The next base step in the construction of box $\mathcal{P}_{\textbf{A} | \textbf{X}}$ is to define the conditional probability distribution of the outputs for the set of inputs $\textbf{x}$ that are at Hamming distance $1$ from $\textbf{x}^*$, i.e., the inputs obtained from $\textbf{x}^*$ by a bit flip at position $i$. Explicitly, we consider the set of inputs $\textbf{x} \in \mathcal{X}_{\text{Ham}(\textbf{x}^*, \textbf{x}) = 1}$ where $\mathcal{X}_{\text{Ham}(\textbf{x}^*, \textbf{x}) = 1} := \{\textbf{x}^i | \textbf{x}^i = \textbf{x}^* \oplus \textbf{1}^i \}$, $\textbf{1}^i$ denotes the all-0 bit string with a single bit value $1$ at position $i$, and $\oplus$ denotes the bit-wise $\textsc{xor}$ of the two bit strings. For $\textbf{x}^i \in \mathcal{X}_{\text{Ham}(\textbf{x}^*, \textbf{x}) = 1}$, the algorithm sets 
%\begin{eqnarray}
%\label{eq:s1-prob-dist}
$\mathcal{P}(_{\textbf{A} | \textbf{X}}\textbf{a}^*| \textbf{x}^i) = 
\mathcal{P}_{\textbf{A} | \textbf{X}}(\textbf{a}^* \oplus \textbf{1}^i| \textbf{x}^i) = \frac{1}{2}$, 
%\nonumber \\
%\mathcal{P}(\textbf{a} | \textbf{x}^i) &=& 0, \; \; \; \text{otherwise}. 
%\end{eqnarray}
and $\mathcal{P}_{\textbf{A} | \textbf{X}}(\textbf{a} | \textbf{x}^i) = 0$ otherwise. 
Observe that inputs from $\mathcal{X}_{\text{Ham}(\textbf{x}^*, \textbf{x}) = 1}$ have even parity and do not enter the Mermin inequality, i.e., $\mathcal{X}_{\text{Ham}(\textbf{x}^*, \textbf{x}) = 1} \cap \mathcal{X}^n_{\text{Merm}} = \emptyset$. From the above assignment, we have $\langle x_1^i \dots x_n^i \rangle_{\mathcal{P}_{\textbf{A} | \textbf{X}}^*} = 0$ for $\textbf{x}^i \in \mathcal{X}_{\text{Ham}(\textbf{x}^*, \textbf{x}) = 1}$. Furthermore, the assignment satisfies the relativistically causal constraints for the contiguous subsets to the left ($<i := S_{1,i-1}^{n}$) and right ($>i := S_{i+1,n}^{n}$) of bit $i$, i.e., 
\begin{widetext}
\begin{eqnarray}
\label{eq:rel-caus-i-*}
\mathcal{P}_{\textbf{A} | \textbf{X}}(\textbf{a}^*_{<i} | \textbf{x}^*_{< i}) &=& \sum_{\textbf{a'}_{> i}} \mathcal{P}_{\textbf{A} | \textbf{X}}(\textbf{a'}|\textbf{x}^i) = \sum_{\textbf{a''}_{>i}} \mathcal{P}_{\textbf{A} | \textbf{X}}(\textbf{a''}|\textbf{x}^*) = 1, \; \; \;  \forall \textbf{a'}, \textbf{a''} \; \text{with} \; \textbf{a'}_{< i} = \textbf{a''}_{< i} = \textbf{a}^{*}_{< i}\nonumber \\
%\sum_{\textbf{a'}_{> i}} \mathcal{P}(\textbf{a'}|\textbf{x}^i) = \sum_{\textbf{a''}_{>i}} \mathcal{P}(\textbf{a''}|\textbf{x}^*) = 0, \; \; \;  \forall \textbf{a'}, \textbf{a''} \; \text{with} \; \textbf{a'}_{< i} = \textbf{a''}_{< i} \neq \textbf{a}^{*}_{< i}\nonumber \\
\mathcal{P}_{\textbf{A} | \textbf{X}}(\textbf{a}^*_{>i} | \textbf{x}^*_{> i}) &=& \sum_{\textbf{a'}_{< i}} \mathcal{P}_{\textbf{A} | \textbf{X}}(\textbf{a'}|\textbf{x}^i) = \sum_{\textbf{a''}_{<i}} \mathcal{P}_{\textbf{A} | \textbf{X}}(\textbf{a''}|\textbf{x}^*) = 1, \; \; \;  \forall \textbf{a'}, \textbf{a''} \; \text{with} \; \textbf{a'}_{> i} = \textbf{a''}_{> i} = \textbf{a}^{*}_{> i}. \nonumber \\
%\sum_{\textbf{a'}_{< i}} \mathcal{P}(\textbf{a'}|\textbf{x}^i) = \sum_{\textbf{a''}_{<i}} \mathcal{P}(\textbf{a''}|\textbf{x}^*) = 0, \; \; \;  \forall \textbf{a'}, \textbf{a''} \; \text{with} \; \textbf{a'}_{> i} = \textbf{a''}_{> i} \neq \textbf{a}^{*}_{> i}.
\end{eqnarray}
\end{widetext}

Having shown the base steps, we now prove by induction that the above algorithmic construction works as advertised, i.e., that the box $\mathcal{P}$ it produces satisfies all the constraints in Eqs.(\ref{eq:rel-caus}, \ref{eq:det-hash}, \ref{eq:constraint-1}, \ref{eq:constraint-2}, \ref{eq:norm-nonneg}).  Clearly, Eq.(\ref{eq:det-hash}) and the normalization constraints (\ref{eq:norm-nonneg}) are satisfied by construction. So it remains to show that the box satisfies the relativistic causality constraints (\ref{eq:rel-caus}) and the Mermin inequality constraints (\ref{eq:constraint-1}) and (\ref{eq:constraint-2}).

Let us firstly consider the Mermin inequality constraints (\ref{eq:constraint-1}) and (\ref{eq:constraint-2}). These constraints appear at the $2j$-th (even) steps with $1 \leq j \leq \frac{n-1}{2}$ of the Algorithm. Assume as the inductive hypothesis that the Mermin constraints are satisfied at the $2(k-1)$-th step, i.e., that for all $1 \leq i_1 \leq \dots \leq i_{2k-2} \leq n$, we have
\begin{eqnarray}
\label{eq:induct-Mermin-corr}
\langle \textbf{x}^* \oplus \textbf{1}^{i_1} \oplus \dots  \textbf{1}^{i_{2k-2}} \rangle &=& 1 \;   \forall \textbf{x}^* \oplus \textbf{1}^{i_1} \oplus \dots \textbf{1}^{i_{2k-2}} \in \mathcal{X}_{\text{Merm}}^{n,1} \nonumber \\
\langle \textbf{x}^* \oplus \textbf{1}^{i_1} \oplus \dots  \textbf{1}^{i_{2k-2}} \rangle &=& -1 \; \forall \textbf{x}^* \oplus \textbf{1}^{i_1} \oplus \dots \textbf{1}^{i_{2k-2}} \in \mathcal{X}_{\text{Merm}}^{n,-1} \nonumber \\
\end{eqnarray}
Now, consider the scenario when $\textbf{x}^* \oplus \textbf{1}^{i_1} \oplus \dots \oplus \textbf{1}^{i_{2k}} \in \mathcal{X}^{n,1}_{\text{Merm}}$. We have two cases: either 
\begin{enumerate}
\item $\textbf{x}^* \oplus \textbf{1}^{i_1} \oplus \dots \oplus \textbf{1}^{i_{2k-2}} \in \mathcal{X}^{n,1}_{\text{Merm}}$ and $(\textbf{x}^*_{i_{2k-1}}, \textbf{x}^*_{i_{2k}}) \in \{(0,1), (1,0)\}$ or 
\item $\textbf{x}^* \oplus \textbf{1}^{i_1} \oplus \dots \oplus \textbf{1}^{i_{2k-2}} \in \mathcal{X}^{n,-1}_{\text{Merm}}$ and $(\textbf{x}^*_{i_{2k-1}}, \textbf{x}^*_{i_{2k}}) \in \{(0,0), (1,1)\}$.
\end{enumerate} 
In the case (1), we have by the inductive hypothesis Eq.(\ref{eq:induct-Mermin-corr}) that $\langle \textbf{x}^* \oplus \textbf{1}^{i_1} \oplus \dots \textbf{1}^{i_{2k-2}}  \rangle = 1$, i.e, both $\textbf{a}^l(\textbf{x}^* \oplus \textbf{1}^{i_1} \oplus \dots \textbf{1}^{i_{2k-2}})$ and $\textbf{a}^r(\textbf{x}^* \oplus \textbf{1}^{i_1} \oplus \dots \textbf{1}^{i_{2k-2}})$ have even parity. Then when $(\textbf{x}^*_{i_{2k-1}}, \textbf{x}^*_{i_{2k}}) = (0,1)$ we get by first applying Eq.(\ref{eq:odd-step-0}) and then Eq.(\ref{eq:even-step-1}) that 
\begin{eqnarray}
\textbf{a}^l(\textbf{x}^* \oplus \textbf{1}^{i_1} \oplus \dots \textbf{1}^{i_{2k}}) &=& \textbf{a}^l(\textbf{x}^* \oplus \textbf{1}^{i_1} \oplus \dots \textbf{1}^{i_{2k-2}}), \nonumber \\
\textbf{a}^r(\textbf{x}^* \oplus \textbf{1}^{i_1} \oplus \dots \textbf{1}^{i_{2k}}) &=& \textbf{a}^r(\textbf{x}^* \oplus \textbf{1}^{i_1} \oplus \dots \textbf{1}^{i_{2k-2}})  \oplus \textbf{1}^{2k-1} \oplus \textbf{1}^{2k}, \nonumber \\
\end{eqnarray}    
so that both outputs $\textbf{a}^{l/r}(\textbf{x}^* \oplus \textbf{1}^{i_1} \oplus \dots \textbf{1}^{i_{2k}})$ have even parity and we obtain $\langle \textbf{x}^* \oplus \textbf{1}^{i_1} \oplus \dots \textbf{1}^{i_{2k}}  \rangle = 1$ as required. Similarly when $(\textbf{x}^*_{i_{2k-1}}, \textbf{x}^*_{i_{2k}}) = (1,0)$ we get by first applying Eq.(\ref{eq:odd-step-1}) and then Eq.(\ref{eq:even-step-0}) that 
\begin{eqnarray}
\textbf{a}^l(\textbf{x}^* \oplus \textbf{1}^{i_1} \oplus \dots \textbf{1}^{i_{2k}}) &=& \textbf{a}^l(\textbf{x}^* \oplus \textbf{1}^{i_1} \oplus \dots \textbf{1}^{i_{2k-2}}) \oplus \textbf{1}^{2k-1} \oplus \textbf{1}^{2k}, \nonumber \\
\textbf{a}^r(\textbf{x}^* \oplus \textbf{1}^{i_1} \oplus \dots \textbf{1}^{i_{2k}}) &=& \textbf{a}^r(\textbf{x}^* \oplus \textbf{1}^{i_1} \oplus \dots \textbf{1}^{i_{2k-2}}), \nonumber \\
\end{eqnarray} 
so that again both outputs have even parity. 
%and we again obtain $\langle \textbf{x}^* \oplus \textbf{1}^{i_1} \oplus \dots \textbf{1}^{i_{2k}}  \rangle = 1$ as required. 

In the case (2), we have by the inductive hypothesis Eq.(\ref{eq:induct-Mermin-corr}) that $\langle \textbf{x}^* \oplus \textbf{1}^{i_1} \oplus \dots \textbf{1}^{i_{2k-2}}  \rangle = -1$, i.e, both $\textbf{a}^l(\textbf{x}^* \oplus \textbf{1}^{i_1} \oplus \dots \textbf{1}^{i_{2k-2}})$ and $\textbf{a}^r(\textbf{x}^* \oplus \textbf{1}^{i_1} \oplus \dots \textbf{1}^{i_{2k-2}})$ have odd parity. Then when $(\textbf{x}^*_{i_{2k-1}}, \textbf{x}^*_{i_{2k}}) = (0,0)$ we get by first applying Eq.(\ref{eq:odd-step-0}) and then Eq.(\ref{eq:even-step-0}) that 
\begin{eqnarray}
\textbf{a}^l(\textbf{x}^* \oplus \textbf{1}^{i_1} \oplus \dots \textbf{1}^{i_{2k}}) &=& \textbf{a}^l(\textbf{x}^* \oplus \textbf{1}^{i_1} \oplus \dots \textbf{1}^{i_{2k-2}})  \oplus \textbf{1}^{2k}, \nonumber \\
\textbf{a}^r(\textbf{x}^* \oplus \textbf{1}^{i_1} \oplus \dots \textbf{1}^{i_{2k}}) &=& \textbf{a}^r(\textbf{x}^* \oplus \textbf{1}^{i_1} \oplus \dots \textbf{1}^{i_{2k-2}}) \oplus \textbf{1}^{2k-1}, \nonumber \\
\end{eqnarray}    
so that a bit flip occurs on both outputs $\textbf{a}^{l/r}(\textbf{x}^* \oplus \textbf{1}^{i_1} \oplus \dots \textbf{1}^{i_{2k}})$ which now have even parity and we obtain $\langle \textbf{x}^* \oplus \textbf{1}^{i_1} \oplus \dots \textbf{1}^{i_{2k}}  \rangle = 1$ as required. Finally, when $(\textbf{x}^*_{i_{2k-1}}, \textbf{x}^*_{i_{2k}}) = (1,1)$ we get by first applying Eq.(\ref{eq:odd-step-1}) and then Eq.(\ref{eq:even-step-1}) that 
\begin{eqnarray}
\textbf{a}^l(\textbf{x}^* \oplus \textbf{1}^{i_1} \oplus \dots \textbf{1}^{i_{2k}}) &=& \textbf{a}^l(\textbf{x}^* \oplus \textbf{1}^{i_1} \oplus \dots \textbf{1}^{i_{2k-2}})  \oplus \textbf{1}^{2k-1}, \nonumber \\
\textbf{a}^r(\textbf{x}^* \oplus \textbf{1}^{i_1} \oplus \dots \textbf{1}^{i_{2k}}) &=& \textbf{a}^r(\textbf{x}^* \oplus \textbf{1}^{i_1} \oplus \dots \textbf{1}^{i_{2k-2}}) \oplus \textbf{1}^{2k}, \nonumber \\
\end{eqnarray}  
so that again the outputs have even parity as required. 
%so that a bit flip occurs on both outputs $\textbf{a}^{l/r}(\textbf{x}^* \oplus \textbf{1}^{i_1} \oplus \dots \textbf{1}^{i_{2k}})$ which now have even parity and we obtain $\langle \textbf{x}^* \oplus \textbf{1}^{i_1} \oplus \dots \textbf{1}^{i_{2k}}  \rangle = 1$ as required. 

Now, consider the scenario when $\textbf{x}^* \oplus \textbf{1}^{i_1} \oplus \dots \oplus \textbf{1}^{i_{2k}} \in \mathcal{X}^{n,-1}_{\text{Merm}}$. We again have two cases: either 
\begin{enumerate}
\item $\textbf{x}^* \oplus \textbf{1}^{i_1} \oplus \dots \oplus \textbf{1}^{i_{2k-2}} \in \mathcal{X}^{n,1}_{\text{Merm}}$ and $(\textbf{x}^*_{i_{2k-1}}, \textbf{x}^*_{i_{2k}}) \in \{(0,0), (1,1)\}$ or 
\item $\textbf{x}^* \oplus \textbf{1}^{i_1} \oplus \dots \oplus \textbf{1}^{i_{2k-2}} \in \mathcal{X}^{n,-1}_{\text{Merm}}$ and $(\textbf{x}^*_{i_{2k-1}}, \textbf{x}^*_{i_{2k}}) \in \{(0,1), (1,0)\}$.
\end{enumerate} 
As in the previous scenario, an exhaustive check of the output parities in the case (1) and (2) shows that the Mermin constraints are satisfied. For clarity, we explicitly provide these here. In the case (1), we have by the inductive hypothesis Eq.(\ref{eq:induct-Mermin-corr}) that both $\textbf{a}^l(\textbf{x}^* \oplus \textbf{1}^{i_1} \oplus \dots \textbf{1}^{i_{2k-2}})$ and $\textbf{a}^r(\textbf{x}^* \oplus \textbf{1}^{i_1} \oplus \dots \textbf{1}^{i_{2k-2}})$ have even parity. Then when $(\textbf{x}^*_{i_{2k-1}}, \textbf{x}^*_{i_{2k}}) = (0,0)$, applying first Eq.(\ref{eq:odd-step-0}) and then (\ref{eq:even-step-0}) we get that  
\begin{eqnarray}
\textbf{a}^l(\textbf{x}^* \oplus \textbf{1}^{i_1} \oplus \dots \textbf{1}^{i_{2k}}) &=& \textbf{a}^l(\textbf{x}^* \oplus \textbf{1}^{i_1} \oplus \dots \textbf{1}^{i_{2k-2}})  \oplus \textbf{1}^{2k}, \nonumber \\
\textbf{a}^r(\textbf{x}^* \oplus \textbf{1}^{i_1} \oplus \dots \textbf{1}^{i_{2k}}) &=& \textbf{a}^r(\textbf{x}^* \oplus \textbf{1}^{i_1} \oplus \dots \textbf{1}^{i_{2k-2}}) \oplus \textbf{1}^{2k-1}, \nonumber \\
\end{eqnarray}  
so that now both the outputs $\textbf{a}^{l/r}(\textbf{x}^* \oplus \textbf{1}^{i_1} \oplus \dots \textbf{1}^{i_{2k}})$ have odd parity. Analogous results are obtained when $(\textbf{x}^*_{i_{2k-1}}, \textbf{x}^*_{i_{2k}}) = (1,1)$, i.e., applying first Eq.(\ref{eq:odd-step-1}) and then (\ref{eq:even-step-1}) we get that 
\begin{eqnarray}
\textbf{a}^l(\textbf{x}^* \oplus \textbf{1}^{i_1} \oplus \dots \textbf{1}^{i_{2k}}) &=& \textbf{a}^l(\textbf{x}^* \oplus \textbf{1}^{i_1} \oplus \dots \textbf{1}^{i_{2k-2}})  \oplus \textbf{1}^{2k-1}, \nonumber \\
\textbf{a}^r(\textbf{x}^* \oplus \textbf{1}^{i_1} \oplus \dots \textbf{1}^{i_{2k}}) &=& \textbf{a}^r(\textbf{x}^* \oplus \textbf{1}^{i_1} \oplus \dots \textbf{1}^{i_{2k-2}}) \oplus \textbf{1}^{2k}, \nonumber \\
\end{eqnarray}  
so that the resulting outputs have again odd parity. 

In the case (2), we have by the inductive hypothesis Eq.(\ref{eq:induct-Mermin-corr}) that both $\textbf{a}^l(\textbf{x}^* \oplus \textbf{1}^{i_1} \oplus \dots \textbf{1}^{i_{2k-2}})$ and $\textbf{a}^r(\textbf{x}^* \oplus \textbf{1}^{i_1} \oplus \dots \textbf{1}^{i_{2k-2}})$ have odd parity. Then when $(\textbf{x}^*_{i_{2k-1}}, \textbf{x}^*_{i_{2k}}) = (0,1)$, applying first Eq.(\ref{eq:odd-step-0}) and then (\ref{eq:even-step-1}) we get that 
\begin{eqnarray}
\textbf{a}^l(\textbf{x}^* \oplus \textbf{1}^{i_1} \oplus \dots \textbf{1}^{i_{2k}}) &=& \textbf{a}^l(\textbf{x}^* \oplus \textbf{1}^{i_1} \oplus \dots \textbf{1}^{i_{2k-2}}), \nonumber \\
\textbf{a}^r(\textbf{x}^* \oplus \textbf{1}^{i_1} \oplus \dots \textbf{1}^{i_{2k}}) &=& \textbf{a}^r(\textbf{x}^* \oplus \textbf{1}^{i_1} \oplus \dots \textbf{1}^{i_{2k-2}}) \oplus \textbf{1}^{2k-1} \oplus \textbf{1}^{2k}, \nonumber \\
\end{eqnarray}  
so that the resulting outputs have again odd parity. Similarly, when $(\textbf{x}^*_{i_{2k-1}}, \textbf{x}^*_{i_{2k}}) = (1,0)$, applying first Eq.(\ref{eq:odd-step-1}) and then (\ref{eq:even-step-0}) we get that 
\begin{eqnarray}
\textbf{a}^l(\textbf{x}^* \oplus \textbf{1}^{i_1} \oplus \dots \textbf{1}^{i_{2k}}) &=& \textbf{a}^l(\textbf{x}^* \oplus \textbf{1}^{i_1} \oplus \dots \textbf{1}^{i_{2k-2}}) \oplus \textbf{1}^{2k-1} \oplus \textbf{1}^{2k}, \nonumber \\
\textbf{a}^r(\textbf{x}^* \oplus \textbf{1}^{i_1} \oplus \dots \textbf{1}^{i_{2k}}) &=& \textbf{a}^r(\textbf{x}^* \oplus \textbf{1}^{i_1} \oplus \dots \textbf{1}^{i_{2k-2}}), \nonumber \\
\end{eqnarray}  
so that the resulting outputs have again odd parity. Having exhaustively checked all the cases, we conclude by the inductive hypothesis that the Mermin constraints (\ref{eq:constraint-1}) and (\ref{eq:constraint-2}) are satisfied at all the even steps of the Algorithm. It remains to show that the box $\mathcal{P}$ satisfies the relativistic causality constraints (\ref{eq:rel-caus}). 

The relativistic causality constraints (\ref{eq:rel-caus}) effectively say that the output probability distribution for any contiguous set of parties is independent of the complementary set of parties' inputs. 
%Specifically, for any two input strings $\textbf{x}'$ and $\textbf{x}''$ that share a common contiguous bit sequence $\textbf{x}'_{S^{n}_{m,k}} = \textbf{x}''_{S^{n}_{m,k}}$ (recall that $S_{m,k}^{n} = \{m, m+1, \dots, m+k-1\}$ for some $1 \leq m \leq n-k+1$), the probability distribution for the output strings $\textbf{a}', \textbf{a}''$ obey
\begin{eqnarray}
\label{eq:rel-caus-pf}
\sum_{\textbf{a'}_{(S_{m,k}^{n})^{c}}} P_{\textbf{A} | \textbf{X}}(\textbf{a'} | \textbf{x'}) = \sum_{\textbf{a''}_{(S_{m,k}^{n})^{c}}} P_{\textbf{A} | \textbf{X}}(\textbf{a''} | \textbf{x''})
\end{eqnarray}
%Assume as the inductive hypothesis that the relativistic causality constraints are satisfied up to step $n$ of the algorithm, $n$ is not restricted to be even or odd. Let us show that the relativistic causality constraints are also satisfied at step $n+1$. The inductive hypothesis guarantees that 
%\begin{eqnarray}
%\end{eqnarray} 
The structure of the algorithm is such that from step $m$ to step $m+1$ ($m$ here is arbitrary), in going from input $\textbf{x}^* \oplus \textbf{1}^{i_1} \oplus \dots \oplus \textbf{1}^{i_{m}}$ to input $\textbf{x}^* \oplus \textbf{1}^{i_1} \oplus \dots \oplus \textbf{1}^{i_{m+1}}$, a bit flip is applied to either $\textbf{a}^l$ or $\textbf{a}^r$ \textit{at position $m +1$} and depends only on $\textbf{x}^*_{i_{m+1}}$. This implies that the probability distribution of the outputs to the left and right of position $m+1$ are unaffected, i.e., the probability distribution for the parties in the set $S_{1,m}^{n}$ and those in the set $S^{n}_{m+2,n}$ remain fixed under this operation (recall that $S_{m,k}^{n} = \{m, m+1, \dots, m+k-1\}$). This ensures that the relativistic causality constraints are satisfied, since for any two input strings $\textbf{x}'$ and $\textbf{x}''$ that share a common contiguous bit sequence $\textbf{x}'_{S^{n}_{m,k}} = \textbf{x}''_{S^{n}_{m,k}}$,  
%(recall that $S_{m,k}^{n} = \{m, m+1, \dots, m+k-1\}$ for some $1 \leq m \leq n-k+1$), we have that 
%the probability distribution for the output strings $\textbf{a}', \textbf{a}''$ 
both $\sum_{\textbf{a'}_{(S_{m,k}^{n})^{c}}} P_{\textbf{A} | \textbf{X}}(\textbf{a'} | \textbf{x'})$ and $\sum_{\textbf{a''}_{(S_{m,k}^{n})^{c}}} P_{\textbf{A} | \textbf{X}}(\textbf{a''} | \textbf{x''})$ only depend explicitly on the sequence of bits $\textbf{x}^*_{i}$ for $i \in S_{m,k}$.     
We have thus shown the box $\mathcal{P}$ that satisfies the Mermin constraints and the relativistic causality constraints in addition to returning a deterministic output $\textbf{a}^*$ for input $\textbf{x}^*$.   
\end{proof}

\textbf{Proposition 18.} \textit{Consider a three-party Bell scenario, with Alice, Bob and Charlie each performing two measurements $x, y, z \in \{0,1\}$ of two outcomes $a, b, c \in \{0,1\}$ respectively. Suppose that in some inertial reference frame, the three space-like separated parties are arranged in $1$-D, with $r_A < r_B < r_C$ and perform their measurements simultaneously, i.e., $t_A = t_B = t_C$. Then, there exists a three-party relativistically causal box $\mathcal{P}(a,b,c|x,y,z)$ such that
\begin{equation}
\label{eq:rel-caus-mono}
\langle CHSH \rangle_{AB} + \langle CHSH \rangle_{BC} = 8.
\end{equation}  } 
%\end{prop}
\begin{proof}
The relativistic causality constraints in this three-party scenario for the specific measurement configuration are given explicitly from Proposition \ref{prop:rel-cau-constraints} as
\begin{eqnarray}
\label{eq:rel-caus-3-party}
\sum_{a=0,1} P_{A,B,C|X,Y,Z}(a,b,c|x,y, z) &=& \sum_{a=0,1} P_{A,B,C|X,Y,Z}(a,b,c|x',y,z) \; \; \forall x,x',y,z,b,c \nonumber \\
\sum_{c=0,1} P_{A,B,C|X,Y,Z}(a,b,c|x,y,z) &=& \sum_{c=0,1} P_{A,B,C|X,Y,Z}(a,b,c|x,y,z') \; \; \forall z,z',x,y,a,b \nonumber \\
\sum_{b,c=0,1} P_{A,B,C|X,Y,Z}(a,b,c|x,y,z) &=& \sum_{b,c=0,1} P_{A,B,C|X,Y,Z}(a,b,c|x,y',z') \; \; \forall y,y',z,z',x,a \nonumber \\
\sum_{a,b=0,1} P_{A,B,C|X,Y,Z}(a,b,c|x,y,z) &=& \sum_{a,b=0,1} P_{A,B,C|X,Y,Z}(a,b,c|x',y',z) \; \; \forall x,x',y,y',z,c. \nonumber \\
\end{eqnarray}
%%Notice that the fact that the marginal $P_{B|Y}(b|y)$ is well-defined for all $b,y$ is guaranteed as a consequence of the above conditions by the relations
%%\begin{eqnarray}
%%\sum_{a,c=0,1} P_{A,B,C|X,Y,Z}(a,b,c|x,y,z) &=& \sum_{a,c=0,1} P_{A,B,C|X,Y,Z}(a,b,c|x',y,z) \nonumber \\
%%&=& \sum_{a,c=0,1} P_{A,B,C|X,Y,Z}(a,b,c|x',y,z') \; \; \; \; \forall x,x',z,z',y,b, \nonumber \\
%%\end{eqnarray}
%%where we used the first and second equalities from Eq.(\ref{eq:rel-caus-3-party}) successively. We now present explicitly a box $\{\mathcal{P}_{A,B,C|X,Y,Z}(a,b,c|x,y,z)\}$ that obeys the causality constraints Eq.(\ref{eq:rel-caus-3-party}) and yet leads to a maximum violation of the monogamy relation Eq.(\ref{eq:ns-mono}) as given in (\ref{eq:rel-caus-mono}). The conditional probability distributions defining the box are given as
\begin{eqnarray}
\mathcal{P}_{A,B,C|X,Y,Z}(0,0,1|0,1,1) &=& \mathcal{P}_{A,B,C|X,Y,Z}(1,1,0|0,1,1) = \frac{1}{2} \nonumber \\
\mathcal{P}_{A,B,C|X,Y,Z}(0,1,1|1,1,0) &=& \mathcal{P}_{A,B,C|X,Y,Z}(1,0,0|1,1,0) = \frac{1}{2} \nonumber \\
\mathcal{P}_{A,B,C|X,Y,Z}(0,1,0|1,1,1) &=& \mathcal{P}_{A,B,C|X,Y,Z}(1,0,1|1,1,1) = \frac{1}{2} \nonumber \\
\mathcal{P}_{A,B,C|X,Y,Z}(0,0,0|x,y,z) &=& \mathcal{P}_{A,B,C|X,Y,Z}(1,1,1|x,y,z) = \frac{1}{2} \; \; \;\;\text{else}
\end{eqnarray}

It can be readily verified that the box $\mathcal{P}$ as defined above satisfies all the relativistic causality constraints in Eq.(\ref{eq:rel-caus-3-party}). Furthermore, we also see that when Alice-Bob or Bob-Charlie measure input $(11)$, they get anti-correlated outputs, while for other inputs, they get correlated outcomes. This gives that the values of the CHSH Bell expressions $\langle CHSH \rangle_{AB}$ and $\langle CHSH \rangle_{AC}$ are both maximal (equal to the algebraic value of $4$). In other words, the phenomenon of monogamy for the CHSH violating correlations does not hold under the relativistic causality conditions in certain measurement configurations. 
\end{proof}

%the space-time coordinates of the three space-like separated parties $(r_A, t_A), (r_B, t_B), (r_C, t_C)$ in some inertial reference frame are such that 
%\end{prop}


\begin{thebibliography}{99}

\bibitem{EPR}
A. Einstein, B. Podolsky and N. Rosen,
Physical Review \textbf{47} (10): 777 (1935). 

\bibitem{Bell}
J. S. Bell, 
Physics (Long Island City, N.Y.), \textbf{1}, 195 (1964).

\bibitem{Hensen}
B. Hensen et al., 
Nature \textbf{526}, 682686 (2015).

\bibitem{Giustina}
M. Giustina et al., 
Phys. Rev. Lett. \textbf{115}, 250401 (2015).

\bibitem{Shalm}
L. K. Shalm et al., 
Phys. Rev. Lett. \textbf{115}, 250402 (2015).


\bibitem{Pironio}
S. Pironio, A.~Acin, S.~Massar, A.~Boyer de~la~Giroday, D.N.~Matsukevich, P.~Maunz, S.~Olmschenk, D.~Hayes, L.~Luo, T.A.~Manning and C.~Monroe, 
\textit{Random numbers certified by Bell's theorem}, 
Nature \textbf{464}, 1021 (2010).

\bibitem{VV}
U. Vazirani and T. Vidick,
Phys. Rev. Lett. \textbf{113}, 140501 (2014).

\bibitem{our}
F.G.S.L. Brand\~{a}o, R. Ramanathan, A. Grudka, K. Horodecki, M. Horodecki, P. Horodecki, T.~Szarek and H. Wojew\'{o}dka,
Nat. Comm. \textbf{7}, 11345 (2016).

\bibitem{Hall}
M. J. W. Hall,
arXiv:1511.00729 (2015).

\bibitem{Valdenebro}
A. G. Valdenebro,
\textit{Assumptions Underlying Bell's Inequalities},
Eur.J.Phys. \textbf{23}, 1 (2002). 

\bibitem{Tumulka}
R. Tumulka,
\textit{The Assumptions of Bell's Proof},
In M. Bell and S. Gao (eds.), \textit{Quantum Nonlocality and Reality - 50 Years of Bell's Theorem}, Cambridge University Press (2016).

\bibitem{Norsen}
T. Norsen,
\textit{Local Causality and Completeness: Bell vs. Jarrett},
Found. of Phys., Vol. 39, Issue 3, 273 (2009).

%\bibitem{Eberhard}
\bibitem{Bell3}
 J. S. Bell, 
\textit{La nouvelle cuisine}, in 
\textit{Speakable and Unspeakable in Quantum Mechanics}, 2nd ed., Cambridge University Press (2004).

\bibitem{ZB14}
M. Zukowski and C. Brukner, 
\textit{Quantum non-locality - It ain't necessarily so...},
J. Phys. A: Math. Theor. \textbf{47}, 424009 (2014).

\bibitem{Shim83}
A. Shimony, in 
\textit{Foundations of Quantum Mechanics in Light of the New Technology}, S.
Kamefuchi et al., eds. (Tokyo, Japan Physical Society) (1983).

%S. Popescu and D. Rohrlich, Found. Phys. 24, 379 (1994).

\bibitem{CC}
H. Buhrman, R. Cleve, S. Massar and R. de Wolf, 
\textit{Non-locality and Communication Complexity},
%Harry Buhrman (CWI and U of Amsterdam), Richard Cleve (Waterloo), Serge Massar (Brussels), Ronald de Wolf (CWI)
arXiv:0907.3584 (2009).

\bibitem{WKD00}
L. J. Wang, A. Kuzmich and A. Dogariu,
\textit{Gain-assisted superluminal light propagation},
Nature \textbf{406}, 277 (2000). 

\bibitem{PR}
S. Popescu and D. Rohrlich, 
Found. Phys. \textbf{24}, 379 (1994).

\bibitem{GPR}
J. Grunhaus, S. Popescu, and D. Rohrlich,
\textit{Jamming nonlocal quantum correlations}.,
Phys. Rev. A \textbf{53}, 3781 (1996).

\bibitem{GR}
J. B. Hartle,
\textit{Gravity: An Introduction to Einstein's General Relativity},
Addison-Wesley (2003).

\bibitem{Reich}
H. Reichenbach, 
\textit{The Direction of Time}, Berkeley, University of Los Angeles Press (1956).

\bibitem{Fritz}
T. Fritz,
\textit{Beyond Bell's Theorem II: Scenarios with arbitrary causal structure},
Comm. Math. Phys. \textbf{341}(2), 391-434 (2016).

\bibitem{Coecke}
%Time-asymmetry of probabilities versus relativistic causal structure: an arrow of time
B. Coecke and R. Lal,
\textit{Time-asymmetry of probabilities versus relativistic causal structure: an arrow of time},
Phys. Rev. Lett. \textbf{108}, 200403 (2012).

\bibitem{KT85}
L. Khalfin and B. Tsirelson, in \textit{Symposium on the Foundations of Modern Physics ’85}, P.
Lahti et al., eds. (World-Scientific, Singapore, 1985), p. 441; 

\bibitem{Rastall}
P. Rastall, 
Found. Phys. \textbf{15}, 963 (1985).

\bibitem{SW87}
S.J. Summers and R.F. Werner, 
J. Math. Phys. \textbf{28}, 2440 (1987).

\bibitem{KS94}
G. Krenn and K. Svozil, arXiv:quant-ph/9503010 (1994).

\bibitem{Principles2}
M. Pawlowski, T. Paterek, D. Kaszlikowski, V. Scarani, A. Winter and M. Zukowski,
Nature \textbf{461}, 1101 (2009). 

\bibitem{Principles3}
M. Navascu\'{e}s and H. Wunderlich, 
Proc. Royal Soc. A \textbf{466}: 881 (2009). 

\bibitem{PR97}
%Causality and Nonlocality as Axioms for Quantum Mechanics
S. Popescu and D. Rohrlich,
\textit{Causality and Nonlocality as Axioms for Quantum Mechanics},
In the \textit{Proceedings of the symposium on Causality and Locality in Modern Physics and Astronomy: Open Questions and Possible Solutions} (York University, Toronto, 1997).

\bibitem{CHSH}
J.F. Clauser, M.A. Horne, A. Shimony and R.A. Holt, 
\textit{Proposed experiment to test local hidden-variable theories}, 
Phys. Rev. Lett. \textbf{23} (15): 880 (1969).

\bibitem{Toner}
B. Toner, 
Proc. R. Soc. A \textbf{465}, 59 (2009).

\bibitem{BDS62}
O. M. P. Bilaniuk, V. K. Deshpande and E. C. G. Sudarshan,
\textit{"Meta" Relativity},
Amer. J. Phys. \textbf{30}, 718 (1962).
%Bilaniuk, Deshpande, and Sudarshan, Am. J. Phys. 30, 718 (1962)

\bibitem{Sve87}
G. Svetlichny, 
Phys. Rev. D \textbf{35}, 3066 (1987).

\bibitem{Haag}
R. Haag, 
\textit{Local Quantum Physics}, Springer Verlag, New York (1992).
%An algebraic approach to quantum field theory by Rudolf Haag, Daniel Kastler (Illinois U., Urbana), J.Math.Phys.5:848-861,1964

\bibitem{BBGP13}
J-D. Bancal, J. Barrett, N. Gisin and S. Pironio
\textit{The definition of multipartite nonlocality},
Phys. Rev. A \textbf{88}, 014102 (2013).

\bibitem{GWAN12}
R. Gallego, L. E. W\"{u}rflinger, Antonio Ac\'{i}n, Miguel Navascu\'{e}s,
\textit{An operational framework for nonlocality},
Phys. Rev. Lett. \textbf{109}, 070401 (2012).


%\bibitem{B
%Quantum nonlocality based on finite-speed causal influences leads to superluminal signaling
%Jean-Daniel Bancal, Stefano Pironio, Antonio Acin, Yeong-Cherng Liang, Valerio Scarani, Nicolas Gisin
%Comments: 5+8 pages, 4 figures, version similar to the published one
%Journal-ref: Nature Physics 8, 867 (2012)

\bibitem{Terletskii68}
Ya. P. Terletskii, 
\textit{Paradoxes in the Theory of Relativity},
Plenum Press, New York (1968).

\bibitem{BFKL71}
A. Bers, R. Fox, C. G. Kuper and S. G. Lipson, 
\textit{The impossibility of free tachyons}, in 
\textit{Relativity and Gravitation}, eds. C. G. Kuper and Asher Peres, New York, Gordon and Breach Science Publishers, 41 (1971). 

\bibitem{Wei89} 
S. Weinberg, Ann.Phys. (N.Y.) \textbf{194}, 336 (1989).

\bibitem{Gis90}
N. Gisin, 
Phys.Lett. A \textbf{143}, 1 (1990).

\bibitem{Cza91}
M. Czachor, Found.Phys.Lett. \textbf{4}, 351 (1991).

\bibitem{Pol91}
J. Polchinski,
\textit{Weinberg's Nonlinear Quantum Mechanics and the Einstein-Podolsky-Rosen Paradox},
Phys. Rev. Lett. \textbf{66} 4, 397 (1991). 

\bibitem{H-H09}
C. Hewitt-Horsman,
Found Phys, \textbf{39}: 869 (2009).

\bibitem{R15}
%A reasonable thing that just might work
D. Rohrlich,
arXiv:1507.01588, in
\textit{Quantum Nonlocality and Reality: 50 Years of Bell's theorem}, eds. S. Gao and M. Bell (Cambridge U. Press), (2015). 

\bibitem{R14}
%PR-box correlations have no classical limit
D. Rohrlich,
arXiv:1407.8530, in \textit{Quantum Theory: A Two-Time Success Story} (Yakir Aharonov Festschrift), eds. D. C. Struppa and J. M. Tollaksen (New York: Springer), pp. 205 (2013). 


\bibitem{self-cons}
J. Friedman, M. Morris, I. Novikov, F. Echeverria, G. Klinkhammer, K. Thorne, U. Yurtsever,
\textit{Cauchy problem in spacetimes with closed timelike curves}, 
Phys. Rev. D. \textbf{42} (6): 1915 (1990).

\bibitem{SWH}
S. W. Hawking,
\textit{Chronology protection conjecture}, 
Phys. Rev. D \textbf{46}, 603 (1992). 

\bibitem{GRT86}
G. C. Ghirardi, A. Rimini, T. Weber, Phys. Rev. D 34 
470 (1986).

\bibitem{BPAL+12}
J.-D. Bancal, S. Pironio, A. Ac\'{i}n, Y.-C. Liang, V. Scarani and N. Gisin,
\textit{Quantum nonlocality based on finite-speed causal influences leads to superluminal signaling}, 
Nature Physics \textbf{8}, 867 (2012). 

\bibitem{BBLG13}
T. J. Barnea, J.-D. Bancal, Y.-C. Liang and N. Gisin,
\textit{Tripartite quantum state violating the hidden influence constraints},
Phys. Rev. A \textbf{88}, 022123 (2013).


\bibitem{AL}
Y. Aharonov, L. Vaidman,
\textit{Sending Signals to Space-Like Separated Regions},
arXiv:quant-ph/0102083 (2001). 

%\bibitem{GPR95} 
%J. Grunhaus, S. Popescu, D. Rohrlich.
%arXiv:quant-ph/9508001 (1995). 
%Jamming non-local quantum correlations

\bibitem{BHK}
%No Signalling and Quantum Key Distribution
J. Barrett, L. Hardy and A. Kent,
\textit{No Signalling and Quantum Key Distribution},
Phys. Rev. Lett. \textbf{95}, 010503 (2005).

\bibitem{BCK}
J. Barrett, R. Colbeck and A. Kent,
\textit{Unconditionally secure device-independent quantum key distribution with only two devices},
Phys. Rev. A \textbf{86}, 062326 (2012).


\bibitem{BKP}
J. Barrett, A. Kent and S. Pironio,
\textit{Maximally Non-Local and Monogamous Quantum Correlations},
Phys. Rev. Lett. \textbf{97}, 170409 (2006).

\bibitem{Renner-Colbeck}
R. Colbeck and R. Renner,
\textit{A short note on the concept of free choice},
%Roger Colbeck, Renato Renner
arXiv:1302.4446 (2013).

\bibitem{Bell2}
J. S. Bell, 
\textit{Free variables and local causality}. In \textit{Speakable and unspeakable in quantum mechanics}, chap. 12
(Cambridge University Press, 1987).

\bibitem{CR}
R. Colbeck and R. Renner,
\textit{Free randomness can be amplified},
%Roger Colbeck, Renato Renner
%Comments: 5+7 pages, 2 figures. Updated to match published version
%Journal-ref: 
Nature Physics \textbf{8}, 450-454 (2012).
%%\bibitem{VV}
%%%Fully device independent quantum key distribution
%%U. Vazirani and T. Vidick
%%Phys. Rev. Lett. \textbf{113}, 140501 (2014).
%%
%%\bibitem{our}
%%%Robust Device-Independent Randomness Amplification with Few Devices
%%F.G.S.L. Brandao, R. Ramanathan, A. Grudka, K. Horodecki, M. Horodecki, P. Horodecki, T. Szarek and H. Wojewodka
%%Nature Communications \textbf{7}, 11345 (2016).
%%
\bibitem{Acin}
R. Gallego, Ll. Masanes, G. de la Torre, C. Dhara, L. Aolita and A. Ac\'{i}n,
\textit{Full randomness from arbitrarily deterministic events},
Nature Communications \textbf{4}, 2654 (2013).

\bibitem{Mas06}
L. Masanes, A. Acin and N. Gisin,
Phys. Rev. A. \textbf{73}, 012112 (2006). 

\bibitem{DTA14}
%Can observed randomness be certified to be fully intrinsic?
C. Dhara, G. de la Torre, A. Ac\'{i}n,
\textit{Can observed randomness be certified to be fully intrinsic?},
Phys. Rev. Lett. \textbf{112}, 100402 (2014).

\bibitem{GHZ}
D. M. Greenberger, M. A. Horne, and A. Zeilinger, in 
\textit{Bell's Theorem, Quantum Theory, and Conceptions of the Universe}
(Kluwer, Dordrecht), 69 (1989).

\bibitem{Pearle}
P. M. Pearle, 
\textit{Hidden-variable example based upon data rejection},
Physical Review D \textbf{2}, 1418 (1970).

\bibitem{BC}
S. L. Braunstein and C. M. Caves, Annals of Physics {\bf 202}, 22 (1990).

%\bibitem{Pironio}
%S. Pironio et al.,
%Nature {\bf 464}, 1021 (2010).

\bibitem{OCB13}
O. Oreshkov, F. Costa and C. Brukner,
\textit{Quantum correlations with no causal order},
%Ognyan Oreshkov, Fabio Costa, Caslav Brukner
Nature Communications \textbf{3}, 1092 (2012).

\bibitem{our2}
%Randomness amplification against no-signaling adversaries using two devices
R. Ramanathan, F.G.S.L. Brand\~{a}o, K. Horodecki, M. Horodecki, P. Horodecki and H. Wojew\'{o}dka,
\textit{Randomness amplification against no-signaling adversaries using two devices},
arXiv:1504.06313 (2015).

%S.J. Summers and R.F. Werner. Bell’s inequalities and quantum field theory, I. General setting.
%Journal of Mathematical Physics, 28, 2440-2447, 1987.

\bibitem{ER89}
P. H. Eberhard and R. R. Ross,
\textit{Quantum Field Theory cannot provide Faster-Than-Light Communication},
Foundations of Physics Letters, Vol. 2, No. 2 (1989). 
%QUANTUM FIELD THEORY CANNOT PROVIDE
%FASTER-THAN-LIGHT COMMUNICATION
%Phillippe H. Eberhard and Ronald R. Ross 

\bibitem{PH99}
K. A. Peacock, B. S. Hepburn,
\textit{Begging the Signalling Question: Quantum Signalling and the Dynamics of Multiparticle Systems},
arXiv:quant-ph/9906036 (1999). 

\bibitem{Hol93}
Peter R. Holland, 
\textit{The Quantum Theory of Motion: An Account of the De Broglie-Bohm Causal Interpretation of Quantum Mechanics}, Cambridge University Press, Cambridge (1993). 

\bibitem{BH93}
D. Bohm and B. J. Hiley, 
\textit{The Undivided Universe: An Ontological Interpretation of Quantum Theory},
Routledge (1993). 

\bibitem{CR2}
R. Colbeck and R. Renner,
\textit{No extension of quantum theory can have improved predictive power},
%Roger Colbeck, Renato Renner
Nature Communications \textbf{2}, 411 (2011). 

\bibitem{Pearl09}
J. Pearl, \textit{Causality}, Cambridge University Press (2009).

\bibitem{FKL69}
R. Fox, C. G. Kuper and S. G. Lipson,
\textit{Do Faster-than-Light Group Velocities imply Violation of Causality?},
Nature \textbf{223}, 597 (1969). 

\bibitem{Brillouin}
L. Brillouin, 
\textit{Wave Propagation and Group Velocity},
Academic Press, New York, (1960).

\bibitem{HC12}
J. M. Hill and B. J. Cox,
\textit{Einstein's special relativity beyond the speed of light},
% "Einstein's special relativity beyond the speed of light." Proc. R. Soc. A
Proc. R. Soc. A, \textbf{468}, 2148 (2012). 

\bibitem{SS86}
R. I. Sutherland and J. R. Shepanski,
\textit{Superluminal reference frames and generalized Lorentz transformations},
Phys. Rev. D \textbf{33}, 2896 (1986).



\bibitem{Feinberg67}
G. Feinberg,
\textit{On the possibility of faster-than-light particles},
Phys. Rev. \textbf{159}, 1089 (1967). 

\bibitem{Eckstein1}
M. Eckstein and T. Miller,
\textit{Causality for nonlocal phenomena},
 arXiv:1510.06386 (2015). 

%\bibitem{Eckstein2}
%N. Franco and M. Eckstein
%\textit{An algebraic formulation of causality for noncommutative geometry},
%Class. Quant. Grav. \textbf{30} 135007 (2013). 

\bibitem{Miller}
T. Miller,
\textit{Polish spaces of causal curves}, 
arXiv:1609.09488 (2016).


\bibitem{Eckstein-Miller-models}
M. Eckstein and T. Miller,
\textit{Causal evolution of wave packets},
arXiv:1610.00764 (2016). 


\bibitem{our3}
P. Horodecki, R. Ramanathan, \textit{in preparation}. \\

%\bibitem{our3}
%M. Nowakowski, R. Ramanathan and P. Horodecki, \textit{in preparation}.
%\\

\end{thebibliography}
\end{document}